\documentclass[sigconf,nonacm]{acmart}

\usepackage{csquotes}
\usepackage{tabularx}
\usepackage{cleveref}

\makeatletter
\if@ACM@nonacm
  \usepackage{orcidlink}
  \renewcommand\orcid[1]{\unskip\ignorespaces
    \expandafter\gdef\csname typeset@author\the\num@authors\endcsname##1{%
      ##1\,\orcidlink{#1}}}
\fi
\makeatother

\ccsdesc[500]{Theory of computation~Quantum query complexity}
\ccsdesc[300]{Theory of computation~Cryptographic primitives}
\ccsdesc[300]{Security and privacy~Cryptanalysis and other attacks}
\ccsdesc[300]{Mathematics of computing~Mathematical analysis}

\keywords{Montanaro's algorithm, Grover search with
advice, quantum key guessing, guessing entropy, Arıkan's inequality, R\'enyi entropy,
super-quadratic quantum speedup,
side-channel analysis}

\newcommand{\E}{\mathbb{E}}
\newcommand{\PP}{\mathbb{P}}
\newcommand{\Rb}{\mathbb{R}}
\newcommand{\Zb}{\mathbb{Z}}
\DeclareMathOperator{\Var}{Var}
\newcommand{\Fs}{F_S}
\newcommand{\Gst}{G^{*}}
\newcommand{\Lam}{\Lambda_U}
\newcommand{\half}{\tfrac12}
\newcommand{\Hbar}{\bar H}
\newcommand{\eps}{\varepsilon}
\newcommand{\umach}{u_{\mathrm m}}
\newcommand{\tF}{\widetilde F}
\newcommand{\tS}{\widetilde S}
\newcommand{\tG}{\widetilde G}
\newcommand{\tpi}{\widetilde\pi}
\newcommand{\tnu}{\widetilde\nu}
\newcommand{\tLam}{\widetilde{\Lambda}_U}
\newcommand{\Varone}{\mathrm{Var}_1}
\newcommand{\Gbar}{\bar\Phi}
\DeclareMathOperator{\LSE}{LSE}
\DeclareMathOperator{\poly}{poly}
\newcommand{\Ac}{\mathcal{A}}
\newcommand{\tAc}{\widetilde{\mathcal{A}}}
\newcommand{\Lameta}{\Lambda_\eta}

\AtBeginDocument{%
  \theoremstyle{acmdefinition}%
  \newtheorem{remark}[theorem]{Remark}}

\newcounter{algo}
\newenvironment{algobox}[1]%
  {\par\smallskip\noindent\rule{\linewidth}{0.5pt}\par\nobreak
   \refstepcounter{algo}%
   \noindent\textbf{Algorithm~\thealgo\ --- #1}\par\nobreak\vspace{1pt}\noindent}%
   {\par\vspace{-5pt}\nobreak\noindent\rule{\linewidth}{0.5pt}\par\smallskip}
\Crefname{algo}{Algorithm}{Algorithms}

\begin{document}

\title{Pinpointing Super-Quadratic Quantum Enumeration Speedups}
\subtitle{Exact and Certified Evaluation of the Guessing-Moment Exponent under Product-Distribution Advice}

\author{Carsten Schubert}
\orcid{0009-0004-2795-3650}
\affiliation{%
  \institution{Technische Universität Berlin}
  \department{Electrical Engineering and Computer Science}
  \city{Berlin}
  \country{Germany}
}
\email{carsten.gm.schubert@tu-berlin.de}

\author{Niklas Paskarbeit}
\orcid{0009-0006-6156-9075}
\affiliation{%
  \institution{Freie Universität Berlin}
  \department{Department of Mathematics and Computer Science}
  \city{Berlin}
  \country{Germany}
}

\author{Maximilian J. Kramer}
\orcid{0009-0006-3807-2095}
\affiliation{%
  \institution{Freie Universität Berlin}
  \department{Dahlem Center for Complex Quantum Systems}
  \city{Berlin}
  \country{Germany}
}

\author{Jean-Pierre Seifert}
\orcid{0000-0002-5372-4825}
\affiliation{%
  \institution{Technische Universität Berlin}
  \department{Electrical Engineering and Computer Science}
  \city{Berlin}
  \country{Germany}
}

\author{Marian Margraf}
\orcid{0009-0005-8577-1318}
\affiliation{%
  \institution{Freie Universität Berlin}
  \department{Department of Mathematics and Computer Science}
  \city{Berlin}
  \country{Germany}
}

\renewcommand{\shortauthors}{Schubert et al.}

\begin{abstract}
	In cryptanalysis and beyond, presumably the most basic problem is search within an unstructured space: Given a finite set of elements $K$, find a marked element using an efficiently evaluable predicate.
    Grover's algorithm \cite{groverFastQuantumMechanical1996} gives an optimal quadratic query advantage for
    black-box search.
    Cryptanalytic settings, however, often come with additional probabilistic advice over the candidate space.
    These distributions frequently have product form, e.g.\@ when probabilities for independent key coordinates are derived from side-channel leakage.
	Classically, simply querying the predicate function in the given likelihood order is optimal to find the correct result as fast as possible in expectation. In the quantum setting, \citet{montanaroQuantumSearchAdvice2011} showed how to achieve an optimal expected query complexity, beating plain Grover on every non-uniform distribution of the advice (disregarding a constant overhead factor).
    What has been missing so far is a finite-size method for evaluating the quantum--classical guessing-moment separation induced by a given advice distribution. We provide such a method for product-distribution advice, thereby sharpening the previous entropy-based estimate of \citet{bashiri_super-quadratic_2026}.
    Our approach reduces the relevant classical and quantum guessing moments to functionals of the one-dimensional \emph{surprisal} distribution. For product advice, this distribution is obtained by convolving the per-coordinate surprisal laws. When the surprisals lie on a common arithmetic grid---the \emph{commensurate case}---the resulting logarithmic moments, and hence the speedup exponent, can be evaluated as finite sums without discretization error. Exponential tilting makes this computation numerically stable. For general non-commensurate product advice, we discretize the surprisals onto a common grid before running our algorithm and derive an a-posteriori bound on the resulting binning error.
    We apply the framework to cold-boot leakage on seeds and block-cipher keys, to template-attack posteriors, and to synthetic i.i.d.\@ Bernoulli posteriors whose parameter is chosen to reproduce residual ranks reported for Keccak side-channel attacks on ML-KEM and ML-DSA. The resulting guessing-moment exponents substantially exceed $2$ in several skewed-advice settings, reaching up to $3.97$ in these synthetic models, and include cases where the previous entropy-based bound did not establish an exponent above $2$.
\end{abstract}

\maketitle

\section{Introduction}\label{sec:intro}
The most basic task in cryptanalysis is \emph{key guessing}: recover a secret
$x$ given only the ability to test candidates.\footnote{In general, these enumeration and search methods are of course not limited to cryptanalytic settings. But in order to make the formulation more consistent with such typical applications, we stick to their terminology from now on.}
When nothing at all is known about $x$,
the best a quantum attacker can do is
Grover search~\cite{groverFastQuantumMechanical1996}: quadratically fewer guesses than any classical enumeration, and provably no better in this black-box setting.
In practice there often is additional information that can be algorithmically exploited by an attacker.
Passwords are a familiar example: a capital letter is most likely at the first position,
digits and punctuation cluster at the end, and in-between the letters of
ordinary language appear at different rates (letter \texttt{e} is more
frequent than letter \texttt{y}). Assuming an attacker knows only these positional
frequencies, and not how any character influences the next, then the
naturally-consistent model is a \emph{product distribution}: one
marginal $p_i$ per position $i \in \{1, \ldots, m\}$, multiplied together.
The~resulting distribution $P$
over the key space is the attacker's \emph{advice}.
Similar product shapes
arise well beyond this deliberately simplified scenario, further cryptologic examples include independent
per-coordinate side-channel leakage \cite{chari_template_2003,martinQuantumKeySearch2018} or respective secret sampling schemes \cite{regev_lattices_2009,bos_crystals_2018}.
Equipped with such advice, the optimal classical attacker
iteratively computes an ordering from most to least likely keys
and guesses along that list.
We define the \emph{rank} $G(x)$ of key $x$ as its position in that list, so reaching $x$ costs $G(x)$ guesses.
Thus, the expected cost of the classical attack is the average rank $\E_P[G]$, the first \emph{guessing moment},
and it can be far smaller than the full key space.
We study the resulting quantum--classical separation in expected query complexity. Specifically, we compute the guessing-moment exponent obtained from the optimal classical moment $\E_P[G]$ and the quantum-relevant moment $\E_P[\sqrt{G}]$, including finite-size and tie effects.

The quantum counterpart of \emph{search with advice} is a more elaborate version of Grover search, due to
\citet{montanaroQuantumSearchAdvice2011}. Instead of searching the whole key space at once,
the algorithm walks down the ordered candidate list in segments of geometrically increasing
size and runs a Grover search on each one, so that a likely key is
found in one of the small early segments and never costs a full search.
Reaching a key of rank $G(x)$ this way takes $\Theta(\sqrt{G(x)})$ queries, so the expected query complexity is $\Theta\left(\E_P[\sqrt{G}]\right)$.
Montanaro proved this dependence optimal up to constant factors.
Writing $f(\rho)\coloneq  \ln\E_P\!\left[G^\rho\right]$ for the logarithmic guessing moments, we define the \emph{speedup exponent} $s\coloneq f(1)/f(1/2)$. Equivalently, $\E_P[\sqrt{G}] = \E_P[G]^{1/s}$, so the moment governing quantum search is the $1/s$-th power of the
classical expected cost. Montanaro's bounds transfer this relation to the optimal quantum query complexity up to universal constant factors and an additive overhead; see \Cref{sec:setup}.
Jensen's inequality gives
$f(\half)\le \half f(1)$ and hence $s\ge2$. The Grover-like baseline approaches $s=2$
as the problem size grows. Thus $s$ is a finite-size guessing-moment
exponent, while a limiting value bounded away from $2$ corresponds
to a genuinely \emph{super-quadratic separation}.
What has been missing so far is a way to \emph{compute}
the advantage exponent $s$ for a given advice distribution, i.e.\@ to tell, at any concrete scheme, how far
above $2$ that exponent exactly sits.

A first advance in that direction is due to
\citet{bashiri_super-quadratic_2026}, who applied \citeauthor{arikanInequalityGuessingIts1996}'s guessing
inequality~\cite{arikanInequalityGuessingIts1996} to bound both guessing moments in terms of R\'enyi
entropies of the advice. Their estimate is asymptotically tight, but it bounds
$s$ from below only, and does so in the limit of large $m$. At the
finite parameters cryptology cares about, that limit cannot be sharpened:
Arıkan's bounds pin each moment only to within an \emph{additive}
slack, and those two slacks, unrelated to one another, leave an interval on the
ratio $s$ that does not close.
What an attacker cares about, however, is the surplus $s-2$ over the
Grover-like regime, and the unresolved interval can hide a substantial
fraction of that. In some parameter settings, the bound from \cite{bashiri_super-quadratic_2026} stops cleanly certifying
$s>2$ at all (see \Cref{sec:apps}).

In this paper we not only bound $s$, we also present a method to compute it.
Our sole structural assumption is the product form met above, i.e.\@ $P=\prod_{i=1}^m p_i$.
For arbitrary product advice, the binned computation is accompanied by a two-sided a-posteriori certificate on its discretization error. The certificate tracks the binning error linearly in the bin width; for bounded alphabets, achieving target accuracy $\delta$ therefore costs $\widetilde O(m\,C(s)/\delta)$ (\Cref{thm:precision}, \Cref{thm:bin}).\footnote{Throughout, $\widetilde O$ suppresses polylogarithmic factors.
}
Our main approach is based on one fundamental observation, and a further reduction makes it generally applicable.
The observation is that guessing in order of decreasing probability is the same
as guessing in order of increasing \emph{surprisal} $S(x)=-\ln P(x)$.
Thus, a key's rank, and with it both guessing moments, is governed by the distribution
of this single quantity (\Cref{lem:rankcdf}). Because $P$ is a product, the
surprisal of a key is a \emph{sum} of independent per-coordinate contributions,
and the $m$-dimensional problem collapses to a one-dimensional one. When the
per-coordinate surprisals share a common arithmetic grid---we call this the
\enquote{commensurate} case---that sum is a finite computation and $s$ comes
out \emph{exactly} with no discretization error (\Cref{thm:exact}).

Real-world scenarios are usually not commensurate: letter frequencies, for instance, are
arbitrary real numbers and share no grid. The subsequent reduction handles every such case
in a single step: round the per-coordinate surprisals to a grid of spacing
$\eta$, forcing commensurability, then run the previously established computation.
Under the scaling-law hypotheses of \Cref{thm:bin} the error this introduces shrinks
in proportion to $\eta$, so halving the spacing also halves the error.
Beyond that scaling law, every run yields an a-posteriori bound on the binning error of the exact estimator, free of unspecified constants or asymptotic ingredients (\Cref{prop:cert}).
One mathematical apparatus (exponential tilting), developed once, is used for the exact and general cases
and the error analysis alike.

\paragraph{Contributions and roadmap.}
\Cref{sec:setup} fixes the guessing model, the exponent and the standing
assumptions. In \Cref{sec:rankcdf}, we reduce both guessing moments to
functionals of a single one-dimensional law, turning the speedup
exponent from \emph{something to be estimated} into \emph{something computable}.
In \Cref{sec:engine}, we give an exact method for commensurate advice, evaluating
$s$ as a finite sum with no discretization error (\Cref{thm:exact}), and
realize it through exponential tilting so that the exponentially small
quantities involved remain representable in ordinary double precision
(\Cref{lem:tiltconv,lem:fftcost,lem:stable}). In \Cref{sec:binning}, we lift
the commensurability restriction, binning arbitrary advice onto a common grid
and bounding both the error incurred and the cost of driving it down
(\Cref{thm:precision}), which brings every product distribution within reach.
In \Cref{sec:accuracy}, we prove the scaling law governing that error and
construct the certificate accompanying each run (\Cref{prop:cert}), so that a
reported exponent comes with a non-asymptotic a-posteriori bound on its
discretization error rather than an asymptotic promise. Finally, in
\Cref{sec:apps}, we instantiate the framework on cryptographic advice settings where
it is meaningful
and read off both the speedup's size and the earlier bound's gap. All
results are stated in the body, which is self-contained; the appendices carry proofs.

\paragraph{Related work.} Montanaro's routine \cite{montanaroQuantumSearchAdvice2011} assumes its advice is already
presented in likelihood order, which is not immediate at cryptographic key sizes.
\citet{martinQuantumKeySearch2018} gave an algorithm that essentially removed that assumption
by returning the $r$-th key of a weight band efficiently.
Later analyses, ours included, assume \citeauthor{martinQuantumKeySearch2018}'s method; their work also first explicitly proposed Montanaro's quantum search for side-channel attacks.
Subsequent variants of quantum search with advice generalize the setting or sharpen overhead constants:
\citet{he_quantum_2024} maximize success probability at a
fixed query budget, \citet{albrecht_quantum_2023} admit bounded-error oracles,
and \citet{anderson_improved_2024} carry the advice idea to span programs.
None of them moves $s$ as we define it, since Montanaro's two-sided bracket pins it to the
problem rather than to any particular routine (see \Cref{sec:setup}). Apart from \cite{bashiri_super-quadratic_2026},
\citet{budroni_further_2024} are thematically closest to us: they exactly compute classical and quantum query complexity under the stronger assumption that all product marginals are identical.
On the defensive side,
\citeauthor{fischlinTighterBitSecurityBounds2026}~\cite{fischlinBitSecurityQuantum2026,fischlinTighterBitSecurityBounds2026}
show that keys within statistical
distance $2^{-\lambda/2}$ of uniform retain at least $\lambda/2$ bits of security.
The moments themselves predate the
quantum question: \citet{massey_guessing_1994} showed that Shannon entropy yields no
upper bound on $\E_P[G]$, \citet{arikanInequalityGuessingIts1996} supplied the two-sided
R\'enyi bracket, and \citet{rioul_variations_2022} develops the full family
$\E_P[G^\rho]$ from which $f(1)$ and $f(\half)$ are drawn.

\section{Setup and the speedup exponent}\label{sec:setup}
\paragraph{The guessing problem.} Let $K=\prod_{i=1}^m A_i$ be a product key
space with $|A_i|=b_i$ and $N\coloneq |K|=\prod_i b_i$, and let the advice be a
product distribution $P(x)=\prod_i p_i(x_i)$; write $p_{i,a}\coloneq p_i(a)$. The
\emph{surprisal} of a key decomposes coordinatewise,
$S(x)=-\ln P(x)=\sum_{i=1}^m s_i(x_i)$ with $s_i(a)\coloneq-\ln p_{i,a}$. A \emph{guessing order} assigns a distinct position
$G(x)\in\{1,\dots,N\}$ to each key, and its cost enters through the moments
$\E_P[G^\rho]$, for whose logarithms we write $f(\rho)\coloneq \ln\E_P\!\big[G^\rho\big]$.
Enumerating by increasing surprisal minimizes these moments
simultaneously for every $\rho>0$, and every such order attains the same value
(\Cref{lem:rankcdf}). Keys of equal surprisal may therefore be enumerated in
any order, and no tie convention enters the moments. We write $G$ for any
optimal order and reserve $\Gst(x)\coloneq\#\{x'\in K:S(x')\le S(x)\}$ for the \emph{block-maximal} rank, the largest rank in $x$'s equal-surprisal block. Thus $G(x)\le\Gst(x)$; equality holds for the key placed last within its tie block, and for every key on that level iff the block is a singleton.
Logarithms are natural throughout --- surprisals, entropies, the
bin width $\eta$ and all saddle quantities are in nats --- while $s$, being a
ratio of logarithms, is base-free.

\paragraph{Cost model and the exponent.} By $W_C$ and $W_Q$ we denote the optimal
classical and quantum query complexities of guessing under advice $P$.
Classically, $W_C=\E_P[G]$ exactly. Montanaro's bounds~\cite{montanaroQuantumSearchAdvice2011} show that, up to universal constant factors, the quantum complexity is governed by the square-root guessing moment, $W_Q=\Theta(\E_P[\sqrt G])$; the
algorithm attaining the upper bound additionally incurs a lower-order additive
overhead.
Consequently, $\ln W_C=f(1)$ and $\ln W_Q=f(\half)+O(1)$. We therefore define the information-theoretic \emph{speedup exponent} $s\coloneq f(1)/f(\half)$.
The corresponding operational logarithmic exponent satisfies
$\frac{\ln W_C}{\ln W_Q}=s\left(1+O(1/f(\half))\right)$,
and hence approaches $s$ whenever $f(\half)\to\infty$; for families with $f(\half)=\Theta(m)$ the relative discrepancy is $O(1/m)$. All subsequent statements concern this guessing-moment exponent $s$.
Disregarding the operational overhead like that, we note that $s$ naturally already exceeds $2$ at every \emph{finite} $N$ even for uniform advice, but in contrast to non-uniform advice it \emph{asymptotically} approaches $2$ there as $N\to\infty$.

\paragraph{Information-theoretic quantities.}
Two distributions on $K$ are in play: a key's rank counts how many keys are at
least as likely, which therefore refers to the uniform distribution $U$ on $K$, while the guessing moments average
under the advice $P$. To address this, we use the term \emph{law} interchangeably with the distribution.
For $\alpha\in(0,1)$, the R\'enyi entropy $H_\alpha(P)=\frac1{1-\alpha}\ln\sum_xP(x)^\alpha$ is additive for product laws, it has the Shannon entropy $H_1(P)\coloneq-\sum_xP(x)\ln P(x)$ as its continuous extension. Write $\Hbar_\alpha\coloneq\frac1m\sum_iH_\alpha(p_i)$ and $H\coloneq\E_P[S]=H_1(P)=m\Hbar_1$.

Under $U$ set
$\mu_U=\E_U[S]$, $\sigma_U^2=\Var_U[S]$ and $\Gamma\coloneq\mu_U-H\ge0$,
where $\Gamma=D(P\|U)+D(U\|P)$ is the Jeffreys divergence; it is $\Theta(m)$ for fixed non-uniform per-coordinate families. Arıkan's guessing inequality \cite{arikanInequalityGuessingIts1996} gives
\begin{align}\label{eq:arikanscale}
	f(\rho)=\rho\sum_{i=1}^mH_{1/(1+\rho)}(p_i)
	+O\left(\log(1+\ln N)\right).
\end{align}
For uniformly bounded alphabets this yields
$f(\rho)=\rho m\Hbar_{1/(1+\rho)}(1+O(\log m/m))$
whenever the per-coordinate entropy stays bounded away from zero, and hence $s=2\Hbar_{1/2}/\Hbar_{2/3}\,(1+o(1))$.
This recovers the asymptotic entropy scale underlying \cite{bashiri_super-quadratic_2026}; see \Cref{app:cscale}.

\paragraph{Standing assumptions.} We assume throughout:

\begin{itemize}
	\item[\emph{(i)}] $P$ is a
product distribution, which is what makes the surprisal additive and the
whole analysis one-dimensional; advice with genuine dependence between
coordinates is out of scope, though richer models often fit by grouping
positions into coarser coordinates and taking their product.
\item[\emph{(ii)}]
$p_{i,a}>0$ for every $i,a$ is a convention rather than a restriction:
symbols of zero probability can simply be removed from $A_i$.
\item[\emph{(iii)}] Complexity is measured in the
input size $B_{\mathrm{in}}\coloneq \sum_ib_i$ and, where discretization enters, the
level count; no uniform bound on the alphabet sizes is assumed.
\item[\emph{(iv)}]
For the saddle-point analysis we assume non-degeneracy, $\Var_U(S)>0$, equivalently that $P$ is not uniform. The uniform case is excluded only from that analysis and can be evaluated directly.
\end{itemize}
\paragraph{Notation.} \Cref{tab:notation} collects the recurring symbols.
\begin{table}[t]\scriptsize
	\caption{Recurring notation.}\label{tab:notation}
	\begin{tabularx}{\columnwidth}{@{}l@{\hskip 6pt}>{\raggedright\arraybackslash}X@{}}
		\toprule
		symbol                                 & meaning                                                                                                        \\
		\midrule
		$K,\ A_i,\ b_i,\ m,\ N$                & key space $K=\prod_{i=1}^m A_i$; alphabets, $|A_i|=b_i$; $N=|K|$ keys; input size $B_{\mathrm{in}}=\sum_ib_i$  \\
		$P=\prod_i p_i,\ U$                    & product posterior (advice), marginals $p_i$; uniform law on $K$                                                \\
		$s_i(a)=-\ln p_{i,a}$                  & per-coordinate surprisal                                                                                       \\
		$S=\sum_i s_i$                         & total surprisal                                                                                                \\
		$\pi,\ \nu;\ \Fs$                      & laws of $S$ under $P,U$; uniform CDF $\Fs(t)=\PP_U[S\le t]$                                                    \\
		$G;\ \Gst=N\cdot\Fs(S)$                     & optimal guessing order; block-maximal rank (\Cref{lem:rankcdf})                                                \\
		$\Ac_\rho(t);\ \tAc_\rho$              & level mean power rank $n(t)^{-1}\sum_{r\in J_t}r^\rho$ (\Cref{lem:rankcdf}); binned analogue                   \\
		$f(\rho),\ s$                          & log guessing moment $\ln\E_P[G^\rho]$; exponent $s=f(1)/f(\half)$                                              \\
		$H_\alpha,\ \Hbar_\alpha$              & R\'enyi entropy and its per-coordinate mean $\frac1m\sum_iH_\alpha(p_i)$                                       \\
		$H$                                    & Shannon entropy $H=\E_P[S]=m\Hbar_1$                                                                           \\
		$\mu_U,\ \sigma_U^2;\ \Gamma$          & mean, variance of $S$ under $U$; gap $\Gamma=\mu_U-H$ (Jeffreys divergence)                                    \\
		$\Lam(\tau)$                           & uniform CGF $\ln\E_U[e^{\tau S}]$; $\Lam'(0)=\mu_U$, $\Lam''(0)=\sigma_U^2$                                    \\
		$U_\tau;\ \sigma_\tau^2$               & tilted law $U_\tau^{(i)}\propto e^{\tau s_i}$, $U_{-1}=P$; $\sigma_\tau^2=\Lam''(\tau)$                        \\
		$\tau_t,\ \sigma_t;\ I_U(t)$           & saddle $\Lam'(\tau_t)=t$, scale $\sigma_t\coloneq \sigma_{\tau_t}$; rate function $\sup_\tau(\tau t-\Lam(\tau))$      \\
		$\theta(t),\ \theta^\star$             & rank-growth rate $|\tau_t|$; edge value $\theta^\star=|\tau_H|=1$ (\Cref{cor:rate})                            \\
		$\phi,\ \mu^{(\phi)}$                  & engine tilt and tilted grid law (\Cref{sec:stable})                                                            \\
		$\eta,\ h;\ R;\ \eps_0$                & bin width / lattice span; surprisal range; scaling-law regime constant                        \\
		$\Lameta;\ q_\tau$                     & lattice-span exponent \eqref{eq:lameta}; quartic correction (\Cref{lem:lclt})                                  \\
		$\widetilde s_i,\tS,\tpi,\tnu,\tF,\tG$ & binned analogues: strictly upward rounding to $\eta\Zb$, $0<\tS-S\le m\eta$                                    \\
		$f_\eta,\ s_\eta;\ \widehat s$         & binned moments, exponent; floating-point output (\Cref{lem:stable})                                            \\
		$\Delta(x)$                            & binning window log-ratio, floored at rank $1$ (\Cref{lem:sandwich})                                            \\
		$C(s),\ \eps_m$                        & scaling-law constant $(2+s)/\Hbar_{2/3}$; finite-size correction of \Cref{thm:bin}\\
		$\Varone;\ \sigma_H^2$                 & mean per-coordinate surprisal variance under $P$; $\sigma_H^2=m\Varone$                                        \\
		$\eps_\tau,\ \beta_\tau$               & local-CLT error, summed tilted third moment (\Cref{lem:lclt})                                                  \\
		$\delta,\ B;\ \umach,\ \poly$          & target precision; a-posteriori error bound (\Cref{prop:cert}); unit roundoff; polynomial in $m,1/\eta$\\
		\bottomrule
	\end{tabularx}
\end{table}

\section{The rank--CDF reduction}\label{sec:rankcdf}
The guessing moments average over all $N$ keys, where $N$ can be exponential in $m$, so direct enumeration is infeasible.
The next lemma removes the key space from the picture, replacing it with two
one-dimensional distributions of the surprisal: how many keys sit at each
surprisal level, and how likely the secret is to sit there.
Because $P$ is a product, each is an $m$-fold convolution of $m$ small per-coordinate
distributions. Every subsequent algorithm in this paper computes those convolutions.

The approach works because the surprisal of a key, while it does not fix that
key's rank, confines it narrowly: keys of equal surprisal occupy consecutive
ranks, and the moment averages over those ranks exactly. Individual ranks
therefore depend on how ties are broken, but the moments do not.

Fix a surprisal level $t$. Counting keys is an operation under the uniform
distribution $U$, so we write $\Fs(t)\coloneq\PP_U[S\le t]$, $\Fs(t^-)\coloneq\PP_U[S<t]$ and $\nu(t)\coloneq\PP_U[S=t]$
for the cumulative distribution function of $S$ under $U$, its left limit, and
its mass at $t$. The first of these is what a rank measures: $N\cdot\Fs(t)$ is
exactly the number of keys at least as likely as one of surprisal $t$, and
$n(t)\coloneq N\cdot\nu(t)$ is the number of keys at that level. Accordingly we set
\[
J_t\coloneq
\{N\Fs(t^-)+1,\ldots,N\Fs(t)\}, \text{~~and~~}
\Ac_\rho(t)\coloneq\frac1{n(t)}\sum_{r\in J_t}r^\rho,
\]
an integer interval containing exactly $n(t)$ ranks and the mean $\rho$-th
power of a rank across it. The advice enters at one point only, through
$\pi(t)\coloneq \PP_P[S=t]$, the probability that the secret has surprisal $t$; since
every key at level $t$ has probability $e^{-t}$, these are related by
$\pi(t)=n(t)e^{-t}$.

\begin{lemma}[rank--CDF reduction]\label[lemma]{lem:rankcdf}
	\phantom{-----------}
	\begin{itemize}
		\item[(i)] Among guessing orders, each moment $\E_P[G^\rho]$ is minimized
		   		exactly by the likelihood-ordered
		      ones (for $\rho>0$), and the value is invariant under permutations within
		      equal-surprisal blocks (proof in \Cref{app:stable}).
		\item[(ii)] Every optimal $G$ satisfies
		      $N\cdot\Fs(S(x)^-)<G(x)\le N\cdot\Fs(S(x))=\Gst(x)$; that is, the ranks on
		      level $t$ fill $J_t$.
		\item[(iii)] For every $\rho>0$, $\E_P[G^\rho]=\sum_t\pi(t)\,\Ac_\rho(t)$ and therefore $f(\rho)=\ln\sum_t\pi(t)\,\Ac_\rho(t)$.
	\end{itemize}
\end{lemma}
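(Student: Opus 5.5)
For (i) we use a rearrangement (exchange) argument. Write $\E_P[G^\rho]=\sum_x P(x)G(x)^\rho$ with $G$ a bijection $K\to\{1,\dots,N\}$. Since $r\mapsto r^\rho$ is strictly increasing for $\rho>0$, the rearrangement inequality says this sum is minimized by pairing the largest probabilities with the smallest ranks.

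Concretely, suppose some order has keys $x,y$ with $P(x)>P(y)$ but $G(x)>G(y)$. Swapping their positions changes the moment by
\[
\bigl(P(x)-P(y)\bigr)\bigl(G(y)^\rho-G(x)^\rho\bigr)<0,
\]
so the original order was not optimal. Every minimizer is therefore likelihood-ordered, i.e.\ nondecreasing in $S$.

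Conversely, any two likelihood-ordered orders differ only by permutations within equal-surprisal blocks. On such a block all keys share the probability $e^{-t}$, and the block occupies the same set of positions in both orders. The block's contribution $e^{-t}\sum_{r} r^\rho$ is therefore identical, so all likelihood orders attain the same value. A minimizer exists because the set of orders is finite, and hence the common value is the minimum. This gives ``exactly'' as well as the invariance claim.

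For (ii), fix an optimal $G$, which by (i) is likelihood-ordered, and a key $x$ with $S(x)=t$. All $N\Fs(t^-)$ keys with strictly smaller surprisal must precede every key on level $t$. All keys with larger surprisal must follow. The $n(t)$ keys on level $t$ therefore occupy exactly the positions $N\Fs(t^-)+1,\dots,N\Fs(t^-)+n(t)=N\Fs(t)$, which is the interval $J_t$. The identity $N\Fs(S(x))=\Gst(x)$ is just the definition of $\Gst$ as a count. The one point needing care is that $N\Fs(t)$ and $N\Fs(t^-)$ are integers, since they count keys; this is immediate.

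For (iii), group the sum by level:
\[
\E_P[G^\rho]=\sum_t\sum_{x:S(x)=t}e^{-t}G(x)^\rho .
\]
By (ii), $G$ restricted to level $t$ is a bijection onto $J_t$, so the inner sum equals $e^{-t}\sum_{r\in J_t}r^\rho=e^{-t}n(t)\Ac_\rho(t)=\pi(t)\Ac_\rho(t)$, using $\pi(t)=n(t)e^{-t}$. Only finitely many levels occur, so taking logarithms gives $f(\rho)$.

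I expect no real obstacle. The most delicate step is phrasing the ``exactly'' in (i) so that strict inequality in the swap rules out every non-likelihood order. That requires $\rho>0$ (strict monotonicity of $r^\rho$) and $P(x)\neq P(y)$, and both hold by assumption.
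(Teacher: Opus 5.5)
Your proof is correct and follows essentially the same argument as the paper. Parts (ii) and (iii) match it almost verbatim. For (i), the paper reindexes the moment by rank and cites the rearrangement inequality directly. You instead spell out the pairwise exchange that proves that inequality, and add finiteness of the set of orders to get existence of a minimizer; this is the same idea made explicit rather than a different route.
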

\begin{proof}
	(ii) Since $t\mapsto e^{-t}$ is strictly decreasing, a likelihood order ranks
	every key of surprisal $<t$ before, and every key of surprisal $>t$ after,
	the $n(t)$ keys at level $t$; those keys therefore occupy exactly the ranks
	of $J_t$.
	(iii) On $\{S=t\}$ every key has probability $e^{-t}$, so
	$\sum_{x:S(x)=t}P(x)\,G(x)^\rho=e^{-t}\sum_{r\in J_t}r^\rho=\pi(t)\,\Ac_\rho(t)$;
	sum over levels. By (i) the value does not depend on which optimal $G$ was
	chosen.
\end{proof}

Item (iii) is the identity every later algorithm evaluates. What remains is to
compute the two surprisal distributions it involves:
without discretization error when the per-coordinate surprisals are commensurate (\Cref{sec:engine}), and otherwise through the binned approximation and its a-posteriori error bound (\Cref{sec:binning,sec:accuracy}).

\section{Exact evaluation and the stable engine}\label{sec:engine}
This section describes the main computation engine.
For commensurate marginals, $s$ is given \emph{exactly} by a finite sum, with no discretization error (\Cref{sec:exact}); \Cref{sec:stable} gives its numerically stable floating-point realization for typical $m$.
Throughout this and the next section, $\poly$ denotes a fixed polynomial in
$m,1/\eta$, and $h$ may be read for $\eta$ in the commensurate case. Cost
statements are in \emph{word} operations at fixed precision, which
\Cref{lem:stable} justifies by showing double precision suffices for the
targets of interest; under the interval-arithmetic route the working precision
becomes a parameter and the bit complexity scales with it.

\subsection{The exact evaluation}\label{sec:exact}
Say the marginals are \emph{commensurate} if all $\ln p_{i,a}$ lie on a common
arithmetic grid, $s_i(a)\in h\Zb$ for some span $h>0$. Then $S$ takes values in
$h\Zb$, and since it ranges over an interval of width
$R\coloneq \sum_i\big(\max_as_i(a)-\min_as_i(a)\big)$, the \emph{surprisal range}, its
support has $O(R/h)$ points (polynomial when the range is $O(m)$,
as under bounded per-coordinate leakage). Dyadic models --- and more generally
all models whose probabilities are powers of a common base,
$p_{i,a}=q^{-k_{i,a}}$, so that their surprisals are specified in fixed
units --- are commensurate by construction.

On such a grid \Cref{lem:rankcdf} (iii) becomes a finite computation: the level
sum runs over the $O(R/h)$ grid points, and $\Ac_\rho(t)$ depends on the level
only through the pair $(\Fs(t^-),\Fs(t))$. The two histograms promised in
\Cref{sec:rankcdf} are therefore convolutions of explicit per-coordinate laws:
the \emph{mass} laws $a\mapsto p_{i,a}$ convolve to $\pi$, the \emph{count}
laws $a\mapsto1/b_i$ convolve to what we call $\nu$.

\begin{theorem}[exact evaluation]\label[theorem]{thm:exact}
	In the commensurate case
	\[
		f(\rho)=\rho\ln N+\ln\sum_t\pi(t)\,\frac{\Ac_\rho(t)}{N^\rho}
	\]
	is a finite sum over the grid --- the factor $N^\rho$ split off so that
	every summand lies in $[0,1]$ --- and $s=f(1)/f(\half)$ is computed exactly
	in $\widetilde O(R/h+B_{\mathrm{in}})$ time, where $B_{\mathrm{in}}=\sum_ib_i$
	is the input size; for extensive range $R=O(m)$ this is $\widetilde O(m/h)$.
	Moreover, at every level,
	\begin{equation}\label{eq:Abracket}
		\frac1{1+\rho}\ \le\ \frac{\Ac_\rho(t)}{\big(N\cdot\Fs(t)\big)^\rho}\ \le\ 1,
	\end{equation}
	with equality on the right iff the level is a singleton $(n(t)=1)$: the
	block-maximal surrogate $(N\cdot\Fs(t))^\rho$ is exact precisely when no two keys
	share a surprisal.
\end{theorem}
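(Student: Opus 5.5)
The plan has three parts: the formula, the complexity claim, and the bracket \eqref{eq:Abracket}.

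For the formula, I would start from \Cref{lem:rankcdf}(iii), $f(\rho)=\ln\sum_t\pi(t)\Ac_\rho(t)$, and pull the constant $N^\rho$ out of the logarithm. Finiteness follows from commensurability. Since $S\in h\Zb$ and $S$ ranges over an interval of width $R$, only $O(R/h)$ levels $t$ carry positive $\pi(t)$. For those levels, $J_t\subseteq\{1,\dots,N\}$, so every $r\in J_t$ satisfies $r^\rho/N^\rho\in(0,1]$. Hence $\Ac_\rho(t)/N^\rho\in(0,1]$, and each summand $\pi(t)\Ac_\rho(t)/N^\rho$ lies in $[0,1]$.

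For the complexity claim, I would proceed in three steps.
\begin{enumerate}
\item Put each per-coordinate law on the grid with $O(b_i)$ work. The mass law places $p_{i,a}$ at index $s_i(a)/h$, and the count law places $1/b_i$ there.
\item Form the $m$-fold convolutions for $\pi$ and $\nu$ by a balanced binary tree of FFT multiplications. The total support length is $O(R/h)$, and each tree level costs $\widetilde O(R/h)$. With $O(\log m)$ levels and the $B_{\mathrm{in}}$ input cost, this gives $\widetilde O(R/h+B_{\mathrm{in}})$.
\item Take prefix sums of $\nu$ to obtain $\Fs(t^-)$ and $\Fs(t)$. The quantity $\Ac_\rho(t)$ then depends only on the endpoints of $J_t$. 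For $\rho=1$ it is the closed-form arithmetic mean $(N\Fs(t^-)+1+N\Fs(t))/2$. For $\rho=\half$ it can be evaluated in $O(1)$ per level, either by Euler--Maclaurin for power sums with certified error or by a difference formula. I would state this as an exact evaluation in exact arithmetic, deferring floating point to \Cref{lem:stable}.
\end{enumerate}
The main obstacle is this $\rho=\half$ sum, since a block $J_t$ can be exponentially long. I would handle it with a Hurwitz-zeta-type difference $\zeta(-\rho,a)-\zeta(-\rho,b)$, which is computable in polylog time, so the sum is "exact" up to evaluation of a special function.

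For the bracket, write $M=N\Fs(t)$ and $n=n(t)$, so that $J_t=\{M-n+1,\dots,M\}$. The upper bound holds because every $r\le M$, with equality iff all terms equal $M$, i.e.\ $n=1$. For the lower bound, $r^\rho$ is increasing, so $\sum_{r=M-n+1}^M r^\rho\ge\int_{M-n}^M x^\rho\,dx=\frac{M^{1+\rho}-(M-n)^{1+\rho}}{1+\rho}$. It then suffices to show $M^{1+\rho}-(M-n)^{1+\rho}\ge nM^\rho$. Writing $u=n/M\in(0,1]$, this reads $1-(1-u)^{1+\rho}\ge u$, which holds because $(1-u)^{1+\rho}\le 1-u$. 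Dividing by $nM^\rho$ gives $\Ac_\rho(t)/M^\rho\ge 1/(1+\rho)$.
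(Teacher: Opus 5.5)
Your proposal is correct and follows the paper's proof essentially step for step. Both use the scale split of \Cref{lem:rankcdf}(iii), the support-additive balanced FFT merge tree followed by a cumulative sum of $\nu$ into $\Fs$, and, for \eqref{eq:Abracket}, the same right-Riemann-sum comparison. Your observation $(1-u)^{1+\rho}\le 1-u$ is a slightly more direct justification of the last step than the paper's concavity/chord argument. One small correction: each merge round holds $O(R/h+m)$ rather than $O(R/h)$ grid points, but your $B_{\mathrm{in}}$ term absorbs this.

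The one real divergence is the $\rho=\half$ level sum. The paper does not use a Hurwitz-zeta difference there. Instead it sums small blocks directly and bounds large blocks by an elementary midpoint/trapezoid bracket, formed in the occupancy ratio $\beta=n(t)/(N\cdot\Fs(t))$ via \texttt{expm1}/\texttt{log1p}. The point of that design is to avoid the loss of about $\log_{10}(1/\beta)$ digits that a fixed-precision difference of two $\Theta(b^{3/2})$ Hurwitz values would incur on thin levels. Your exact-arithmetic framing sidesteps this, so it is not a gap.
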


\begin{proof}
	The display is the level sum of \Cref{lem:rankcdf} (iii) under the scale
	split $r^\rho=N^\rho(r/N)^\rho$; the bracket \eqref{eq:Abracket} and the
	certified $O(1)$-per-level evaluation of $\Ac_\rho$ from
	$(\Fs(t^-),\Fs(t))$ are proved in \Cref{app:stable}. It remains to produce
	$\pi$ and $\nu$ and to cumulate $\nu$ into $\Fs$. A grid law is a coefficient vector,
	convolving two of them is polynomial multiplication, and the FFT performs it
	in $O(n\log n)$ rather than $O(n^2)$ on supports of $n$ points. Folding the
	$m$ per-coordinate laws in one at a time would re-traverse the growing
	support $\Theta(m)$ times so instead we convolve them along a \emph{balanced
		binary merge tree}, as in mergesort --- coordinates in pairs, then the
	results in pairs, for $\lceil\log_2m\rceil$ rounds --- in exact arithmetic
	here (the floating-point realization is \Cref{sec:stable}), sizing each FFT
	to the actual support of its output. Supports of disjoint blocks add, so the
	nodes of any one round jointly hold $O(R/h+m)$ grid points and cost
	$O\big((R/h+m)\log(R/h+m)\big)$; over the rounds this is
	$\widetilde O(R/h+m)$ as each grid cell is touched at $O(\log m)$ nodes.
	The cumulative sum and the two moment sums are linear.
\end{proof}

The two convolutions are in fact one: On the level set $\{S=t\}$ every key has
probability $e^{-t}$, so mass is count times weight:
\begin{equation}\label{eq:pinu}
	\pi(t)=n(t)e^{-t}=N\,\nu(t)\,e^{-t}\qquad(\text{exactly}).
\end{equation}
Summing \eqref{eq:pinu} over levels $u\le t$ and using $\sum_{u\le t}\pi(u)\le1$
gives an exact Chernoff-type bound we will reuse:
\begin{equation}\label{eq:rankchernoff}
	N\cdot\Fs(t)=\sum_{u\le t}e^{u}\,\pi(u)\ \le\ e^{t},
	\qquad\text{i.e.}\qquad \Gst\le e^{S}\ \text{pointwise};
\end{equation}
a key's rank never exceeds one over its probability, with near-equality at
typical keys, where $S\approx H$. The exact engine therefore needs only $\nu$:
cumulate it into $\Fs$, and recover $\pi$ from \eqref{eq:pinu}. This collapse
fails under binning, since keys rounded into one cell no longer share a weight,
which is why the following \Cref{sec:binning} keeps two \emph{chains} --- our term for one
histogram carried through the tilt--merge--cumulate pipeline.

\paragraph{The finite-rank exact engine.} Commensurability is the case $d=1$ of
a broader hypothesis under which the level count stays finite:
Say the marginals have \emph{rank} $d$ if there are reals $\beta_1,\dots,\beta_d$,
linearly independent over $\mathbb Q$, constants $c_i$ and integers $c_{i,a,j}$ with
$s_i(a)=c_i+\sum_{j=1}^dc_{i,a,j}\beta_j$ for every $i,a$ --- the $c_i$ shift every
key's surprisal by the same $\sum_ic_i$, changing neither order nor moments. The rank and the
$\beta_j$ belong to the \emph{model specification} and are not recovered from
the numbers --- detecting $\mathbb Q$-linear relations among given reals is a
different and much harder problem --- but under structured advice they may come for
free: for a repeated $k$-symbol law the logarithms $-\ln q_1,\dots,-\ln q_k$
span a $\mathbb Q$-vector space of dimension $d\le k$. A basis of that span,
with denominators cleared, puts the surprisals in the form above. Bounded $d$
is what divides the paper's two routes: ternary secrets sit at small $d$, while
heterogeneous leakage posteriors generically have rank $\Theta(m)$, so the latter regime belongs to the binned engine (\Cref{sec:binning}).

Bounded rank keeps the level count polynomial. Collect the integer vectors
$W\coloneq \{\sum_ic_{i,x_i}:x\in K\}\subseteq\Zb^d$. The map
$w\mapsto\langle w,\beta\rangle$ is injective on $\Zb^d$ --- a coincidence of
values would be a nontrivial rational relation among the $\beta_j$ --- so the
levels of $S$ correspond exactly to the elements of $W$, and these lie in a box
whose $j$-th side is the total spread of $c_{\cdot,\cdot,j}$ over the
coordinates. When those spreads are extensive, the rank-$d$ analogue of an
extensive range, the box holds $O(m^d)$ points, reducing to $O(R/h)$ at $d=1$
(\Cref{app:stable}).
The engine then carries over with $\Zb^d$ in place of $h\Zb$: per-coordinate
laws become measures on $\Zb^d$ supported on $b_i$ points, and the balanced
merge uses $d$-dimensional FFTs sized to the actual support boxes,
$\widetilde O(m^d)$ in total.
\Cref{alg:exact} below is stated for $d=1$ for simplicity.
For higher ranks, read $\Zb^d$ for $h\Zb$ throughout it.
Additionally, one step is genuinely new. At
$d=1$ the grid $h\Zb$ arrives already ordered, so cumulating $\nu$ into $\Fs$
is immediate; for $d\ge2$ the surprisal order is induced by the linear
functional $\langle\cdot,\beta\rangle$, and the $O(m^d)$ levels must be sorted
by $\langle w,\beta\rangle$ before they can be cumulated.
The finish is untouched: cumulate, recover $\pi$ from \eqref{eq:pinu}, apply \Cref{thm:exact}.

That sort is where exactness meets its limit. Ordering requires deciding the
sign of $\langle w-w',\beta\rangle$, a nonzero real, which adaptive-precision
interval arithmetic settles in finitely many refinements whenever the $\beta_j$
are available to arbitrary precision, as logarithms of rationals are --- but the
precision required is not bounded a priori. The fallback is a device this paper already
carries: levels that working precision cannot separate are merged into one block
and averaged by $\Ac_\rho$. Merging replaces the favorable pairing of largest
masses with smallest ranks by a neutral one, so by \Cref{lem:rankcdf}(i) it can
only \emph{overstate} the moment; the merged value is a certified upper bound.
The matching lower bound is elementary: a merged block occupies a
\emph{consecutive} rank interval $\{a+1,\dots,a+n\}$ of mass $\Pi$, so its true
contribution is at least $\Pi(a+1)^\rho$ against a merged value of at most
$\Pi(a+n)^\rho$. Summing these widths over the merged blocks, the engine returns
a certified enclosure in place of an exact value --- degrading into the same
two-sided mode as the binned route, by the same machinery. Where the level count
stays polynomial this engine is cheap, and it is precisely there that the binned
certificate's precision later floors (\Cref{rmk:certified}): the two are complementary
rather than competing, with a band at slowly growing $d$ in which both run and
neither dominates.

\subsection{Stable evaluation}\label{sec:stable}
At typical $m$ the dominant summands $\pi(t)\Fs(t)^\rho$ are $e^{-\Theta(m)}$
and underflow in double precision. But their smallness is benign: at every
level it factors into an exponential scale that is available in closed form,
times a shape of moderate dynamic range (see below). The scheme below maintains this
factorization through its whole computation: floating point only ever
touches the shape, while the scale travels symbolically in the log domain; in
effect, a mantissa--exponent split for the entire pipeline. We state it first
and then justify its steps in turn. Three lemmas make it rigorous,
of which two are proved in \Cref{app:stable}.

\begin{algobox}{Stable exact evaluation (commensurate $P$)}\label{alg:exact}
	\emph{Input:} commensurate marginals on $h\Zb$.\\
	\emph{Output:} speedup $s$, accurate
	to floating point (\Cref{lem:stable}).
	\begin{enumerate}
		\item Form the per-coordinate count histograms $\nu_i$ on $h\Zb$.
		\item \textbf{For} each $\rho\in\{1,\half\}$ \textbf{do}
		      \begin{enumerate}
			      \item Tilt every $\nu_i$ by a common $\phi$, chosen so the merged
			            law is centered on the levels dominating $f(\rho)$; by
			            \Cref{lem:tiltconv} this is exact.
			      \item FFT-merge the tilted $\nu_i^{(\phi)}$ up a balanced tree
			            (support-aware, truncated; \Cref{lem:fftcost}).
			      \item In the log domain, recover $\ln\Fs$ by the left-to-right
			            prefix $\LSE$ of Eq.~\eqref{eq:logrecover} and $\ln\pi$
			            from Eq.~\eqref{eq:pinu}, so the $e^{-\Theta(m)}$
			            magnitudes never leave the analytic factor.
			      \item Assemble
			      			$f(\rho)=\LSE_t\big[\ln\pi(t)+\ln\!\big(N^{-\rho}\Ac_\rho(t)\big)\big]+\rho\ln N$,
			            each $N^{-\rho}\Ac_\rho(t)$ evaluated in $O(1)$ from the
			            prefix pair $(\Fs(t^-),\Fs(t))$ to certified precision
			            (shown by \Cref{app:stable}).
		      \end{enumerate}
		\item Output $s=f(1)/f(\half)$.
	\end{enumerate}
\end{algobox}

\paragraph{The tilt (step~\emph{a}).} The device effecting the split is the
exponential \emph{tilt}: for a grid law $\mu$ with cumulant generating function
(CGF) $\Lambda_\mu(\phi)=\ln\sum_we^{\phi w}\mu(w)$, set
$\mu^{(\phi)}(w)\coloneq e^{\phi w-\Lambda_\mu(\phi)}\mu(w)$ --- an exponential ramp
across the levels, re-normalized into a probability law with mean
$\Lambda_\mu'(\phi)$. Choo\-sing $\phi$ so that this mean lands on the levels
dominating the moment sums (the \emph{saddle} levels) makes the ramp cancel the
$e^{-\Theta(m)}$ exactly there: the entries the answer depends on become
well-scaled entries of $\mu^{(\phi)}$, and what was divided out is known
exactly, namely $e^{\phi v-\Lambda_\mu(\phi)}$. The engine tilts the count law
$\nu$, whose CGF is the uniform CGF
$\Lam(\tau)=\ln\E_U[e^{\tau S}]=\sum_i\lambda_i(\tau)$, strictly convex, with
per-coordinate summands $\lambda_i(\tau)=\ln\tfrac1{b_i}\sum_ae^{\tau s_i(a)}$;
on this chain the generic tilt parameter $\phi$ is precisely $\tau$, and
$U_{-1}=P$ since $U_\tau^{(i)}(a)\propto e^{\tau s_i(a)}=p_{i,a}^{-\tau}$.

The centring is a numerical device, not a correctness ingredient: any $\phi$
that keeps the contributing entries within the truncation window of
\Cref{lem:fftcost} is admissible (a few Newton steps on the strictly increasing
$\Lam'$ if one wants the exact saddle). What the step does require is that
tilting survive the merge.

\begin{lemma}[tilt commutes with convolution]\label[lemma]{lem:tiltconv}
	If $\mu=\mu_1*\mu_2$ with CGFs $\lambda_1,\lambda_2$, then
	$\mu^{(\phi)}=\mu_1^{(\phi)}*\mu_2^{(\phi)}$ and the CGFs add,
	$\Lambda_\mu=\lambda_1+\lambda_2$; hence
	$(*_i\mu_i)^{(\phi)}=*_i\mu_i^{(\phi)}$ with CGF\, $\sum_i\lambda_i$.
\end{lemma}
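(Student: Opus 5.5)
The plan is a direct computation from the definitions, working first with the two-factor case and then inducting over the merge tree. I begin with the CGF identity, since the tilt formula depends on it. For grid laws $\mu_1,\mu_2$ with finite supports, $\mu=\mu_1*\mu_2$ means $\mu(w)=\sum_{u+v=w}\mu_1(u)\mu_2(v)$. Multiplying by $e^{\phi w}=e^{\phi u}e^{\phi v}$ and summing over $w$ reorganizes the double sum as a product:
\[
\sum_w e^{\phi w}\mu(w)=\Big(\sum_u e^{\phi u}\mu_1(u)\Big)\Big(\sum_v e^{\phi v}\mu_2(v)\Big).
\]
Taking logarithms gives $\Lambda_\mu=\lambda_1+\lambda_2$. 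Finiteness of supports (each per-coordinate law sits on $b_i$ points) means every sum is finite, so there is no convergence issue at any real $\phi$. The same identity holds verbatim on $\Zb^d$ with $\langle\phi,w\rangle$, which is what the rank-$d$ engine needs.

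Next comes the tilt itself. By definition,
\[
\mu^{(\phi)}(w)=e^{\phi w-\Lambda_\mu(\phi)}\mu(w)=\sum_{u+v=w}e^{\phi u-\lambda_1(\phi)}\mu_1(u)\,e^{\phi v-\lambda_2(\phi)}\mu_2(v).
\]
Here I use $\phi w=\phi u+\phi v$ on each term of the convolution sum, together with the additivity just shown. The right-hand side is exactly $(\mu_1^{(\phi)}*\mu_2^{(\phi)})(w)$. One remark is worth recording: the statement applies to unnormalized count histograms as well as probability laws. The definition of $\Lambda_\mu$ does not assume $\mu$ has total mass one, and tilting always produces a normalized law, so the argument goes through unchanged.

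The $m$-fold claim then follows by induction on the number of factors, using associativity of convolution. Writing $*_{i\le k+1}\mu_i=(*_{i\le k}\mu_i)*\mu_{k+1}$ and applying the two-factor case with CGFs $\sum_{i\le k}\lambda_i$ and $\lambda_{k+1}$ gives both the product form and the CGF $\sum_i\lambda_i$. Because the two-factor statement is symmetric and associative, the induction also covers any bracketing, in particular the balanced merge tree used in step (b) of the algorithm.

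There is no genuine obstacle here; the only point needing care is bookkeeping of the normalizers. One has to make sure the factor $e^{-\Lambda_\mu(\phi)}$ splits as $e^{-\lambda_1(\phi)}e^{-\lambda_2(\phi)}$, which is precisely why additivity is proved first.
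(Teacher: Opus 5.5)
Your proposal is correct and follows essentially the same route as the paper: factor $e^{\phi w}=e^{\phi a}e^{\phi b}$ inside the convolution sum, then induct on the number of factors. The only difference is that you derive the additivity $\Lambda_\mu=\lambda_1+\lambda_2$ explicitly, as a product of the two exponential sums, before using it to split the normalizer; the paper's one-line proof uses that identity implicitly in its final equality.
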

\begin{proof}
	For $a+b=w$, $e^{\phi a}e^{\phi b}=e^{\phi w}$, so
	$(\mu_1^{(\phi)}*\mu_2^{(\phi)})(w)=e^{\phi w-\lambda_1-\lambda_2}
		\sum_{a+b=w}\mu_1(a)\mu_2(b)=\mu^{(\phi)}(w)$; induct on $m$.
\end{proof}

Convolving the tilted per-coordinate laws therefore returns \emph{exactly} the
tilt of the true law: tilting carries no bias, so after the merge we hold
$\mu^{(\phi)}$ for the true $\mu$ and may simply solve the definition of the
tilt for $\mu$, level by level.

\paragraph{The merge (step~\emph{b}).} We call a chain \emph{$\half$-decaying} if
within every coordinate the tilted atom weights fall off exponentially in the
surprisal, at a rate of at least $\half$ per unit above the per-coordinate
minimum. Both chains are $\half$-decaying at every tilt either engine uses
(\Cref{alg:exact,alg:binned}), \emph{uniformly in the surprisal range} --- the
negative effective tilt is what makes long tails cheap.

\begin{lemma}[merge cost]\label[lemma]{lem:fftcost}
	\emph{(i)} The support-aware balanced merge of the $m$ per-coordinate laws,
	plus a cumulative sum, costs $\widetilde O(R/\eta+B_{\mathrm{in}})$ time and
	$\Theta(R/\eta+m)$ space. \emph{(ii)} For $\half$-decaying chains, each
	partial convolution over a block $I$ of coordinates carries all but a
	$\poly^{-1}$ fraction of its mass on $\widetilde O(\sqrt{B_I}/\eta)$ grid
	points around its analytically known mean, where $B_I\coloneq \sum_{i\in I}b_i$ is
	the block input size; truncating every merge node to this window perturbs
	the output by $\poly^{-1}$ in $\ell_1$ (absorbed into \Cref{lem:stable}; see
	\Cref{app:stable}) and
	the cost drops to $\widetilde O\big((\sqrt{mB_{\mathrm{in}}}+m)/\eta\big)$
	time --- $\widetilde O(m/\eta)$ for bounded alphabets --- and
	$\widetilde O(\sqrt{B_{\mathrm{in}}}/\eta)$ space, independently of $R$.
\end{lemma}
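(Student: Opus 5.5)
For part (i) the plan is to reuse the cost accounting from the proof of \Cref{thm:exact}, now with bin width $\eta$ in place of the span $h$. Each per-coordinate law lives on $O(b_i)$ grid points inside an interval of length $r_i\coloneq\max_as_i(a)-\min_as_i(a)$, so its support spans $O(r_i/\eta+1)$ cells. Forming all $m$ histograms costs $O(B_{\mathrm{in}})$.

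I will then merge along the balanced binary tree. A node covering block $I$ has support width $\sum_{i\in I}r_i/\eta+|I|$, and within one round the blocks are disjoint. Hence the FFTs of that round jointly handle $O(R/\eta+m)$ points at cost $O((R/\eta+m)\log(R/\eta+m))$. There are $\lceil\log_2m\rceil$ rounds, and the final cumulative sum is linear. This gives time $\widetilde O(R/\eta+B_{\mathrm{in}})$. For space, only one round's vectors need to be live at a time, which gives $\Theta(R/\eta+m)$; the lower bound holds because the output histogram itself has that many cells.

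For part (ii) the key is a concentration bound on every partial convolution. The tilted block law $\mu_I^{(\phi)}$ is the law of a sum of independent coordinate variables $X_i$. Its mean $\sum_{i\in I}\lambda_i'(\phi)$ and variance $\sum_{i\in I}\lambda_i''(\phi)$ are available in closed form, which is the ``analytically known mean'' of the statement.

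The $\half$-decay hypothesis says that each tilted atom weight is at most $e^{-(u-\min_a s_i(a))/2}$ times the top weight. I will use this to bound the per-coordinate moment generating function $\E[e^{\theta(X_i-\E X_i)}]$ for $|\theta|\le\frac14$. The bound will depend on $b_i$ but not on the range $r_i$, because the tail sum is geometric. This yields a sub-exponential tail with variance proxy $v_i=O(b_i)$: the crude bound counts the atoms, and the decay keeps their spread $O(1)$ apart from logarithmic factors in $b_i$.

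Bernstein's inequality for the sum then shows that $\mu_I^{(\phi)}$ puts all but $\poly^{-1}$ of its mass within $O(\sqrt{B_I\log\poly}+\log\poly)$ of its mean, that is, on $\widetilde O(\sqrt{B_I}/\eta)$ cells. I expect this step to be the main obstacle: I need a variance proxy that is uniform in the range $R$ yet linear in $b_i$, and I need it to hold for both chains at every tilt. I would first try the direct bound $\E e^{\theta Y}\le\sum_k e^{(\theta-1/2)k}\cdot(\text{count of atoms in unit }k)$, with the counts bounded by $b_i$.

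Truncation error comes next. Each node truncates its output, discarding mass $\poly^{-1}$. Convolution is $\ell_1$-contractive on nonnegative measures, so the errors introduced at the $2m$ nodes propagate at most additively. The total $\ell_1$ error is therefore $2m\cdot\poly^{-1}$, which is again $\poly^{-1}$ after enlarging the polynomial.

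The cost per round becomes $\sum_{I}\widetilde O(\sqrt{B_I}/\eta+|I|)$. By Cauchy--Schwarz over the $k$ blocks of the round, $\sum_I\sqrt{B_I}\le\sqrt{kB_{\mathrm{in}}}\le\sqrt{mB_{\mathrm{in}}}$. Summing over $O(\log m)$ rounds gives $\widetilde O((\sqrt{mB_{\mathrm{in}}}+m)/\eta)$. For bounded alphabets $B_{\mathrm{in}}=O(m)$, so this is $\widetilde O(m/\eta)$.

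For space, the largest live vector is the root window $\widetilde O(\sqrt{B_{\mathrm{in}}}/\eta)$. Processing the tree depth-first keeps only $O(\log m)$ windows alive along one path, and their sizes decrease geometrically-ish down the path. The total is therefore dominated by the root, and neither bound involves $R$.
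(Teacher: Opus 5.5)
Your proposal is correct and follows the paper's proof essentially step for step. Part (i) uses the same support-additivity accounting, and part (ii) uses a range-independent sub-exponential bound with variance proxy $O(b_i)$ per coordinate, a Chernoff/Bernstein window of half-width $O(\sqrt{B_I\log\poly}+\log\poly)$ around the analytically known mean, additive $\ell_1$ deficits over the $2m-1$ nodes, Cauchy--Schwarz within each tree level, and depth-first buffering for space. The paper handles the step you flag as the obstacle without converting an uncentered MGF bound: it applies the decay at the already-perturbed tilt $\phi+\xi$, where the normalized top weight is at most $1$ and the rate only drops to $\gamma\ge\gamma_0/2$; this gives $\lambda_i''(\phi+\xi)\le e^{\eta}b_i\sup_{u\ge0}u^2e^{-\gamma u}=O(b_i)$ uniformly in the range, and Taylor--Lagrange then yields the block bound $\ln\E\big[e^{\xi(X-\E X)}\big]\le\xi^2V_I/2$ with $V_I=O(B_I)$ directly.
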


Thus, the truncated engine costs $\widetilde O((\sqrt{mB_{\mathrm{in}}}+m)/\eta)$, and $\widetilde O(m/\eta)$ for uniformly bounded alphabets, independently of the surprisal range.

\paragraph{Log-domain recovery (step~\emph{c}).} Inverting the tilt is the first line
of \eqref{eq:logrecover} below, and its two ingredients are precisely the split
announced above: an \emph{analytic factor} $-\phi v+\Lambda(\phi)$, of size
$\Theta(m)$ but known in closed form (\Cref{lem:tiltconv} assembles
$\Lambda=\sum_i\lambda_i$ from the marginals), and a computable
$\ln\mu^{(\phi)}(w)$, of moderate size by the centring. Applying this to $\nu$
and summing over the levels $v\le t$ gives the second line. We do this in the
log domain and hence use \enquote{log-sum-exps} ($\LSE$):
\begin{equation}\label{eq:logrecover}
	\begin{gathered}
		\ln\mu(w)=\underbrace{-\phi w+\Lambda(\phi)}_{\text{analytic, }\Theta(m)}
		+\ln\mu^{(\phi)}(w),\\
		\ln\Fs(t)=\Lam(\phi)+\LSE_{w\le t}\big[-\phi w+\ln\nu^{(\phi)}(w)\big],
	\end{gathered}
\end{equation}
where $\LSE_w[x_w]\coloneq \ln\sum_we^{x_w}$.
Log-sum-exps appear at two points, and are evaluated differently because they
are needed differently. The assembly of $f(\rho)$ in step~\emph{d} requires a
\emph{single} $\LSE$ over all levels, computed once. There the batch form
applies: shift by the maximum,
$\LSE_w[x_w]=\max_wx_w+\ln\sum_we^{x_w-\max_ux_u}$, so that every exponent is
$\le0$. However, the cumulation into $\ln\Fs$ in this step is another problem: it is
needed at \emph{every} level $t$, and treating each as a separate batch $\LSE$
would cost quadratic time in the number of levels. Instead a single
left-to-right sweep maintains the running prefix $L_t\coloneq \LSE_{w\le t}[x_w]$,
folding in one level at a time by the identity
\[
	L_t=\max\big(L_{t-\eta},x_t\big)
	+\ln\Big(1+e^{-\left|L_{t-\eta}-x_t\right|}\Big).
\]
The larger of the two numbers is
carried outside the logarithm unchanged; the smaller enters only through the
correction term, which lies in $(0,\ln2]$. The exponent is never positive, so
nothing overflows and each level costs $O(1)$.

\paragraph{Assembly and stability (step~\emph{d}).} At the assembly, we have:
\begin{lemma}[floating-point stability]\label[lemma]{lem:stable}
	\Cref{alg:exact} returns $\widehat s$ with
	\[
	|\widehat s-s_\eta|
	=
	O\left(
	\frac{\umach\,\poly(m,1/\eta)\,(1+s_\eta)}
	     {f_\eta(\half)}
	\right),
	\]
	where $\umach$ is the floating-point unit roundoff; $s_\eta$ and
	$f_\eta(\half)$ are the exact values of the estimator being run
	($s_\eta=s$ and $f_\eta=f$ in the commensurate case), provided the
	roundoff in $f_\eta(\half)$ is smaller than a constant fraction of
	$f_\eta(\half)$. For families with $f_\eta(\half)=\Theta(m)$ and
	$s_\eta=O(1)$, this simplifies to
	$|\widehat s-s_\eta|=O(\umach\,\poly(m,1/\eta)/m)$.
\end{lemma}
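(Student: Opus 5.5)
Plan: I would carry out a standard forward error analysis in three layers. First I bound the relative error of each computed value of $f_\eta(\rho)$ for $\rho\in\{1,\half\}$. Then I propagate it through the ratio. Throughout, I use that every quantity handled in floating point is either an $O(1)$-size shape or enters only through logarithms.

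\emph{Merge layer.} The per-coordinate tilted laws $\nu_i^{(\phi)}$ are probability vectors. Standard FFT convolution error bounds therefore give, at each of the $\lceil\log_2 m\rceil$ merge levels, an absolute $\ell_\infty$ error $O(\umach\log n)$ relative to the $\ell_2$ norm, hence $O(\umach\,\poly)$ in $\ell_1$. The truncation of \Cref{lem:fftcost}(ii) adds $\poly^{-1}$. By \Cref{lem:tiltconv} the exact object being approximated is $\nu^{(\phi)}$ itself, so there is no bias.

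The point needing care is the conversion of absolute errors into relative errors on $\ln\nu^{(\phi)}(w)$. Only the levels contributing non-negligibly to the final $\LSE$ matter. By the centring, these carry tilted mass at least $1/\poly$, since the tilted law is concentrated on $\widetilde O(\sqrt{B}/\eta)$ points. On those levels the relative error is $O(\umach\,\poly)$. Levels with tiny tilted mass contribute a total weight of $O(\umach\,\poly)$ to the sum, whatever their relative error. This is the step I expect to be the main obstacle: one must show that the saddle choice of $\phi$ places the dominant summands of the $f(\rho)$ sum, rather than just the bulk of $\nu$, at tilted mass $1/\poly$. I would argue it via the local CLT around $\tau_t$ and the polynomial number of levels, choosing $\phi$ separately for each $\rho$ as in step 2(a).

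\emph{Log-domain layer.} The analytic factor $-\phi w+\Lambda(\phi)$ is a sum of $O(m)$ closed-form terms. Its absolute error is therefore $O(\umach\,m\,|\text{terms}|)=O(\umach\,\poly)$. Each prefix-$\LSE$ step has correction term in $(0,\ln 2]$, computed to relative accuracy $\umach$. The running prefix thus accumulates absolute error $O(\umach\,\poly)$ over $O(\poly)$ levels. Each $N^{-\rho}\Ac_\rho(t)$ is evaluated to certified relative precision from $(\Fs(t^-),\Fs(t))$ (\Cref{app:stable}). The bracket \eqref{eq:Abracket} ensures that it is bounded away from $0$ relative to $(\Fs(t))^\rho$, so no cancellation occurs. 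The final max-shifted $\LSE$ again costs only additive $O(\umach\,\poly)$, and $\rho\ln N$ is exact up to $\umach\ln N$. Altogether $\widehat f(\rho)=f_\eta(\rho)+\epsilon_\rho$ with $|\epsilon_\rho|=O(\umach\,\poly)$.

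\emph{Ratio layer.} Write $\widehat s=(f_\eta(1)+\epsilon_1)/(f_\eta(\half)+\epsilon_{1/2})$. Under the hypothesis $|\epsilon_{1/2}|\le c\,f_\eta(\half)$ with $c<1$,
\[
|\widehat s-s_\eta|=\frac{|\epsilon_1-s_\eta\epsilon_{1/2}|}{|f_\eta(\half)+\epsilon_{1/2}|}\le\frac{|\epsilon_1|+s_\eta|\epsilon_{1/2}|}{(1-c)f_\eta(\half)},
\]
which is the claimed $O(\umach\,\poly\,(1+s_\eta)/f_\eta(\half))$. The specialization to $f_\eta(\half)=\Theta(m)$ and $s_\eta=O(1)$ is immediate.
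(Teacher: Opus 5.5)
Your proposal is correct and is essentially the paper's proof. Both run the same chain: a Higham-type FFT bound composed over the merge tree; conversion from normwise to entrywise error on the saddle entries, which sit within a $\poly$ factor of the vector norm because the tilted law is concentrated; $1$-Lipschitz log-domain assembly; and the exact ratio identity $\widehat s-s_\eta=(e_1-s_\eta e_{1/2})/(f_\eta(\half)+e_{1/2})$ under a constant-fraction bound on $e_{1/2}$.

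The step you flag is exactly the paper's ``one nonstandard step,'' and the paper settles it by concentration of the tilted law alone. Your own concentration argument (a window of $\widetilde O(\sqrt{B}/\eta)$ points, with light levels contributing negligibly) already suffices. Do not bring in the local CLT: its lattice-span hypothesis is not among the lemma's assumptions and fails on near-commensurate inputs.

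Two bookkeeping points, both harmless inside $\poly$:
\begin{itemize}
\item Count roundoff over all $m-1$ merge nodes, not just $\lceil\log_2 m\rceil$ levels.
\item The $\poly^{-1}$ truncation term you list is not $O(\umach\,\poly)$ unless you choose the truncation tolerance below $\umach$, as the paper does by enlarging $A$.
\end{itemize}
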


For families with $f_\eta(\half)=\Theta(m)$ and $s_\eta=O(1)$, this is negligible against target precisions $\delta\gg\umach\cdot\poly/m$; in IEEE double, $\umach\approx 1.1\times10^{-16}$.
The polynomial in \Cref{lem:stable} is the only constant in the pipeline whose
growth in $m$ and $1/\eta$ is unspecified; those of \Cref{sec:accuracy} are absolute.
It governs floating-point roundoff, not the binning error that
\Cref{prop:cert} later certifies, and it is removable at the implementation level:
the entire assembly is a composition of FFTs, sums and $1$-Lipschitz $\LSE$s,
so running it in directed-rounding interval arithmetic returns a rigorous
enclosure end-to-end at a constant-factor cost --- the certificate of
\Cref{prop:cert} then certifies its own arithmetic. We state this as an
implementation route, not as a property of the reference implementation.

The full proofs of both lemmas are deferred to \Cref{app:stable}, so here we focus
on delineating their underlying mechanism: after tilting, the grid entries
the output depends on sit within a $\poly$ factor of the norms of the vectors
the FFTs handle. Thus, \citeauthor{higham_accuracy_2002}'s normwise bound, which is $O(\umach\log L)$ for
a length-$L$ transform~\cite{higham_accuracy_2002} composed up a
$\lceil\log_2m\rceil$-deep tree, leaves them with $\umach\cdot\poly$ relative
precision. The $e^{-\Theta(m)}$ saddle magnitudes never materialize as floats,
living instead in the analytic factor of \eqref{eq:logrecover}, so nothing
underflows; the log-domain assembly is a composition of $1$-Lipschitz $\LSE$s;
and the final ratio contributes the conditioning factor $(1+s_\eta)/f_\eta(\half)$; when $f_\eta(\half)=\Theta(m)$ and $s_\eta=O(1)$, this supplies the stated $1/m$ factor.
\section{Binning: the general case}\label{sec:binning}
A measured posterior is generically \emph{not} commensurate: each coordinate has
only $b_i$ surprisal values, but the total $S$ takes up to $\prod_i b_i$
distinct values in a bounded range, so direct enumeration is infeasible. The
reduction is essentially one step: snap the per-coordinate surprisals to a common grid
$\eta\Zb$, forcing commensurability, and run the engine (see \Cref{alg:binned}).

\paragraph{Bin the surprisals, not the probabilities.} Round each per-coordinate
surprisal \emph{strictly} up onto $\eta\Zb$,
$\widetilde s_i(a)\coloneq \eta\big(\lfloor s_i(a)/\eta\rfloor+1\big)\in(s_i(a),s_i(a)+\eta]$,
so that $0<\tS(x)-S(x)\le m\eta$ for every key.%
\footnote{Strictness --- surprisals already on the grid are lifted by a full
	$\eta$ rather than fixed --- is what places the optimal rank in the window
	of \Cref{lem:sandwich} at no cost to the constants. Round-to-nearest
	remains admissible and typically does better --- its offsets are
	approximately centered, so the signed combination $e_1-s\,e_{1/2}$ of the
	moment errors is a difference of comparable terms, the effective window
	collapses from $m\eta$ to $\sqrt m\,\eta$, and one observes
	$|s_\eta-s|=O(\eta/\sqrt m)$; this is a heuristic and no part of any
	guarantee. Its offsets are, however, two-sided, $|\tS-S|\le m\eta/2$, and
	re-running the proof of
	\Cref{lem:sandwich} with the two-sided bound widens the window to
	$(m+1)\eta$, so the same $C(s)$ holds with $\eps_m$ increased by $1/m$.} It is
essential to bin the surprisals rather than the probabilities: rounding
probabilities would place the surprisals on an irregular set
whose $m$-fold sums have super-polynomially many
distinct values and do not collapse. Snapping surprisals to $\eta\Zb$ puts
every sum on $\eta\Zb$, with support $O(R/\eta)$.

\paragraph{Two chains and the estimator.} Because \eqref{eq:pinu} fails after
binning, the engine runs on two per-coordinate histograms: a \emph{$P$-mass}
histogram $\pi_i$ (bin $k$ accumulates $\sum_{a:\widetilde s_i(a)=k\eta}p_{i,a}$,
keeping the masses exact) and a \emph{$U$-count} histogram $\nu_i$. Convolving
gives $\tpi$ (law of $\tS$ under the true weights $P$) and $\tnu$ (count under
$U$), with $\tF$ its cumulative sum. The estimator is the finite sum of
\Cref{thm:exact} with $(\tpi,\tF)$ in place of $(\pi,\Fs)$: with
$\tAc_\rho(t)$ the level mean power rank of the binned law (built from
$\tF(t^-),\tF(t)$ exactly as in \Cref{lem:rankcdf}),
\begin{equation}\label{eq:estimator}
	\begin{gathered}
		f_\eta(\rho)\coloneq \rho\ln N+\ln\sum_t\tpi(t)\,\frac{\tAc_\rho(t)}{N^\rho}
		=\ln\E_P\big[\tAc_\rho(\tS)\big],\\
		\tG(x)\coloneq N\cdot\tF(\tS(x)),\qquad s_\eta\coloneq \frac{f_\eta(1)}/{f_\eta(\half)},
	\end{gathered}
\end{equation}
where $\tG$, the binned block-maximal rank, recurs in the analysis below.
The outer measure stays the true $P$; only the surprisal \emph{values} entering
the rank are binned. This is what makes the error analysis of
\Cref{sec:accuracy} rank-only.

\begin{algobox}{Compute $s$ to certified binning error
$\delta$ (general $P$)}\label{alg:binned}
	\emph{Input:} marginals $\{p_i\}$, target $\delta>0$.\\
	\emph{Output:}
	estimate $\widehat s$ and a certified bound $B\le\delta$ on $|s_\eta-s|$.
	\begin{enumerate}
		\item Choose a bin width $\eta_0\ll\min_i\operatorname{gap}_i$, the smallest
		      nonzero per-coordinate surprisal gap.
		\item Round each $s_i(a)$ strictly up to $\eta_0\Zb$; form the mass and
		      count histograms.
		\item Run \Cref{alg:exact} on both chains to obtain $\tpi,\tF$, and from
		      them $f_\eta(1)$ and $f_\eta(\half)$ by \eqref{eq:estimator}.
		\item Assemble the certificate $B$ on $|s_\eta-s|$ of \Cref{prop:cert} from $(\tpi,\tF,\tAc_\rho)$, in one pass over arrays already in memory.
		\item If $B>\delta$, rerun steps 2--4 at $\eta_1=\eta_0\,\delta/B$,
		      halving instead whenever that prediction under-delivers.
		\item Report $\widehat s$, $B$, and the diagnostics $\Lameta$
		      \eqref{eq:lameta} and the empty-lattice fraction near the mean.
	\end{enumerate}
\end{algobox}

Three of these steps carry a caveat worth stating plainly. The bin width in
step~1 must resolve the per-coordinate structure: coarser than the gaps,
binning collapses distinct levels and the certificate, while still valid,
becomes vacuous. No a priori constant enters that choice --- only the
marginals, which are given. The prediction in step~5 is motivated by the
scaling law's linearity in $\eta$, but nothing rests on it: an over-optimistic
$\eta_1$ simply returns a $B$ still above $\delta$ and the schedule continues,
so correctness never depends on the prediction, only termination speed does.
And the diagnostics of step~6 must be recomputed at \emph{this} $\eta$: the
band $|t|\le\pi/\eta$ widens as $\eta$ shrinks, so a $\Lameta$ computed at a
coarser bin width does not transfer. The empty-lattice fraction is a rigorous
falsifier: an empty grid point near the mean will contradict \Cref{lem:lclt}
whenever $\eps_\tau<0.399$.

\begin{theorem}[precision and cost]\label{thm:precision}
    \emph{(i) Guarantee.} For every bin width $\eta$, the exact-arithmetic estimator satisfies $|s_\eta-s|\le B$, where $B$ is the computable certificate \eqref{eq:certbound} of \Cref{prop:cert}. The floating-point realization incurs in addition the error of \Cref{lem:stable}; under directed-rounding interval arithmetic the resulting enclosure is rigorous end-to-end.
    \emph{(ii) Scaling and cost.} Under the
	scaling-law hypotheses of \Cref{thm:bin}, the estimator satisfies the
	\emph{multiplicative} scaling law
	\begin{equation}\label{eq:Cbound}
		\boxed{\;|s_\eta-s|\ \le\ C(s)\,\eta\,(1+\eps_m),\qquad
			C(s)=\frac{2+s}{\Hbar_{2/3}},\;}
	\end{equation}
    with $\eps_m$ computed as in \Cref{thm:bin}, and $\eps_m=O(m^{-1/2})$ under the regularity conditions of \Cref{cor:scaling}; the returned bound
    tracks it, $B=O(C(s)\,\eta)$.
	Consequently, for uniformly bounded alphabets, the refinement schedule of \Cref{alg:binned} reaches every
	target $\delta$ above the model's atom-spacing floor
	(\Cref{rmk:certified}) at final bin width
	$\eta_{\mathrm{final}}=\Theta(\delta/C(s))$ and total cost
	$\widetilde O(m/\eta_{\mathrm{final}})
		=\widetilde O\big(m\,C(s)/\delta\big)$, the geometric schedule being
	dominated by its final run; the cost is independent of the surprisal
	range.
\end{theorem}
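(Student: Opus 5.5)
This theorem is essentially an assembly result: it collects guarantees proved elsewhere (\Cref{prop:cert}, \Cref{thm:bin}, \Cref{cor:scaling}, \Cref{lem:stable}, \Cref{lem:fftcost}) and adds a cost accounting for the refinement schedule of \Cref{alg:binned}. The plan is to treat (i) and (ii) separately and invoke those results as black boxes.

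For (i), I take the certificate $B$ of \eqref{eq:certbound} as given by \Cref{prop:cert}. That proposition bounds $|s_\eta-s|$ for the exact-arithmetic estimator \eqref{eq:estimator}. Its mechanism is the sandwich of \Cref{lem:sandwich}: the strictly-upward rounding gives $0<\tS-S\le m\eta$, which places the true optimal rank in a computable window around $\tG$. This yields two-sided bounds on each error $e_\rho=f_\eta(\rho)-f(\rho)$, and these propagate to the ratio $s=f(1)/f(\half)$ via
\[
s_\eta-s=\frac{e_1-s\,e_{1/2}}{f(\half)+e_{1/2}}.
\]
So (i) is immediate once \Cref{prop:cert} is in hand. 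The floating-point clause adds \Cref{lem:stable} by the triangle inequality. The interval-arithmetic clause follows because every step of the pipeline (FFTs, sums, $\LSE$s, and the $O(1)$ evaluation of $\tAc_\rho$) admits a directed-rounding enclosure. The certificate is then evaluated on the enclosures, with its monotone ingredients taken at their worst endpoints.

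For (ii), the bound \eqref{eq:Cbound} and the estimate $B=O(C(s)\eta)$ are exactly the content of \Cref{thm:bin} together with the construction in \Cref{prop:cert}, and $\eps_m=O(m^{-1/2})$ is \Cref{cor:scaling}. The remaining work is the cost claim.

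\begin{itemize}
\item Since $B=O(C(s)\eta)$, once $\eta\le c\,\delta/C(s)$ for a suitable constant $c$ we get $B\le\delta$. This holds as long as $\delta$ sits above the atom-spacing floor of \Cref{rmk:certified}, below which $B$ stops tracking $\eta$.
\item Step~5 either jumps directly to a width of order $\delta/C(s)$ or halves. In either case the widths decrease at least geometrically, and the schedule stops at some $\eta_{\mathrm{final}}=\Theta(\delta/C(s))$.
\item It cannot stop much earlier, by the stopping rule. It cannot overshoot by more than a factor of $2$ below the threshold $c\,\delta/C(s)$.
\item By \Cref{lem:fftcost}(ii), each run at width $\eta_j$ costs $\widetilde O(m/\eta_j)$ for bounded alphabets, independent of $R$. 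Assembling the certificate is a single linear pass and does not change this.
\item Because the $\eta_j$ decrease geometrically, $\sum_j m/\eta_j=O(m/\eta_{\mathrm{final}})=\widetilde O(mC(s)/\delta)$.
\end{itemize}

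The main obstacle is the lower side of $\eta_{\mathrm{final}}$, namely ruling out unnecessary extra halvings. I would handle it by using the upper bound $B\le C'\eta$ from \Cref{thm:bin}/\Cref{prop:cert} with an explicit constant, valid at every $\eta$ under the scaling hypotheses. Then the first $\eta$ in the schedule below $\delta/C'$ terminates, so the schedule never drops below $\delta/(2C')$. I would also check that the step-1 requirement $\eta_0\ll\min_i\operatorname{gap}_i$ enters only the floor, not the cost bound.
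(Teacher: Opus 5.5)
Your part (i) and your cost accounting match the paper's proof: the triangle inequality with \Cref{prop:cert} and \Cref{lem:stable}, \Cref{thm:bin} for \eqref{eq:Cbound}, and \Cref{lem:fftcost} per run with a geometric schedule dominated by its final run. The gap is the tracking claim $B=O(C(s)\eta)$. You attribute it to ``\Cref{thm:bin} together with the construction in \Cref{prop:cert}''. You then make $B\le C'\eta$ the load-bearing input for the lower side of $\eta_{\mathrm{final}}$. Neither result contains it. \Cref{thm:bin} bounds the true error $|s_\eta-s|$, and \Cref{prop:cert} bounds that error \emph{by} $B$, so nothing yet bounds the computed $B$ from above. \Cref{alg:binned} stops on $B\le\delta$, not on the true error, so the cost claim rests entirely on this step. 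It is not automatic, because a valid certificate need not be small: the paper's Bernoulli$(0.999)$, $m=256$ run returns $B=15.3$.

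The missing step, which the paper's proof of (ii) supplies, is to bound the certificate's own window quantity $D(t)$ using the machinery of \Cref{thm:bin}.
First take a bin $t$ with $t-m\eta\ge H$. Every $u$ in its window then satisfies $u\ge H$, and $\ln[\tF(u+m\eta)/\tF(u)]$ telescopes into $m$ increments $-\ln(1-h_k)\le h_k/(1-h_k)$. By \Cref{lem:br} and \Cref{cor:rate}, $h_k\le q$ with $q=\eta(1+\eps_r)$. Hence $D(t)\le mq/(1-q)=m\eta(1+\eps_m)$, and the $D$-candidates give $\mathrm{lo}_\rho(t)\ge\tAc_\rho(t)e^{-\rho m\eta(1+\eps_m)}$ and $\mathrm{hi}_\rho(t)\le\tAc_\rho(t)e^{\rho m\eta(1+\eps_m)}$.
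The remaining bins, including every bin with $D(t)=\infty$, have $t<H+m\eta$. There $\mathrm{hi}_\rho(t)\le e^{\rho t}$, and $\tAc_\rho(t)\le(N\cdot\tF(t))^\rho\le e^{\rho t}$ by \eqref{eq:cdfsandwich} and \eqref{eq:rankchernoff}. So these bins add at most $e^{\rho(H+m\eta)}$ to $\sum_t\tpi\,\mathrm{hi}_\rho$, and dropping them from the lower sum loses at most that much of $e^{f_\eta(\rho)}$. By the entropy gap both amounts are exponentially small against $e^{f(\rho)}$, exactly as in the localization step of \Cref{thm:bin}.
This yields $E_\rho\le\rho m\eta(1+\eps_m)+O(e^{-\Theta(m)})$. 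Substituting into \eqref{eq:certbound}, with $E_{1/2}=o(f_\eta(\half))$ and $f_\eta(\half)=\half m\Hbar_{2/3}(1+o(1))$, gives $B\le C(s)\eta(1+\eps_m)(1+o(1))$, after which your halving argument goes through.
Separately, your claim that the schedule ``cannot stop much earlier'' would need a lower bound $B\gtrsim C(s)\eta$, which neither you nor the paper proves. This is harmless, since the cost uses only the lower side of $\eta_{\mathrm{final}}$.
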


The leading constant is $\Theta(1)$ in $m$ and free of the range $R$ for any
\emph{fixed} per-coordinate law, but it is not bounded over families: it grows
as $\Hbar_{2/3}\to0$, and high-confidence advice --- where a single symbol
carries almost all the mass, placing large surprisals in the tail --- can
inflate it by more than an order of magnitude over a mildly biased coordinate.
The cost $\widetilde O(mC(s)/\delta)$ inherits that growth. The certificate
route is indifferent to it: it reads its bound off the run, and a large $C(s)$
appears there as a larger returned $B$ at the same $\eta$. The theorem is
proved in the next section (both scaling law and certificate) together with
\Cref{lem:stable} (floating-point error).

\section{Accuracy of the binned estimator}\label{sec:accuracy}
We bound $|s_\eta-s|$ and prove \Cref{thm:precision}. The strategy: an exact,
CLT-free \emph{sandwich} traps the true and the binned rank of every key in a
\emph{common} window of the binned (lattice) CDF $\tF$. The window is anchored
at the true surprisal and extends (only) upward by the worst-case rounding shift
$m\eta$; nevertheless the resulting error is sign-free since the relative order of the
two ranks inside the window is uncontrolled. The sandwich thus reduces the
binning error to a log-ratio of $\tF$ across that window (\Cref{sec:sandwich}).
We then use our tilting apparatus, equipped with the level-saddle and rate
function, to identify the local growth rate of that lattice CDF
(\Cref{sec:rate}); a value-level estimate of the discrete reversed hazard
caps the rate over $\{t\ge H\}$ through the exact $|\tau_t|\le1$ of
\Cref{cor:rate}, with the Bahadur--Rao edge correction $z_H^{-2}$ among the
computed sub-dominant terms. Together these steps yield the two-sided
worst-case bound \eqref{eq:Cbound}.
Beyond the hypotheses \Cref{thm:bin} carries explicitly, no further regularity is
assumed. The scalars $\sigma_\tau,\beta_\tau,\Lameta$ and the localization gap
enter as \emph{computed inputs} rather than as hypotheses on their growth;
asymptotic readings such as $\eps_m=O(m^{-1/2})$ under \Cref{cor:scaling} appear
as interpretive corollaries, and $R$ enters only through the level count
$O(R/h)$. Where the computed scalars degrade --- near-uniform or very
high-confidence advice --- the bounds report the degradation rather than
excluding the case (\Cref{rmk:certified}).

\subsection{The lattice shift sandwich}\label{sec:sandwich}
Herein, the key step is to compare both ranks against the \emph{binned} CDF $\tF$
(a genuine lattice law) so that the analysis never touches the true,
generically non-lattice $\Fs$ except through one elementary, CLT-free
inequality. A sandwich is genuinely needed because binning is \emph{not}
order-preserving: with $\eta=1$, per-coordinate surprisals $(1.0,\,0.7)$ sum
to $1.7$ and round strictly up to $3$, while $(0.9,\,0.9)$ sum to $1.8$ and
round to $2$ --- bins reorder keys, so no exactness argument is available in
the general case and the window below does real work.

\begin{lemma}[lattice shift sandwich]\label[lemma]{lem:sandwich}
	For every key $x$, the true optimal rank interval
	$J(x)=\big(N\cdot\Fs(S(x)^-),\,N\cdot\Fs(S(x))\big]$ and the binned block
	$\widetilde J(x)=\big(N\cdot\tF(\tS(x)^-),\,N\cdot\tF(\tS(x))\big]$ both lie in the
	common binned window $W(x)\coloneq \big(\,N\cdot\tF(S(x)),\ N\cdot\tF(S(x)+m\eta)\,\big]$.
	Hence, for every $\rho>0$, we have $\big|\ln\tAc_\rho(\tS(x))-\rho\ln G(x)\big|\ \le\ \rho\,\Delta(x)$
	with $\Delta(x)\coloneq \ln\frac{N\cdot\tF(S(x)+m\eta)}{\max\big(N\cdot\tF(S(x)),\,1\big)}$,
	a log-ratio of the lattice $\tF$ floored at rank $1$.
\end{lemma}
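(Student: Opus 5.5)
The plan is to prove the two window inclusions by elementary counting, using only the pointwise rounding bounds $S(y)<\tS(y)\le S(y)+m\eta$ for every key $y$. These follow from strict upward rounding, since each coordinate's offset lies in $(0,\eta]$. Everything is expressed as a count of keys: $N\cdot\Fs(u)=\#\{y:S(y)\le u\}$, $N\cdot\Fs(u^-)=\#\{y:S(y)<u\}$, and likewise $N\cdot\tF(u)=\#\{y:\tS(y)\le u\}$, $N\cdot\tF(u^-)=\#\{y:\tS(y)<u\}$. Each endpoint comparison then becomes a set inclusion between level sets.

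\emph{True interval.} For the lower end I will show $N\cdot\tF(S(x))\le N\cdot\Fs(S(x)^-)$. If $\tS(y)\le S(x)$, then $S(y)<\tS(y)\le S(x)$, so $\{\tS\le S(x)\}\subseteq\{S<S(x)\}$. This is exactly where strictness of the rounding is used. For the upper end, $S(y)\le S(x)$ gives $\tS(y)\le S(y)+m\eta\le S(x)+m\eta$, so $N\cdot\Fs(S(x))\le N\cdot\tF(S(x)+m\eta)$.

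\emph{Binned block.} For the lower end, $\tS(y)\le S(x)$ implies $\tS(y)<\tS(x)$, because $S(x)<\tS(x)$. Hence $N\cdot\tF(S(x))\le N\cdot\tF(\tS(x)^-)$. For the upper end, $\tS(x)\le S(x)+m\eta$ and monotonicity of $\tF$ give $N\cdot\tF(\tS(x))\le N\cdot\tF(S(x)+m\eta)$. Together these give $J(x),\widetilde J(x)\subseteq W(x)$.

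\emph{Log bound.} Write $a\coloneq N\cdot\tF(S(x))$ and $b\coloneq N\cdot\tF(S(x)+m\eta)$. By \Cref{lem:rankcdf}(ii), every optimal $G(x)$ is an integer in $J(x)$. Every rank $r$ averaged in $\tAc_\rho(\tS(x))$ is an integer in $\widetilde J(x)$. Both therefore lie in $\{a+1,\dots,b\}$, so every such integer is at least $\max(a,1)$ and at most $b$.

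Since $r\mapsto r^\rho$ is increasing for $\rho>0$, the mean $\tAc_\rho(\tS(x))$ lies in $[\max(a,1)^\rho,b^\rho]$, and so does $G(x)^\rho$. Taking logarithms, $\ln\tAc_\rho(\tS(x))$ and $\rho\ln G(x)$ both lie in an interval of length $\rho\ln\big(b/\max(a,1)\big)=\rho\,\Delta(x)$. Their difference is therefore at most $\rho\Delta(x)$. The floor at $1$ handles $a=0$, where $\ln a$ would be undefined.

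The main obstacle is bookkeeping rather than analysis: keeping the left limits and the strict versus non-strict inequalities consistent at every endpoint. The strict rounding $S<\tS$ is precisely what makes the lower window endpoint $N\cdot\tF(S(x))$ valid for both intervals simultaneously. I will verify each inclusion directly from the set-counting form to avoid off-by-one errors.
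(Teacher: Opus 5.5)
Your proposal is correct and follows the paper's proof: it uses the same four endpoint comparisons at $t=S(x)$, and strict rounding enters at exactly the same two lower endpoints. It then bounds both $G(x)^\rho$ and $\tAc_\rho(\tS(x))$ between the $\rho$-th powers of the window endpoints, with the lower endpoint floored at $1$. The only difference is packaging: the paper first states two of the comparisons as the global inequality $\tF(t)\le\Fs(t^-)\le\Fs(t)\le\tF(t+m\eta)$ and then specializes to $t=S(x)$, while you verify each inclusion directly as a set inclusion.
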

\begin{proof}
	Strict rounding gives $0<\tS-S\le m\eta$ pointwise, which strengthens the
	CDF comparison: \emph{(a)} if $\tS(y)\le t$ then $S(y)<t$, so
	$\tF(t)\le\Fs(t^-)$; \emph{(b)} if $S(y)\le t$ then $\tS(y)\le t+m\eta$, so
	$\Fs(t)\le\tF(t+m\eta)$. Thus
	\begin{equation}\label{eq:cdfsandwich}
		\tF(t)\ \le\ \Fs(t^-)\ \le\ \Fs(t)\ \le\ \tF(t+m\eta)
		\quad\text{for all }t\quad(\text{CLT-free}).
	\end{equation}
	Four endpoint comparisons, all at $t=S(x)$, now place both intervals in
	$W(x)$: the upper end of $J$ by the right inequality of
	\eqref{eq:cdfsandwich}; the lower end of $J$ by the \emph{left} one,
	$N\cdot\tF(S(x))\le N\cdot\Fs(S(x)^-)$ --- the step that fails under non-strict
	rounding, and the only place strictness enters for the true rank; the upper
	end of $\widetilde J$ since $\tS(x)\le S(x)+m\eta$ and $\tF$ is
	non-decreasing; and the lower end of $\widetilde J$ since $\tS(x)>S(x)$
	forces $\{\tS\le S(x)\}\subseteq\{\tS<\tS(x)\}$, i.e.\
	$N\cdot\tF(S(x))\le N\cdot\tF(\tS(x)^-)$ --- strictness again, and only here. For the
	log bound: $G(x)\in J(x)$ by \Cref{lem:rankcdf}(ii), and
	$\tAc_\rho(\tS(x))$ is an average of $r^\rho$ over $r\in\widetilde J(x)$,
	so both $G(x)^\rho$ and $\tAc_\rho(\tS(x))$ lie between the $\rho$-th
	powers of the endpoints of $W(x)$; since both are ranks --- positive
	integers --- the lower endpoint may be floored at $1$, which keeps
	$\Delta$ finite for \emph{every} key: under strict rounding
	$\tF(\min S)=0$ always (every $\tS$ exceeds $\min S$), so without the
	floor $\Delta$ would be undefined at the minimum-surprisal key. Taking
	logs gives the bound.
\end{proof}

The floor is never active where the analysis uses the sandwich: on
$\{S\ge H\}$ one has $N\cdot\tF(H)\ge1$ as soon as
$H-\min S=\sum_i\big(H_1(p_i)-H_\infty(p_i)\big)\ge m\eta$ --- the left side
is $\Theta(m)$ whenever the advice is non-flat on a constant fraction of
coordinates, the same non-degeneracy \Cref{thm:bin} invokes, while
$m\eta=O(\sqrt m)$ in the scaling regime; the excluded flat case has $P$
uniform on its support, outside standing assumption (iv). The identical
positivity requirement appears in \Cref{lem:br}'s ratio $\tnu(t)/\tF(t)$ and
is discharged by the same inequality.

The size of $\Delta(x)$ is set by how fast $\tF$ grows across a window of width
$m\eta$ at the level $t=S(x)$, i.e.\@ the \emph{rank-growth rate}. Routing through $\tF$
makes the entire accuracy argument lattice-only, and the sole Fourier-side
input is the computed lattice-span exponent $\Lameta$ of \Cref{lem:lclt} ---
a hypothesis of the scaling law only, not of the computable certificate
(\Cref{prop:cert}).

\begin{remark}[sublattices, resonances, and the computed test]\label[remark]{rmk:sublattice}
	If all binned surprisals lie on a common sublattice of span $k\eta$ (a
	genuine \emph{coarse sublattice}), nothing degrades: the leading
	$C(s)\eta$ is set by the rounding resolution regardless of where the binned
	values land, and the local CLT applies at span $k\eta$ instead --- one
	takes the gcd of the binned surprisals to identify the true span and
	computes $\Lameta$ on the matching band. What the gcd test cannot see is
	\emph{nearly}-commensurate advice, and this is neither undetectable nor an
	edge case: it is precisely the event $\Lameta\ll m$, and it is detected by
	computing $\Lameta$. A concrete instance with bounded leakage and
	$\gcd=1$: at $\eta=0.01$, take $90\%$ of coordinates Bernoulli$(0.8)$ and
	$10\%$ Bernoulli$(0.5025)$ --- binned gaps of $138$ and $1$ lattice steps,
	so the summed lattice genuinely has span $\eta$; yet $\Lameta=0.0036$ at
	$m=64$, and $94\%$ of the bulk lattice points carry \emph{exactly} zero
	mass, so the span-$\eta$ local CLT's conclusion is false there. Neither
	$\gcd=1$ nor bounded per-coordinate leakage implies the hypothesis; only
	the computed $\Lameta$ decides it. At the opposite extreme, advice on a
	\emph{proper} coarse sublattice, $h=k\eta$ with $k\ge2$, forces $\Lameta=0$:
	the resonance $t=2\pi/h$ then lies inside the band $|t|\le\pi/\eta$, where
	every $\sin^2(t\Delta_{iab}/2)$ vanishes --- so the scaling law's
	hypothesis fails where \Cref{thm:exact} applies. This is no
	loss: those inputs belong to the exact engine (\Cref{sec:exact}), not the
	binned route. In every case the certificate of \Cref{prop:cert} is
	unaffected; only the scaling law reads $\Lameta$.
\end{remark}

\subsection{The level-saddle apparatus and the rate}\label{sec:rate}
\Cref{lem:sandwich} reduced the binning error to a single quantity: the growth
$\ln[\tF(t+m\eta)/\tF(t)]$ of the lattice CDF across the rounding window. Both
entries are lower-tail probabilities of size $e^{-\Theta(m)}$, so we compute
the ratio by exponential tilting, i.e.\@ reweighting $U$ after which the rare level
$t$ becomes typical, so that the ratio is a bulk quantity governed by a
central limit theorem rather than a tail estimate. Two objects attached to
the uniform CGF $\Lam$ run this programme. For $t<\mu_U$ the \emph{saddle}
$\tau_t<0$ is the unique root of $\Lam'(\tau_t)=t$: the tilted law
$U_{\tau_t}(x)\propto e^{\tau_tS(x)}$ has mean surprisal exactly $t$. As $t$
increases, $\tau_t$ increases with it, which means that $t\mapsto|\tau_t|$ is strictly decreasing.
The rate function $I_U(t)=\sup_\tau(\tau t-\Lam(\tau))$, the convex conjugate
of $\Lam$, prices the tail and satisfies $I_U'(t)=\tau_t$.%
\footnote{For $\tau<0$, Markov's inequality applied to
	$e^{\tau S}$ gives the Chernoff bound $\Fs(t)\le e^{-(\tau t-\Lam(\tau))}$;
	the supremum selects the best exponent and is attained at the saddle,
	$I_U(t)=\tau_tt-\Lam(\tau_t)$. The exponent is tight: the tilting identity
	\eqref{eq:tiltid} below isolates the missing prefactor exactly (the term
	$M(0)$), which the local CLT evaluates to $\Theta(m^{-1/2})$ throughout the
	saddle region, giving $\Fs(t)=e^{-I_U(t)}\,\Theta(m^{-1/2})$.
	Finally, differentiating $I_U(t)=\tau_tt-\Lam(\tau_t)$ and using
	$\Lam'(\tau_t)=t$ cancels the $d\tau_t/dt$ terms (the envelope theorem),
	leaving $I_U'(t)=\tau_t$: the local slope of the tail exponent is the saddle
	itself, already foreshadowing the rank-growth rate of \Cref{lem:rate}.}
A one-line change of measure ($dU/dU_{\tau_t}=e^{-\tau_tS+\Lam(\tau_t)}$)
makes the tail–bulk exchange exact: with $\tau=\tau_t$ and $\bar S=S-t$, centered under
$U_{\tau_t}$,
\begin{equation}\label{eq:tiltid}
	\Fs(t+a)=e^{-I_U(t)}\,M(a),\qquad
	M(a)\coloneq \E_{U_{\tau_t}}\!\big[e^{|\tau_t|\bar S};\,\bar S\le a\big],
\end{equation}
\eqref{eq:tiltid} factors the tail probability into the
large-deviation cost $e^{-I_U(t)}$, which carries the entire exponential
$t$-dependence, and a window term $M(a)$ whose weight $e^{|\tau_t|\bar S}$ is
the likelihood ratio left over after extracting the cost at the anchor $t$.
Since the tilted point masses are nearly flat near the mean while the weight
decays by $e^{-|\tau_t|\eta}$ per lattice step down, $M(a)$ is a
geometric-type sum dominated by the $O(1/|\tau_t|)$ of surprisal just below
$t+a$, which is the origin of the growth rate $|\tau_t|$ below. In the ratio
$\Fs(t+m\eta)/\Fs(t)=M(m\eta)/M(0)$ the cost cancels exactly, leaving a
quantity governed by the tilted law near its mean; the same identity holds
verbatim for the binned chain. The one analytic input is a \emph{lattice}
local CLT for the tilted law at the density level: $M$ probes individual
point masses, so CDF-level control would not suffice (also see \Cref{rmk:cdfbe}).

\begin{lemma}[tilted/lattice local CLT, explicit]\label[lemma]{lem:lclt}
	Under $U_\tau$, $\tau\in[-1,0]$, let $\bar S=\sum_i(Y_i-\E Y_i)$ with
	$Y_i=\widetilde s_i(X_i)$ the binned per-coordinate surprisals: a centered
	lattice variable of span $\eta$ (on a grid offset by the centring), with
	variance $\sigma_\tau^2$ and summed third moment
	$\beta_\tau=\sum_i\E_{U_\tau}|Y_i-\E Y_i|^3$. Write $g_k=p_k/\eta$ with
	$p_k=\PP_{U_\tau}[\bar S=s_k]$, let $\psi_i$ be the characteristic function
	of the centered $Y_i$, and set $\delta_1\coloneq\min\{3\sigma_\tau^2/(2\beta_\tau),\,1/\max_i\sigma_i\}$.
	Define the \emph{lattice-span exponent}
	\begin{equation}\label{eq:lameta}
		\Lameta\coloneq \min_{\delta_1\le|t|\le\pi/\eta}\
		\sum_i\sum_{a<b}4\,p^{(\tau)}_{i,a}\,p^{(\tau)}_{i,b}\,
		\sin^2\!\big(t\,\Delta_{iab}/2\big),
	\end{equation}
	(if $\delta_1\ge\pi/\eta$ the band is empty, $\Lameta\coloneq+\infty$, and the
	high-frequency contribution vanishes), with $\Delta_{iab}$ the binned surprisal gaps of coordinate $i$ and
	$p^{(\tau)}_{i,a}$ the tilted weights. Then
	$\big|\prod_i\psi_i(t)\big|\le e^{-\Lameta/2}$ on the band, and
    $\sup_k\Big|g_k-\tfrac1{\sigma_\tau}\varphi(s_k/\sigma_\tau)\Big|
	\le\frac{\eps_\tau}{\sigma_\tau}$
    with
    $\eps_\tau=\frac{4}{3\pi}\frac{\beta_\tau}{\sigma_\tau^3}
	+q_\tau+\frac{\sigma_\tau}{\eta}\,e^{-\Lameta/2}
	+\sqrt{\tfrac2\pi}\,\Gbar(\sigma_\tau\delta_1)$,
	where $q_\tau$ is the quartic correction (a computed integral over the
	marginals, see \Cref{app:lclt}). Every term of $\eps_\tau$ is a computed
    number; under the regularity conditions of \Cref{cor:scaling} the asymptotic
	reading is	$\eps_\tau=\frac4{3\pi}\beta_\tau/\sigma_\tau^3+\Theta(1/m)+O(e^{-\Theta(m)})
		=\Theta(m^{-1/2})$ (\Cref{cor:scaling}). $\varphi$ denotes the standard
	normal density.
\end{lemma}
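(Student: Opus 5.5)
The plan is the classical Fourier-inversion proof of the lattice local CLT (Esseen/Petrov type), made fully explicit. I split frequencies into a low band $|t|\le\delta_1$ and a high band $\delta_1\le|t|\le\pi/\eta$.

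\emph{Step 1: inversion.} For a lattice variable of span $\eta$, the point masses are recovered by
\[
p_k=\frac{\eta}{2\pi}\int_{-\pi/\eta}^{\pi/\eta}e^{-its_k}\prod_i\psi_i(t)\,dt .
\]
The Gaussian density has the matching representation
\[
\frac1{\sigma_\tau}\varphi(s_k/\sigma_\tau)=\frac1{2\pi}\int_{\Rb}e^{-its_k}e^{-\sigma_\tau^2t^2/2}\,dt .
\]
Dividing the first identity by $\eta$ and subtracting the second, $|g_k-\sigma_\tau^{-1}\varphi(s_k/\sigma_\tau)|$ is at most $\frac1{2\pi}$ times the sum of three pieces:
\begin{itemize}
\item[(a)] $\int_{|t|\le\delta_1}\big|\prod_i\psi_i-e^{-\sigma_\tau^2t^2/2}\big|$;
\item[(b)] $\int_{\delta_1\le|t|\le\pi/\eta}\big|\prod_i\psi_i\big|$;
\item[(c)] $\int_{|t|\ge\delta_1}e^{-\sigma_\tau^2t^2/2}$.
\end{itemize}
After multiplying by $\sigma_\tau$, I match these to the four terms of $\eps_\tau$ one at a time.

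\emph{Step 2: high band and the bound on $\prod_i\psi_i$.} For a single coordinate with tilted weights $p^{(\tau)}_{i,a}$, expand $|\psi_i(t)|^2$ as a double sum:
\[
|\psi_i(t)|^2=\sum_{a,b}p_ap_b\cos(t\Delta_{iab})=1-\sum_{a<b}4p_ap_b\sin^2(t\Delta_{iab}/2).
\]
Using $1-x\le e^{-x}$ and multiplying over $i$ gives $|\prod_i\psi_i(t)|^2\le e^{-\Lameta}$ on the band, which is the first claim. Integrating this bound over a band of length at most $2\pi/\eta$ gives (b)$\le(2\pi/\eta)e^{-\Lameta/2}$. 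Scaled by $\sigma_\tau/(2\pi)$, this is exactly the term $\frac{\sigma_\tau}{\eta}e^{-\Lameta/2}$. If the band is empty, this contribution is zero.

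\emph{Step 3: Gaussian tail.} The tail piece is routine: $\frac{\sigma_\tau}{2\pi}\int_{|t|\ge\delta_1}e^{-\sigma_\tau^2t^2/2}\,dt$ equals $\sqrt{2/\pi}\,\Gbar(\sigma_\tau\delta_1)$ after the substitution $u=\sigma_\tau t$.

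\emph{Step 4: low band (main obstacle).} On $|t|\le\delta_1$ I need a Berry--Esseen-type characteristic-function estimate with explicit constants. The choice $\delta_1\le1/\max_i\sigma_i$ keeps each $\ln\psi_i$ on its principal branch. The choice $\delta_1\le 3\sigma_\tau^2/(2\beta_\tau)$ is the classical Esseen threshold; below it one has
\[
\Big|\prod_i\psi_i-e^{-\sigma_\tau^2t^2/2}\Big|\le \frac{\beta_\tau|t|^3}{6}e^{-\sigma_\tau^2t^2/3}+R(t),
\]
where $R$ collects the fourth-order cumulant and second-order cross terms. The cubic term integrates to $\frac{\beta_\tau}{6}\cdot\frac{2\cdot 9}{\sigma_\tau^4}\cdot\frac1{2\pi}$ (up to normalization). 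After multiplying by $\sigma_\tau$ this gives the leading $\frac{4}{3\pi}\beta_\tau/\sigma_\tau^3$; I will tune the Gaussian exponent in the envelope to land on that constant. The remainder $R$ is not bounded by a closed form: I define $q_\tau\coloneq\frac{\sigma_\tau}{2\pi}\int_{|t|\le\delta_1}R(t)\,dt$ as the computed quartic correction, deferred to \Cref{app:lclt}.

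I expect the obstacle to be obtaining the precise constant $4/(3\pi)$ while keeping everything else absorbed into a genuinely computable $q_\tau$. The issue is that the standard Esseen expansion mixes the cubic term with higher-order pieces, so I must split cleanly: exact third-order Taylor with explicit remainder per coordinate, then a product-difference telescoping. The asymptotic reading is then immediate. Under \Cref{cor:scaling}, $\beta_\tau\asymp\sigma_\tau^2\asymp m$ gives the $m^{-1/2}$ term, $q_\tau=\Theta(1/m)$, $\Lameta=\Theta(m)$ kills the high band, and $\sigma_\tau\delta_1=\Theta(\sqrt m)$ kills the Gaussian tail.
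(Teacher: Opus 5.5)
\textbf{Steps 1--3 are fine; the gap is Step 4.} Your Steps 1--3 and the product bound are exactly the paper's argument: the same three-way split of the inversion formula, $|\psi_i|^2=1-X_i\le e^{-X_i}$, and the exact tail $\sqrt{2/\pi}\,\Gbar(\sigma_\tau\delta_1)$. Step 4, however, carries the whole content of the lemma, and you do not prove it. The concrete claims you make there are also off.

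\begin{itemize}
\item The threshold $3\sigma_\tau^2/(2\beta_\tau)$ is not Esseen's. His lemma gives the $e^{-t^2/3}$ envelope only for $|t|\le\sigma_\tau^2/(4\beta_\tau)$, and with coefficient $16\beta_\tau|t|^3$ rather than $\beta_\tau|t|^3/6$.
\item On the stated band the hypotheses only give $\beta_\tau|t|^3/6\le\sigma_\tau^2t^2/4$. That supports the envelope $e^{-\sigma_\tau^2t^2/4}$, not $e^{-\sigma_\tau^2t^2/3}$.
\item So the exponent is not something to \enquote{tune}. With $e^{-\sigma_\tau^2t^2/3}$ one has $\int_{\Rb}|t|^3e^{-\sigma_\tau^2t^2/3}\,dt=9/\sigma_\tau^4$ (not $18/\sigma_\tau^4$), which would give $3/(4\pi)$. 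It is precisely the forced envelope $e^{-\sigma_\tau^2t^2/4}$, with integral $16/\sigma_\tau^4$, that yields $\tfrac{\sigma_\tau}{2\pi}\cdot\tfrac{\beta_\tau}{6}\cdot\tfrac{16}{\sigma_\tau^4}=\tfrac{4}{3\pi}\beta_\tau/\sigma_\tau^3$.
\item Defining $q_\tau$ as the integral of an unspecified remainder $R$ makes the low-band estimate circular, and your asymptotic claim $q_\tau=\Theta(1/m)$ is then unsupported. If $R$ had to absorb the envelope mismatch $\tfrac{\beta_\tau|t|^3}{6}\big(e^{-\sigma_\tau^2t^2/4}-e^{-\sigma_\tau^2t^2/3}\big)$, its integral would contain $\tfrac{7}{12\pi}\beta_\tau/\sigma_\tau^3=\Theta(m^{-1/2})$, not $\Theta(1/m)$.
\end{itemize}

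\textbf{The missing idea: the coordinatewise principal logarithm.} This is what the second arm of $\delta_1$ is for. Put $w_i=\psi_i-1$. Then $|w_i|\le\sigma_i^2t^2/2\le\tfrac12$ on $|t|\le1/\max_i\sigma_i$, so $|\ln\psi_i-w_i|\le|w_i|^2$. Combined with the Taylor bound $|w_i+\sigma_i^2t^2/2|\le\beta_i|t|^3/6$, this gives
$\ln\prod_i\psi_i=-\sigma_\tau^2t^2/2+R$ with $|R|\le\beta_\tau|t|^3/6+Q(t)$ and $Q=\sum_i|w_i|^2$.
Now $|e^R-1|\le|R|e^{|R|}$ and the first arm of $\delta_1$ give
$\big|\prod_i\psi_i-e^{-\sigma_\tau^2t^2/2}\big|\le(\beta_\tau|t|^3/6+Q)\,e^{Q}e^{-\sigma_\tau^2t^2/4}$.
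The cubic part, integrated over $\Rb$ as an upper bound, gives exactly $\tfrac4{3\pi}\beta_\tau/\sigma_\tau^3$. Everything else is the explicit, nonnegative quantity $q_\tau=\tfrac{\sigma_\tau}{2\pi}\int_{|t|\le\delta_1}\big[\tfrac{\beta_\tau|t|^3}{6}(e^{Q}-1)+Qe^{Q}\big]e^{-\sigma_\tau^2t^2/4}\,dt$.

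\textbf{On your telescoping alternative.} It can be made to work on this band, but not through the usual symmetrization $|\psi_i|^2\le1-\sigma_i^2t^2+\tfrac43\beta_i|t|^3$. That bound loses all decay of the partial products once $|t|>3\sigma_\tau^2/(4\beta_\tau)$, which is half the first arm of $\delta_1$. You would instead need the direct bound $|\psi_i|\le1-\sigma_i^2t^2/2+\beta_i|t|^3/6$, valid since $\sigma_i|t|\le1$. Even then each telescoped term carries an extra factor $e^{\sigma_j^2t^2/2}\le e^{1/2}$. That factor must be split off into a differently defined quartic correction to keep the leading constant at $\tfrac4{3\pi}$.
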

\noindent The full proof is in \Cref{app:lclt}. The intuition: after tilting, the
histogram of $\bar S$ tracks the Gaussian curve pointwise. Each lattice
point carries mass $p_k\approx\eta\,\varphi(s_k/\sigma_\tau)/\sigma_\tau$
with an error uniform in $k$; hence the \emph{relative} error is $O(m^{-1/2})$
at points within $O(\sigma_\tau)$ of the mean, precisely the points
dominating the sums $M(a)$. Three usage notes. \emph{(a)} Only the scaling
law (\Cref{thm:bin}, \Cref{lem:rate,lem:br}) consumes this lemma; the
computable certificate (\Cref{prop:cert}) never touches it. \emph{(b)}
Computing $\eps_\tau$ shows where the lemma has content: on healthy
heterogeneous Bernoulli advice $\eps_\tau>1$ at $m=32$ for \emph{every}
$\eta$ (the high-frequency term dominates), it is $\eta$-dependent at $m=64$
($0.18$ at $\eta=0.1$, $4.9$ at $\eta=10^{-3}$), and only from
$m\gtrsim128$ is it comfortably below one ($0.070$ at $m=256$,
$\eta=10^{-3}$). The asymptotic reading is correct but takes hold only at large
$m$, which is why the certificate route does not rely on it. \emph{(c)} The scalars are quantified over the tilt range $\tau\in[-1,0]$ that the engine
uses; across the tested ensembles the supremum of the Berry--Esseen term
sits at the edge tilt $\tau=-1$ --- the tilt in force at the dominant level
$t=H$ (\Cref{cor:rate}) --- so uniformity over the range costs a single
evaluation.

At exponential order the rate is forced by duality: $\ln\Fs(t)\approx-I_U(t)$
and $I_U'(t)=\tau_t$, so the log-CDF climbs at slope $|\tau_t|$.
Ranks grow by a factor of $e^{|\tau_t|\,\Delta t}$ per surprisal increment $\Delta t$.
The next lemma makes the heuristic exact up to an explicit relative error at finite $m$, uniformly across the rounding window.

\begin{lemma}[rank-growth rate, multiplicative form]\label[lemma]{lem:rate}
	Fix $t$ with $|\tau_t|=\Theta(1)$ (the saddle region, covering the left edge
    $t=H$ where $|\tau_H|=1$). Write $u_t\coloneq m\eta/\sigma_t$ and
	$z_t\coloneq|\tau_t|\sigma_t$. Provided $u_t\le u_0$ for a fixed small
	constant $u_0$ and under the computed small-error condition
	$c_1\eps_\tau\le c_0<1$, for the absolute constant $c_1$ of the proof (hypotheses of this sharp form and of the later \Cref{thm:bin} only)
	\[
		\ln\frac{\tF(t+m\eta)}{\tF(t)}
		=|\tau_t|\,m\eta\,\big(1+O(u_t/z_t)+O(z_t^{-2})+O(\eps_\tau)\big),
	\]
	with $\eps_\tau$ from \Cref{lem:lclt} and the constants depending only on
	$u_0$; the leading coefficient is the local rank-growth rate
	$\theta(t)=|\tau_t|$. Under \Cref{cor:scaling}, $u_t/z_t=\Theta(\eta)$ and
	the relative correction is $O(m^{-1/2})$.
\end{lemma}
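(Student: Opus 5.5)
The plan is to run everything through the tilting identity \eqref{eq:tiltid}, applied to the binned count chain $\tnu$, at the anchor saddle $\tau=\tau_t$. In the ratio the large-deviation cost cancels exactly, so
\[
\frac{\tF(t+m\eta)}{\tF(t)}=\frac{\widetilde M(m\eta)}{\widetilde M(0)},\qquad \widetilde M(a)=\sum_{k:\,s_k\le a}p_k\,e^{|\tau_t|s_k}.
\]
Here $p_k$ are the tilted lattice point masses of $\bar S$ from \Cref{lem:lclt}. The proof then reduces to evaluating two geometric-type lattice sums to relative accuracy. (If the binned chain's tilted mean is not exactly $t$, I absorb the offset, which is at most $m\eta$, into the window variable $a$; this is a shift of order $u_t\sigma_t$.)

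\textbf{Step 1: replace the point masses by the Gaussian.} I substitute $p_k=\eta\,\sigma_t^{-1}\varphi(s_k/\sigma_t)+\eta\,\sigma_t^{-1}r_k$ with $|r_k|\le\eps_\tau$, which is what \Cref{lem:lclt} gives. The weight $e^{|\tau_t|s_k}$ decays geometrically going downward, at rate $|\tau_t|$ per unit surprisal. So both sums are dominated by the $O(1/|\tau_t|)$ range below $a$, where $\varphi(s_k/\sigma_t)=\varphi(0)(1+O((a^2+|\tau_t|^{-2})/\sigma_t^2))$.

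The error term needs care, because $\sum_k \eta e^{|\tau_t|s_k}$ over the whole half-line is $O(e^{|\tau_t|a}/|\tau_t|)$. Its relative contribution is therefore $O(\eps_\tau)$ against the main term, which is of size $\varphi(0)e^{|\tau_t|a}/(|\tau_t|\sigma_t)$. The condition $c_1\eps_\tau\le c_0$ keeps the main term dominant, so the logarithm is well defined.

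\textbf{Step 2: the main sums.} I use the standard Gaussian-times-exponential identity
\[
\int_{-\infty}^{a}\varphi(s/\sigma)\,e^{zs/\sigma}\,ds/\sigma=e^{z^2/2}\,\Phi(a/\sigma-z).
\]
The Mills-ratio expansion $\bar\Phi(z)=\varphi(z)z^{-1}(1-z^{-2}+\dots)$ then gives the log-ratio of the continuous versions:
\[
z\,u_t-\tfrac12u_t^2+\ln\big(1+O(u_t/z_t)+O(z_t^{-2})\big).
\]
Since $z_tu_t=|\tau_t|m\eta$, this is $|\tau_t|m\eta\,(1+O(u_t/z_t))$, plus the $z_t^{-2}$ edge term (Bahadur–Rao). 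The correction from discrete to continuous sums is a Riemann-sum error of relative size $O(|\tau_t|\eta)$, which lies within $O(u_t/z_t)$ because $\eta\le m\eta$.

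\textbf{Step 3: convert additive errors to relative ones.} The remaining point is that the stated error is multiplicative in $|\tau_t|m\eta$, while Step 1 produces additive $O(\eps_\tau)$ perturbations of $\ln\widetilde M(m\eta)$ and of $\ln\widetilde M(0)$ separately. This is the step I expect to be the main obstacle.

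My approach is to write $\widetilde M(m\eta)-\widetilde M(0)$ as a window sum over $0<s_k\le m\eta$ and bound the LCLT error on that window directly. The error there is at most $\eps_\tau\,\eta\,\sigma_t^{-1}\sum_{\text{window}}e^{|\tau_t|s_k}$, which is $\eps_\tau$ times the window's main term. This gives
\[
\ln\big(1+(\widetilde M(m\eta)-\widetilde M(0))/\widetilde M(0)\big)
\]
with a relative perturbation of the increment of order $O(\eps_\tau)$. The denominator $\widetilde M(0)$ carries its own $1+O(\eps_\tau)$ factor, and since the increment ratio is $e^{|\tau_t|m\eta}-1$ to leading order, that factor also enters relatively.

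Expanding $\ln(1+x)$ for $x=O(u_tz_t)$ small, using $u_t\le u_0$ and $|\tau_t|=\Theta(1)$, completes the derivation. The final statement under \Cref{cor:scaling} follows by substituting $\sigma_t=\Theta(\sqrt m)$ and $\eps_\tau=\Theta(m^{-1/2})$.
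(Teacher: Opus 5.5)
Your overall architecture matches the paper's proof: the exact tilting identity, the completed-square surrogate $e^{z_t^2/2}\Gbar(z_t-a/\sigma_t)$, and an LCLT residual sharing the envelope $e^{|\tau_t|a}/z_t$. Your Step~3 handles the residual differently, and validly. The paper bounds the variation $|\theta_h-\theta_0|$ of $\theta_a=\mathcal R(a)/\mathcal G(a)$ by summing the lattice jumps and the drift $-\theta_a\mathcal G'/\mathcal G$ across the window. You instead split off the window increment and bound its residual by $O_{u_0}(\eps_\tau)$ times its own Gaussian main term, using that $\varphi(s/\sigma_t)$ is bounded below by a constant depending on $u_0$ there. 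This is shorter and needs no drift computation. You also treat the offset of the binned tilted mean explicitly, which the paper glosses over. Two steps, however, do not go through as written.

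\emph{The closing expansion.} Your final ``expand $\ln(1+x)$ for $x=O(u_tz_t)$ small'' assumes something the hypotheses do not give. Only $u_t\le u_0$ is small. The product $u_tz_t=|\tau_t|m\eta$ can be as large as $u_0z_t$, which is of order $u_0\sqrt m$ under \Cref{cor:scaling}, and up to $\eps_0\sqrt m$ in the regime of \Cref{thm:bin}. So $x\approx e^{|\tau_t|m\eta}-1$ may be huge. The repair is one line. Let $y$ be the Gaussian increment ratio and $\kappa=(1+\theta_W)/(1+\theta_0)$ the residual factor. Then $|\ln(1+\kappa y)-\ln(1+y)|\le\frac{|\kappa-1|}{1-|\kappa-1|}\cdot\frac{y}{1+y}$, and $\frac{y}{1+y}\le\ln(1+y)$ for every $y\ge0$. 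Hence the error is relative for all window sizes.

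\emph{The Gaussian log-ratio (Step~2).} This is the more substantive problem: Step~2 reintroduces the additive floor the lemma exists to remove. Write $z=z_t$, $u=u_t$. You expand $\Gbar(z-u)$ and $\Gbar(z)$ separately and record $zu-\tfrac12u^2+\ln(1+O(u/z)+O(z^{-2}))$. That leaves an $O(z^{-2})$ term with no factor of $u$. Divided by $|\tau_t|m\eta=zu$, it becomes $O(z^{-3}u^{-1})$, which is unbounded as $\eta\to0$. So it is not the relative $O(z_t^{-2})$ of the statement. You must difference before expanding, as the paper does:
\[
\ln\Gbar(z-u)-\ln\Gbar(z)=\int_{z-u}^{z}\lambda(x)\,dx,\qquad \lambda=\varphi/\Gbar=x+x^{-1}+O(x^{-3}).
\]
Every correction then carries a factor $u$, giving $|\tau_t|m\eta\,\big(1-\tfrac u{2z}+z^{-2}+O(u/z^3+z^{-4})\big)$.

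\emph{The Riemann-sum step.} Relatedly, ``because $\eta\le m\eta$'' does not put the Riemann-sum error inside $O(u_t/z_t)$; that would require $z_t^2\lesssim m$. The correct reason this error is harmless is that $m\eta\in\eta\Zb$. So $0$ and $m\eta$ sit at the same lattice offset, and the geometric lattice factor cancels in the ratio.
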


The proof (\Cref{app:rate}) substitutes \Cref{lem:lclt} into \eqref{eq:tiltid}:
$M(m\eta)$ and $M(0)$ each split into a Gaussian surrogate and a local-CLT
residual sharing the same envelope. The Gaussian ratio is an integral of the
Gaussian hazard rate over an interval of width $m\eta/\sigma_\tau$ and
evaluates to $|\tau_t|m\eta$ up to relative $O(u_t/z_t)+O(z_t^{-2})$; the shared
envelope makes the residual a further \emph{relative} $O(\eps_\tau)$. Bounding numerator and denominator separately would instead leave an additive error
independent of the window width, creating a floor on the certifiable precision that
survives $\eta\to0$. Differencing the logarithms first turns it into a
relative correction, so no such floor remains.

\begin{corollary}[the exact rank-growth rate]\label[corollary]{cor:rate}
	For every non-degenerate product posterior (standing assumption~(iv),
	$\Var_U(S)>0$), $\theta^\star\coloneq |\tau_H|=1$, where
	$\Lam'(\tau_H)=H$. Hence $\theta(t)=|\tau_t|\le\theta^\star=1$ for
	$H\le t\le\mu_U$; beyond $\mu_U$ the rate is covered by the
	outside-saddle estimate of \Cref{lem:br}.
\end{corollary}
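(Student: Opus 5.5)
The plan is to verify directly that $\tau=-1$ solves the saddle equation $\Lam'(\tau)=H$, and then to obtain the inequality from monotonicity of the saddle map. The key observation, already noted in \Cref{sec:stable}, is that the tilted uniform law at $\tau=-1$ is exactly the advice: $U_{-1}^{(i)}(a)\propto e^{-s_i(a)}=p_{i,a}$, with normalizer $\sum_a p_{i,a}=1$, so $U_{-1}=P$.

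\textbf{Step 1: the saddle at $H$ is $\tau=-1$.} Differentiating $\Lam(\tau)=\sum_i\lambda_i(\tau)$ with $\lambda_i(\tau)=\ln\frac1{b_i}\sum_a e^{\tau s_i(a)}$ gives
\[
\lambda_i'(\tau)=\frac{\sum_a s_i(a)e^{\tau s_i(a)}}{\sum_a e^{\tau s_i(a)}}=\E_{U_\tau^{(i)}}[s_i].
\]
At $\tau=-1$ this equals $\sum_a p_{i,a}s_i(a)=H_1(p_i)$. Summing over $i$ gives $\Lam'(-1)=\sum_i H_1(p_i)=H$.

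\textbf{Step 2: uniqueness.} We have $\Lam''(\tau)=\sum_i\Var_{U_\tau^{(i)}}[s_i]$. Under standing assumption (iv), some coordinate has non-constant $s_i$. Since all tilted weights are strictly positive by (ii), that coordinate's tilted variance is positive for every $\tau$. Hence $\Lam$ is strictly convex on all of $\Rb$, $\Lam'$ is strictly increasing, and the root of $\Lam'(\tau_H)=H$ is unique. Therefore $\tau_H=-1$ and $\theta^\star=|\tau_H|=1$.

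\textbf{Step 3: the bound on $[H,\mu_U]$.} Since $\Lam'(0)=\mu_U$ and $\Lam'$ is strictly increasing, the map $t\mapsto\tau_t=(\Lam')^{-1}(t)$ is strictly increasing. For $H\le t\le\mu_U$ this gives $-1=\tau_H\le\tau_t\le\tau_{\mu_U}=0$, so $|\tau_t|\le 1$, with equality only at $t=H$. This also confirms $\Gamma=\mu_U-H\ge0$, so the interval is nonempty. Points beyond $\mu_U$ are outside the statement's scope; they are deferred to \Cref{lem:br}.

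\textbf{Main obstacle.} The step needing the most care is the strict convexity in Step 2, which must hold globally rather than just near $0$. The positivity argument above settles it, using (ii) and (iv) together.
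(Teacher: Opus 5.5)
Your proof is correct and follows essentially the paper's own argument. Both identify $U_{-1}=P$ to get $\Lam'(-1)=H$, invoke strict convexity of $\Lam$ under non-degeneracy for uniqueness of the root, and then use monotonicity of $\Lam'$ between $\Lam'(-1)=H$ and $\Lam'(0)=\mu_U$ to get $|\tau_t|\le1$ on $[H,\mu_U]$. Your version is slightly more explicit than the paper's, which simply asserts strict convexity: you justify it globally via the full-support tilted laws, and you note that equality holds only at $t=H$.
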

\begin{proof}
	$U_\tau^{(i)}(a)\propto e^{\tau s_i(a)}=p_{i,a}^{-\tau}$ gives $U_{-1}=P$ and
	$\Lam'(-1)=\E_P[S]=H$ \emph{unconditionally}; non-degeneracy makes $\Lam$
	strictly convex, so the root of $\Lam'(\tau)=H$ is unique, whence $\tau_H=-1$
	and $\theta^\star=1$. The saddle levels dominating $f(\rho)$
	($\half\le\rho\le1$) satisfy $t_\rho^\star\ge H$, since $\pi$ is centered at
	$H$ and $\Fs^\rho$ is nondecreasing; since $\Lam'$ is increasing with
	$\Lam'(-1)=H$ and $\Lam'(0)=\mu_U$, $|\tau_t|$ decreases from $1$ to $0$
	across $[H,\mu_U]$, which gives the bound there.
\end{proof}

This $\theta^\star=1$ is the asymptotic-equipartition statement~\cite{cover_elements_2006} that $\Gst\approx1/P=e^S$ for typical
keys ($d\ln\Gst/dS=1$ at $S=H$). The tempting Gaussian estimate $\Gamma/\sigma_U^2$
is only a first-order approximation of it: by the mean value theorem we have
$\Gamma/\sigma_U^2=1+O(\Gamma\Lam'''/\sigma_U^4)$, a $\Theta(1)$ correction of either
sign, therefore we use the exact value $1$ instead of the Gaussian proxy.

\subsection{The edge term and the worst-case bound}\label{sec:worst}
\Cref{lem:rate} gave the growth rate at one fixed level; the worst-case
bound needs a bound over all levels a contributing key can occupy. However, only
$\{t\ge H\}$ really matters (\Cref{thm:bin} shows the sub-entropy region carrying
exponentially little weight), and there the exact convexity inequality
$|\tau_t|\le|\tau_H|=1$ of \Cref{cor:rate} caps the rate outright: the worst
key for binning is the \emph{typical} one. The next lemma is a
\emph{value-level} estimate of the single-step discrete reversed hazard
$\tnu(t)/\tF(t)$ --- no derivative of any expansion is taken, and no
monotonicity of a profile is claimed; the supremum over $\{t\ge H\}$ is
located by \Cref{cor:rate} alone. Intuitively, two effects drive the rate: a
large-deviation exponent contributes the slope $|\tau_t|$, and the
saddlepoint prefactor contributes the relative correction $z_t^{-2}$ with
$z_t=|\tau_t|\sigma_t$; at the edge $t=H$ the latter is the Bahadur--Rao
correction~\cite{bahadur_deviations_1960} $z_H^{-2}=1/(m\Varone)$, with $\Varone$ the mean per-coordinate
surprisal variance under $P$ --- \emph{sub}-dominant beside the local-CLT
term $\eps_\tau$ under \Cref{cor:scaling}'s regularity, which is why it
appears here in prose rather than in the boxed bound.

\begin{lemma}[discrete reversed hazard, value-level]\label[lemma]{lem:br}
    For $t\ge H$ in the saddle range $|\tau_t|=\Theta(1)$, and under the computed
	small-error condition $\sqrt{2\pi}\,\eps_\tau\le c_0<1$, the binned law satisfies
	$\tnu(t)/\tF(t)\le|\tau_t|\,\eta\,\big(1+\eps_r(t)\big)$,
    with $\eps_r(t)\le2\sqrt{2\pi}\,\eps_\tau\big(1+O(\eps_\tau)\big) + O\big(z_t^{-2}\big)+O\big(|\tau_t|\eta\big)+O\big(|\widetilde\tau_t-\tau_t|/|\tau_t|\big)$
    where $\eps_\tau$ is from \Cref{lem:lclt}. Every scalar in $\eps_r$ is computed and the constants are absolute; outside the saddle range ($z_t=O(1)$) the ratio is
	$O(\eta/\sigma_t)$, strictly smaller. Consequently, with	$|\tau_t|\le1$ on $[H,\mu_U]$ (\Cref{cor:rate}) and the smaller ratio beyond, we have
	$\sup_{t\ge H}\tnu(t)/\tF(t)\le\eta\,(1+\eps_r)$.
\end{lemma}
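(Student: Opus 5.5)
The plan is to prove the reversed-hazard bound through the tilting identity \eqref{eq:tiltid}, applied verbatim to the binned chain at the binned saddle $\widetilde\tau_t$. Write $\widetilde U_{\widetilde\tau_t}$ for the binned uniform law tilted so that its mean is $t$, and $\bar S=\tS-t$. Then
\[
\frac{\tnu(t)}{\tF(t)}=\frac{\PP_{\widetilde U_{\widetilde\tau_t}}[\bar S=0]}{\sum_{k\ge0}e^{-|\widetilde\tau_t|k\eta}\,\PP_{\widetilde U_{\widetilde\tau_t}}[\bar S=-k\eta]}.
\]
The large-deviation cost $e^{-I}$ cancels exactly, so the claim reduces to comparing one lattice point mass with a geometrically weighted sum of point masses just below the mean. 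This is purely value-level: no derivative of any expansion is taken.

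Next I substitute \Cref{lem:lclt}: each mass is $p_k=\eta\,\sigma^{-1}\varphi(s_k/\sigma)+r_k$ with $|r_k|\le\eta\eps_\tau/\sigma$, where $\sigma=\sigma_{\widetilde\tau_t}$. In the numerator this gives $\eta\sigma^{-1}\bigl(\varphi(0)+\eps_\tau\bigr)=\eta\sigma^{-1}\varphi(0)\bigl(1+\sqrt{2\pi}\,\eps_\tau\bigr)$. In the denominator, the Gaussian part is $\eta\sigma^{-1}\sum_k e^{-|\widetilde\tau|k\eta}\varphi(k\eta/\sigma)$. The geometric weight confines this sum to the $O(1/|\widetilde\tau|)$ of surprisal below $0$, where $\varphi$ varies only by relative $O\bigl((|\widetilde\tau|\sigma)^{-2}\bigr)=O(z_t^{-2})$. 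Replacing $\varphi$ by $\varphi(0)$ and summing the geometric series gives $\varphi(0)/(1-e^{-|\widetilde\tau|\eta})=\varphi(0)\bigl(|\widetilde\tau|\eta\bigr)^{-1}\bigl(1+O(|\widetilde\tau|\eta)\bigr)$.

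The residual in the denominator is where care is needed. Bounding $|r_k|$ uniformly and summing with the geometric weights gives a relative error of $\sqrt{2\pi}\,\eps_\tau$, since the weights are the same for the Gaussian and the residual parts. Dividing and using $1/(1-x)\le1+x(1+O(x))$ under $\sqrt{2\pi}\,\eps_\tau\le c_0<1$, the two $\sqrt{2\pi}\,\eps_\tau$ contributions combine into $2\sqrt{2\pi}\,\eps_\tau(1+O(\eps_\tau))$. Finally, I convert $|\widetilde\tau_t|$ to $|\tau_t|$. Binning shifts each surprisal by at most $\eta$, so the binned CGF derivative differs from $\Lam'$, and the saddle moves. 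I keep this as the computed term $O(|\widetilde\tau_t-\tau_t|/|\tau_t|)$ rather than estimating it.

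\textbf{Main obstacle.} The hardest step is controlling the denominator's tail. The uniform local-CLT error $\eta\eps_\tau/\sigma$ must not be summed over the whole far-below-mean region, where the Gaussian is small but the bound is not. The geometric weight $e^{-|\widetilde\tau|k\eta}$ handles this, since it is summable with total $O(1/(|\widetilde\tau|\eta))$, exactly the Gaussian sum's order. So the residual is relative, provided $z_t\gg1$; this is the saddle-range hypothesis. Outside the saddle range ($z_t=O(1)$), the denominator is instead dominated by a Gaussian window of width $\sigma$, giving $\tF(t)\gtrsim$ const while $\tnu(t)=O(\eta/\sigma)$, hence the ratio $O(\eta/\sigma_t)$. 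The final supremum then follows from \Cref{cor:rate}: $|\tau_t|\le1$ on $[H,\mu_U]$, with the smaller outside-saddle bound beyond $\mu_U$, and positivity of $\tF(H)$ is ensured by the inequality noted after \Cref{lem:sandwich}.
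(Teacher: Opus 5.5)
Your proposal is correct and follows essentially the same route as the paper. Both arguments use the exact tilted ratio identity at the binned saddle and substitute the local CLT into the numerator and into the geometrically weighted denominator, which yields the two $\sqrt{2\pi}\,\eps_\tau$ contributions and the $O(z_t^{-2})$ flatness correction; both then close with $1-e^{-u}\le u$ and keep $|\widetilde\tau_t-\tau_t|/|\tau_t|$ as a computed term. The only cosmetic difference is outside the saddle range: you bound $\tF(t)=\Theta(1)$ directly, whereas the paper goes through the Gaussian hazard $\varphi/\Gbar$; both give the same $O(\eta/\sigma_t)$.
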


The proof (\Cref{app:rate}) reads the single-step ratio directly off the
tilting identity and \Cref{lem:lclt}'s point-mass control --- a value
statement, which is all the local CLT can support: value control of an
approximation gives no derivative control ($m^{-1/2}\sin(mt)$ is
$O(m^{-1/2})$ with derivative $\Theta(m^{1/2})$). The correction
$\eps_r$ carries $z_t^{-2}$ explicitly rather than $O(m^{-1/2})$, so no
implicit $z_t\gg1$ assumption remains and the near-uniform regime is covered
by the same statement.

\begin{theorem}[binning error --- the scaling law]\label[theorem]{thm:bin}
Assume the scaling-law hypotheses: the regime $\sqrt m\,\eta\le\eps_0$ for a fixed small constant $\eps_0$; the lattice-span condition
$\inf_{\tau\in[-1,0]}\Lameta(\tau)\ge\lambda m$ (\Cref{lem:lclt}); the entropy gap
$\Hbar_{1/(1+\rho)}-\Hbar_1\ge g>0$ for $\rho\in[\half,1]$ (advice
non-uniform on a constant fraction of coordinates); $\log(1+\ln N)=o(m)$,
so that Arıkan's additive slack is lower-order; and the computed small-error
condition $\sup_{\tau\in[-1,0]}\sqrt{2\pi}\,\eps_\tau\le c_0<1$ of \Cref{lem:lclt}. Then, with $\theta^\star=1$
(\Cref{cor:rate}) and $f(\half)=\half m\Hbar_{2/3}(1+o(1))$ from
\eqref{eq:arikanscale}, we have
$|s_\eta-s|\ \le\ \frac{2+s}{\Hbar_{2/3}}\,\eta\,(1+\eps_m)$,
which is the scaling law behind \eqref{eq:Cbound}; $\eps_m$ reduces to computed
scalars with absolute constants, through \Cref{lem:lclt}, by the propagation
$\eps_m\le2\sqrt{2\pi}\,\eps_\tau(1+O(\eps_\tau))+\bar z^{-2}+O(\eta)+O\big(\log(1+\ln N)/f_\eta(\half)\big)$ (\Cref{app:rate}). Under the regularity conditions of \Cref{cor:scaling},
$\eps_m=O(m^{-1/2})$. None of these hypotheses is consumed by the computable
certificate (\Cref{prop:cert}), which bounds $|s_\eta-s|$ from the run's
own arrays.
\end{theorem}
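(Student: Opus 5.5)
The plan is to bound the two moment errors $e_\rho\coloneq f_\eta(\rho)-f(\rho)$ for $\rho\in\{\half,1\}$ separately. We then propagate them through the ratio. Writing $s_\eta-s=\frac{f(1)+e_1}{f(\half)+e_{1/2}}-\frac{f(1)}{f(\half)}=\frac{e_1-s\,e_{1/2}}{f(\half)+e_{1/2}}$ gives $|s_\eta-s|\le(|e_1|+s|e_{1/2}|)/(f(\half)(1-|e_{1/2}|/f(\half)))$. The target $C(s)=(2+s)/\Hbar_{2/3}$ then follows from three ingredients:
\begin{itemize}
\item the estimate $|e_\rho|\le\rho\,m\eta(1+\eps)$, which gives $|e_1|+s|e_{1/2}|\le(1+s/2)m\eta(1+\eps)$;
\item the Arıkan scale $f(\half)=\half m\Hbar_{2/3}(1+O(\log(1+\ln N)/m))$ from \eqref{eq:arikanscale}, whose slack is lower order by the hypothesis $\log(1+\ln N)=o(m)$;
\item the regime $m\eta\le\eps_0\sqrt m=o(m)$, which keeps the denominator correction $|e_{1/2}|/f(\half)$ of order $O(\eta)$ and absorbs it into $\eps_m$.
\end{itemize}
So the whole argument reduces to showing $|e_\rho|\le\rho\,m\eta\,(1+\eps_m)$ for both $\rho$.

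For this moment estimate we use \Cref{lem:sandwich} together with the representation $f_\eta(\rho)=\ln\E_P[\tAc_\rho(\tS)]$ of \eqref{eq:estimator} and $f(\rho)=\ln\E_P[G^\rho]$. Pointwise, $\tAc_\rho(\tS(x))\in[e^{-\rho\Delta(x)},e^{\rho\Delta(x)}]\,G(x)^\rho$. Taking $P$-expectations and logarithms gives $|e_\rho|\le\rho\,\sup\Delta$ over the keys carrying the relevant weight. More carefully, we bound $|e_\rho|\le\rho\sup_{t\ge H}\Delta_t+(\text{contribution of }\{S<H\})$, where $\Delta_t=\ln[\tF(t+m\eta)/\tF(t)]$; the floor at rank $1$ is inactive on $\{S\ge H\}$ by the remark after \Cref{lem:sandwich}. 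To control $\Delta_t$ uniformly on $\{t\ge H\}$, we telescope it as $\Delta_t=-\sum_{k=1}^{m}\ln\big(1-\tnu(t+k\eta)/\tF(t+k\eta)\big)$ over the $m$ lattice steps. Each step is bounded by \Cref{lem:br}: $\sup_{t\ge H}\tnu/\tF\le\eta(1+\eps_r)$, and since $\eta$ is small, $-\ln(1-x)\le x(1+O(x))$. This yields $\Delta_t\le m\eta(1+\eps_r)(1+O(\eta))$, which is where $\theta^\star=1$ from \Cref{cor:rate} enters. The hypotheses of \Cref{lem:br} are supplied by the standing conditions: $\eps_\tau$ is small via the lattice-span condition and the computed small-error condition of \Cref{lem:lclt}. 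This produces $\eps_m\le2\sqrt{2\pi}\eps_\tau(1+O(\eps_\tau))+\bar z^{-2}+O(\eta)$, with $\bar z^{-2}$ the worst edge correction $z_H^{-2}=1/(m\Varone)$.

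The sub-entropy region $\{S<H\}$ needs separate treatment, since there the rate $|\tau_t|$ can exceed $1$ and the floor may bind. The plan is to show that its share of both $\E_P[G^\rho]$ and $\E_P[\tAc_\rho(\tS)]$ is exponentially small relative to the total. Via \eqref{eq:rankchernoff} its contribution is at most $\E_P[e^{\rho S};S<H]\le e^{\rho H}\PP_P[S<H]$. The full moment, by contrast, is $e^{f(\rho)}$ with $f(\rho)\approx\rho\sum_iH_{1/(1+\rho)}(p_i)\ge\rho m(\Hbar_1+g)$ by the entropy gap. The ratio is therefore $e^{-\rho g m(1+o(1))}$. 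Even with the crude per-key bound $\Delta\le\ln N$ there, this perturbs $e_\rho$ only by $O(e^{-\Theta(m)}\ln N)$, which is absorbed into $\eps_m$.

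I expect this splitting to be the main obstacle. The pointwise sandwich bound must be applied to a mixture in which one piece carries no uniform $\Delta$ control, so the logarithm of a sum has to be bounded by $\max$ of the pieces plus a $\ln(1+\text{small})$ correction. This must be done for both $f$ and $f_\eta$ simultaneously, keeping every constant absolute. Finally, the asymptotic reading $\eps_m=O(m^{-1/2})$ follows by inserting \Cref{lem:lclt}'s reading $\eps_\tau=\Theta(m^{-1/2})$ and $z_H^{-2}=\Theta(1/m)$ under \Cref{cor:scaling}.
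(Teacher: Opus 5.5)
This is essentially the paper's own proof. It uses the sandwich of \Cref{lem:sandwich}, and telescopes $\Delta$ through the discrete reversed hazard capped by \Cref{lem:br} with $\theta^\star=1$. It then localizes to $\{S\ge H\}$ via \eqref{eq:rankchernoff} and the entropy gap, and closes with the exact ratio identity $s_\eta-s=(e_1-s\,e_{1/2})/f_\eta(\half)$.

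One aside should be dropped. The crude bound $\Delta\le\ln N$ on $\{S<H\}$ does not give an $O(e^{-\Theta(m)}\ln N)$ perturbation: the logarithm of an exponential moment cannot be linearized in $\Delta$, and $e^{-\rho g m}N^{\rho}$ is not small. Instead, cap $\tAc_\rho$ there directly, $\tAc_\rho(\tS)\le(N\tF(\tS))^\rho\le e^{\rho(S+m\eta)}$ using $\tF\le\Fs$. This costs only an extra factor $e^{\rho m\eta}=e^{o(m)}$, which your ``log of a sum'' handling then absorbs.
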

The constant is transparent: a key's log-rank moves by at most
$\theta^\star m\eta=m\eta$ (rate $\times$ window), so each moment exponent
$f(\rho)$ moves by at most $\rho m\eta$; the ratio $s=f(1)/f(\half)$ divides
by $f(\half)=\half m\Hbar_{2/3}(1+o(1))$, the $m$'s cancel, and what survives
is the per-coordinate resolution $\eta$ measured in units of the
per-coordinate entropy $\Hbar_{2/3}$. The factor $2+s$ merely collects the
numerator and denominator contributions ($e_1-s\,e_{1/2}$ in the proof).
(Proof in \Cref{app:rate}.)

\begin{corollary}[interpretive scaling]\label[corollary]{cor:scaling}
	If the alphabets are uniformly bounded, the per-coordinate leakage is bounded,
	$\max_as_i(a)-\min_as_i(a)\le c$ for all $i$, and the range is extensive,
	$R=\Theta(m)$, then uniformly over the tilt range $\tau\in[-1,0]$:
	$\sigma_\tau^2=\Theta(m)$, $\beta_\tau\le c\,\sigma_\tau^2$, hence
	$\beta_\tau/\sigma_\tau^3=O(m^{-1/2})$ (Lyapunov); if moreover
	$\Lameta\ge\lambda m$ and $\log(1/\eta)=o(m)$, then $\eps_\tau=\Theta(m^{-1/2})$
	and $\eps_m=O(m^{-1/2})$. (Proof in \Cref{app:rate}.) This is where the
	familiar regularity conditions live in this paper: as an interpretation of
	computed scalars, not as a gate on any guarantee.
\end{corollary}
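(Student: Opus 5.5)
The plan is to prove the three claims of \Cref{cor:scaling} in order: the variance scale, the Lyapunov ratio, and then the propagation into $\eps_\tau$ and $\eps_m$.

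\emph{Variance scale.} Write the tilted law as a product, $U_\tau=\prod_iU_\tau^{(i)}$ with $U_\tau^{(i)}(a)\propto p_{i,a}^{-\tau}$. Then $\sigma_\tau^2=\sum_i\Var_{U_\tau^{(i)}}(\widetilde s_i)$, and the binned per-coordinate spread is at most $c+\eta$.

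For $\tau\in[-1,0]$ the tilted weights are comparable to uniform ones. Indeed, $p_{i,a}^{-\tau}/p_{i,b}^{-\tau}=e^{\tau(s_i(b)-s_i(a))}\in[e^{-c},e^{c}]$. Together with $b_i\le b_{\max}$, this gives every tilted atom weight at least $e^{-c}/b_{\max}$.

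The upper bound is then immediate: each summand is at most $(c+\eta)^2$, so $\sigma_\tau^2=O(m)$. For the lower bound, each coordinate's variance is at least $\min_a p^{(\tau)}_{i,a}$ times the squared spread, up to a constant. Summing over coordinates gives $\sigma_\tau^2\gtrsim e^{-c}b_{\max}^{-1}\sum_i(\text{spread}_i)^2$. Cauchy--Schwarz bounds this below by $R^2/m$, which is $\Theta(m)$ because $R=\Theta(m)$. One caveat: the binned spread could shrink relative to the true spread. It cannot shrink by more than $\eta$, and in the regime $\sqrt m\,\eta\le\eps_0$ that loss is negligible.

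\emph{Lyapunov ratio.} Pointwise we have $|Y_i-\E Y_i|^3\le(c+\eta)\,|Y_i-\E Y_i|^2$. Summing gives $\beta_\tau\le(c+\eta)\sigma_\tau^2$, which yields the stated $\beta_\tau\le c\,\sigma_\tau^2$ after absorbing $\eta$ into the constant. Hence $\beta_\tau/\sigma_\tau^3\le c/\sigma_\tau=O(m^{-1/2})$. All of these constants depend only on $c$ and $b_{\max}$, so the bounds hold uniformly in $\tau$.

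\emph{Propagation.} Now add the hypotheses $\Lameta\ge\lambda m$ and $\log(1/\eta)=o(m)$, and go through the four terms of $\eps_\tau$ in \Cref{lem:lclt}.
\begin{itemize}
\item The Berry--Esseen term is $\Theta(m^{-1/2})$. It is also bounded \emph{below} by that order: $\beta_\tau\ge\sigma_\tau^2\min_i\E|Y_i-\E Y_i|$, and bounded spread with non-degeneracy makes the ratio $\gtrsim m^{-1/2}$.
\item The quartic term satisfies $q_\tau=O(1/m)$. Here I would rely on the expression in \Cref{app:lclt}, which I expect to be of order $\sum_i\E|Y_i|^4/\sigma_\tau^4$ plus squared third-moment terms.
\item The high-frequency term is $(\sigma_\tau/\eta)e^{-\Lameta/2}\le O(\sqrt m/\eta)\,e^{-\lambda m/2}$. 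This is $e^{-\Theta(m)}$ precisely because $\log(1/\eta)=o(m)$.
\item The Gaussian tail term is $\Gbar(\sigma_\tau\delta_1)$. Since $\delta_1=\min\{3\sigma_\tau^2/(2\beta_\tau),1/\max_i\sigma_i\}\ge\min\{3/(2c),1/c\}$ is bounded below, we get $\sigma_\tau\delta_1=\Theta(\sqrt m)$, and the term is $e^{-\Theta(m)}$.
\end{itemize}
Together these give $\eps_\tau=\Theta(m^{-1/2})$ uniformly in $\tau$.

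Feeding this into the propagation bound of \Cref{thm:bin} handles its remaining terms as follows.
\begin{itemize}
\item $\bar z^{-2}=1/(|\tau|^2\sigma_\tau^2)$ is $O(1/m)$ on the relevant range $|\tau|=\Theta(1)$, and at the edge it is $1/(m\Varone)$.
\item $O(\eta)=O(m^{-1/2})$, because $\sqrt m\,\eta\le\eps_0$.
\item $\log(1+\ln N)/f_\eta(\half)=O(\log m/m)$, because $\ln N\le m\ln b_{\max}$ and $f_\eta(\half)=\Theta(m)$ by \eqref{eq:arikanscale} and the entropy gap.
\end{itemize}
Hence $\eps_m=O(m^{-1/2})$.

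\emph{Main obstacle.} I expect the hardest step to be the bound $q_\tau=O(1/m)$, because it depends on the exact form of the quartic correction in the appendix. A secondary subtlety is the two-sided control of the binned variance; I would handle that through the $\eta$-perturbation argument described under the variance scale.
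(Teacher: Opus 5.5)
Your proposal follows the paper's own argument, the regularity micro-lemma at the end of \Cref{app:rate}: a spread-squared upper bound on the variance; a uniform floor on the tilted weights times the squared spread, with Cauchy--Schwarz $\sum_i r_i^2\ge R^2/m$, for the lower bound; $\E|Y_i-\E Y_i|^3\le(\text{spread})\cdot\Var(Y_i)$ for the Lyapunov ratio; and substitution into \Cref{lem:lclt} and the $\eps_m$ propagation display. Your $c+\eta$ bookkeeping for the binned spread is a refinement the paper skips. The step you flag as the main obstacle is closed by the appendix bound $q_\tau=O(\max_i\sigma_i^2/\sigma_\tau^2)$, which is $O(1/m)$ under your variance estimates.

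The one slip is your lower bound on the Berry--Esseen term. The quantity $\min_i\E|Y_i-\E Y_i|$ is zero as soon as one coordinate is degenerate, and the hypotheses allow this. Argue instead via $\E|X|^3\ge(\E X^2)^{3/2}$ and the power-mean inequality, which give $\beta_\tau\ge\sum_i\sigma_i^3\ge\sigma_\tau^3/\sqrt m$ unconditionally.
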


A tighter, $\rho$-localized constant follows from replacing $\theta^\star$ by
the rate $|\tau_{t_\rho^\star}|<1$ at the saddle level $t_\rho^\star>H$
dominating $\E_P[G^\rho]$. Note also what the bound does \emph{not} contain: binning perturbs only the
ranks, never the weights --- the expectation is still taken under the true
$P$, and $N$ is unchanged --- so \Cref{lem:sandwich} already captures the
entire perturbation, and $C(s)$ needs no additional term for the measure.

\subsection{The computable certificate}\label{sec:cert}
The scaling law explains the error; it does not let a reader holding the
marginals and $m$ \emph{evaluate} it. But every asymptotic ingredient above
is a functional of $\tF$ --- and $\tF$ is an output. The growth rate that
\Cref{lem:br} estimates can be \emph{measured} on the computed $\tF$; the
denominator that \eqref{eq:arikanscale} approximates is, up to $e_{1/2}$,
the computed $f_\eta(\half)$; and the localization threshold can be
dispensed with level by level. The result is an a-posteriori certificate in
the style of validated numerics: an interval, not a formula.

\begin{proposition}[computable certificate]\label[proposition]{prop:cert}
	Run the engine at bin width $\eta$ with strict upward rounding and let
	$(\tpi,\tF,\tAc_\rho)$ be its arrays. Under the truncated merge of
	\Cref{lem:fftcost}(ii) the arrays carry a one-sided mass deficit, and $\tF$ is
	replaced throughout by the computed enclosure
	$[\widehat\tF(t),\,\widehat\tF(t)+\eps_{\mathrm{tr}}e^{|\phi|t}]$ of
	\Cref{app:stable}, with $\eps_{\mathrm{tr}}$ the truncation tolerance defined
	there, its lower end in $\mathrm{lo}_\rho$ and its upper end in
	$\mathrm{hi}_\rho$, which leaves \eqref{eq:bracket} valid.
	For each level $t$ of $\tS$ define
	\begin{equation}\label{eq:lohi}
		\begin{gathered}
			D(t)\coloneq \max_{u\in\{t-m\eta,\dots,t-\eta\}}
			\ln\frac{\tF(u+m\eta)}{\tF(u)}
			\quad(\infty\ \text{if some}\ \tF(u)=0),\\
			\mathrm{lo}_\rho(t)\coloneq \max\Big\{\big(N\cdot\tF(t-m\eta)\big)^\rho,\ 1,\
			\tAc_\rho(t)\,e^{-\rho D(t)}\Big\},\\
			\mathrm{hi}_\rho(t)\coloneq \min\Big\{\big(N\cdot\tF(t+m\eta)\big)^\rho,\
			e^{\rho t},\ \tAc_\rho(t)\,e^{\rho D(t)}\Big\},
		\end{gathered}
	\end{equation}
	dropping the $D$-candidates where $D(t)=\infty$. Then
	\begin{equation}\label{eq:bracket}
		\sum_t\tpi(t)\,\mathrm{lo}_\rho(t)\ \le\ e^{f(\rho)}\ \le\
		\sum_t\tpi(t)\,\mathrm{hi}_\rho(t),
	\end{equation}
	and, with $E_\rho$ the induced computed bound on
	$|e_\rho|=|f_\eta(\rho)-f(\rho)|$,
	\begin{equation}\label{eq:certbound}
		|s_\eta-s|\ \le\ B\coloneq \frac{E_1+s_\eta\,E_{1/2}}{f_\eta(\half)-E_{1/2}},
		\qquad\text{valid whenever }f_\eta(\half)>E_{1/2}.
	\end{equation}
	Every quantity on the right is computed in one pass over arrays already in
	memory. The proposition uses only the engine assumptions and strict
	rounding: no local CLT, no lattice-span condition, no regime constant
	$\eps_0$, no saddlepoint expansion, and no Arıkan scale
	(proofs in \Cref{app:cert}).
\end{proposition}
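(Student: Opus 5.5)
The plan is to prove \eqref{eq:bracket} key by key and then push it through the ratio algebraically. For a key $x$ write $t=\tS(x)$. Strict rounding gives $S(x)\in[t-m\eta,\,t)$. The moment decomposes as
\[
e^{f(\rho)}=\E_P[G^\rho]=\sum_t\sum_{x:\tS(x)=t}P(x)\,G(x)^\rho ,
\]
and $\sum_{x:\tS(x)=t}P(x)=\tpi(t)$. So \eqref{eq:bracket} follows once I show $\mathrm{lo}_\rho(t)\le G(x)^\rho\le\mathrm{hi}_\rho(t)$ for every key on binned level $t$. Each of the three candidates in \eqref{eq:lohi} is a separate valid bound, so it suffices to check them one at a time; taking the max (respectively min) then preserves validity.

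\textbf{The CDF and Chernoff candidates.} These come from the CDF sandwich \eqref{eq:cdfsandwich} and \Cref{lem:rankcdf}(ii).
\begin{itemize}
\item Upper: $G(x)\le N\Fs(S(x))\le N\tF(S(x)+m\eta)\le N\tF(t+m\eta)$, using $S(x)<t$ and monotonicity of $\tF$.
\item Lower: $G(x)>N\Fs(S(x)^-)\ge N\tF(S(x))\ge N\tF(t-m\eta)$, using $S(x)\ge t-m\eta$.
\item Trivially $G(x)\ge1$.
\item By \eqref{eq:rankchernoff}, $G(x)\le\Gst(x)\le e^{S(x)}\le e^{t}$.
\end{itemize}

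\textbf{The $D$-candidates.} These come from \Cref{lem:sandwich}. Both $G(x)$ and every rank in $\widetilde J(x)$ lie in the window $W(x)$. Hence, whenever $\tF(S(x))>0$,
\[
\big|\ln\tAc_\rho(t)-\rho\ln G(x)\big|\le\rho\ln\frac{\tF(S(x)+m\eta)}{\tF(S(x))}.
\]
It remains to bound the right-hand side by $\rho D(t)$. The point is that $\tF$ is a step function constant between lattice points. Let $u\coloneq\eta\lfloor S(x)/\eta\rfloor$. Then $\tF(S(x))=\tF(u)$, and since $m\eta\in\eta\Zb$ also $\tF(S(x)+m\eta)=\tF(u+m\eta)$. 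From $S(x)\in[t-m\eta,t)$ and $t\in\eta\Zb$ we get $u\in\{t-m\eta,\dots,t-\eta\}$. So the log-ratio is at most $D(t)$, giving $\tAc_\rho(t)e^{-\rho D(t)}\le G(x)^\rho\le\tAc_\rho(t)e^{\rho D(t)}$. If some $\tF(u)=0$ on that range, $D(t)=\infty$ and the candidate is dropped, consistently with the statement.

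\textbf{Truncation.} Under the truncated merge I would use only monotonicity. Every candidate is nondecreasing in the $\tF$ values that appear in it, except the $D$-candidates, where $\tF(u)$ sits in the denominator. Substituting the lower enclosure end in $\mathrm{lo}$ and the upper end in $\mathrm{hi}$ keeps the inequalities for the CDF candidates. For $D$, I would evaluate the ratio with the upper end in the numerator and the lower end in the denominator. This inflates $D$ and keeps both $D$-bounds valid. I expect this bookkeeping, together with the floor and tie cases in the lattice-reduction step, to be the main place where care is needed. Taking logarithms of the two sides of \eqref{eq:bracket} and comparing with the computed $f_\eta(\rho)$ gives
\[
E_\rho\coloneq\max\Big\{\ln\textstyle\sum_t\tpi\,\mathrm{hi}_\rho-f_\eta(\rho),\ f_\eta(\rho)-\ln\sum_t\tpi\,\mathrm{lo}_\rho\Big\}\ \ge\ |e_\rho|.
\]

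\textbf{The ratio bound \eqref{eq:certbound}.} This is pure algebra. Write $f=f_\eta-e$. Then
\[
s_\eta-s=\frac{f_\eta(1)f(\half)-f(1)f_\eta(\half)}{f_\eta(\half)f(\half)}=\frac{e_1-s_\eta e_{1/2}}{f(\half)}.
\]
Use $|e_\rho|\le E_\rho$ in the numerator. In the denominator use $f(\half)\ge f_\eta(\half)-E_{1/2}>0$, which is exactly the stated validity condition. This yields $B$. No local CLT, lattice-span or Arıkan input enters anywhere, as claimed.
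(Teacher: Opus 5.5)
Your proposal is correct and follows the paper's proof in \Cref{app:cert} essentially step for step. The shared steps are the localization $S(x)\in[t-m\eta,t)$, cell-constancy of $\tF$ giving $\Delta(x)\le D(t)$, the three per-key candidates summed against the bin masses $\tpi(t)$, and $E_\rho$ read off the bracket. Two variations are worth noting: you use the dual identity $s_\eta-s=(e_1-s_\eta e_{1/2})/f(\half)$ together with $f(\half)\ge f_\eta(\half)-E_{1/2}$, which reaches the same $B$ without the paper's bootstrap through $s\le s_\eta+|s_\eta-s|$; and your truncation bookkeeping for $D$ (upper enclosure end in numerators, lower end in denominators) is more careful than the paper's literal ``lower end in $\mathrm{lo}_\rho$, upper end in $\mathrm{hi}_\rho$'' prescription.
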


The three candidate bounds per level are complementary: the rank window
survives at the extremes of the lattice, the Chernoff cap $e^{\rho t}$
handles the top of the distribution, and the $D$-window is the tight one
through the bulk. Carrying $e_\rho$ as a signed interval rather than the
symmetric $E_\rho$ tightens the result further, the two sides of
\eqref{eq:bracket} being asymmetric. Measured against exact ground truth on
exhaustively enumerable models ($m\le10$): the bracket holds throughout, the
certificate sits within a factor $1.14$--$1.45$ of the leading term
$C(s)\eta$ for $\eta\le0.1$, closing monotonically with $m$ and $\eta$, and
on heterogeneous advice it scales cleanly linearly in $\eta$ across seven
halvings with no floor in sight. Since \eqref{eq:certbound} is a theorem
about exact arithmetic while the arrays are floating point,
it certifies the binning error of the exact estimator. The additional floating-point error is controlled by \Cref{lem:stable}, but the reference implementation does not provide a rigorous end-to-end numerical enclosure. Running the same computation in directed-rounding interval arithmetic (\Cref{sec:stable}) would provide such an enclosure and remove the remaining unnamed polynomial. The exact-arithmetic certificate is valid under the engine assumptions alone, and informative exactly where the level structure is rich enough to measure (\Cref{rmk:certified}).

Having \Cref{prop:cert}, \Cref{thm:bin}, \Cref{lem:stable} and
\Cref{lem:fftcost}, we are now set to prove \Cref{thm:precision}:
\begin{proof}[Proof of \Cref{thm:precision}]
	\emph{(i)}
    By the triangle inequality, $|\widehat s-s|\le|s_\eta-s|_{\text{bin}}+|\widehat s-s_\eta|_{\text{fp}}$; \Cref{prop:cert} bounds the first term by the computed certificate $B$, while \Cref{lem:stable} separately bounds the floating-point term. Under the interval-arithmetic route both contributions can be enclosed rigorously end-to-end. No hypothesis beyond the engine assumptions enters.
	\emph{(ii)} The scaling law \eqref{eq:Cbound} is precisely
	\Cref{thm:bin}. For the tracking claim: under \Cref{thm:bin}'s hypotheses,
	every contributing level has $D(t)\le m\eta(1+\eps_m)$ by \Cref{lem:br}
	and the telescoping argument, while the finitely many levels with
	$D(t)=\infty$ are covered by the rank-window and Chernoff candidates of
	\eqref{eq:lohi} and contribute the exponentially small localization
	remainder; hence $E_\rho\le\rho m\eta(1+\eps_m)+O(e^{-\Theta(m)})$ and
	\eqref{eq:certbound} gives $B\le C(s)\eta(1+\eps_m)(1+o(1))$. A run at
	$\eta$ therefore certifies binning error $\delta$ once $\eta=\Theta(\delta/C(s))$; the
	geometric fallback of \Cref{alg:binned} halves $\eta$ at most
	$O(\log(1/\delta))$ times, for uniformly bounded alphabets, each run costs $\widetilde O(m/\eta)$
	(\Cref{lem:fftcost}), and the final run dominates.
\end{proof}

\begin{remark}[status of the guarantees, and the floors]\label[remark]{rmk:certified}
	Precisely which statement is non-asymptotic.
    \emph{(i)} \Cref{thm:exact} is an exact identity with no discretization error: $s$ is a ratio of finite level sums, and the only numerical error is the floating-point evaluation covered by \Cref{lem:stable}. \enquote{Exact} thus refers to the underlying finite-sum identity, not exact numerical representability.
    \emph{(ii)} \Cref{prop:cert} is non-asymptotic and computable: in exact arithmetic, $B$ is a rigorous bound on $|s_\eta-s|$ under the engine assumptions alone.
    \emph{(iii)}
	\Cref{thm:bin} is an asymptotic scaling law with computed constants ---
	the interpretive statement explaining why refining $\eta$ works and where
    $C(s)$ comes from --- and it is the only place where its hypotheses enter.
	As for floors on the reachable precision: the certificate's floor is the
	\emph{atom spacing} of the binned surprisal law. For heterogeneous advice
	(rank $\Theta(m)$, \Cref{sec:exact}) the spacing is $e^{-\Theta(m)}$ and
	this floor is irrelevant --- in practice the floating-point floor
	$\delta\sim10^{-10}$ of \Cref{app:stable} binds first --- while for
	$k$-symbol i.i.d.\ advice the spacing is $\Theta(m^{1-k})$ and the
	certificate stalls near it; but that is exactly the finite-rank regime
	where the exact engine applies and no certificate is needed. One caution
	in the opposite direction: at finite $m$ the substitution
	$f(\half)\approx\half m\Hbar_{2/3}$ of \eqref{eq:arikanscale} overstates
	the true $f(\half)$, by ${\approx}20\%$ at $m=8$, and it sits in the
	denominator; the $O(\log(1+\ln N)/f_\eta(\half))$ term of $\eps_m$ is what
	keeps $C(s)\eta(1+\eps_m)$ conservative, and the computed certificate,
	which divides by the computed $f_\eta(\half)$, avoids it entirely.
\end{remark}

\begin{remark}[which engine to use, which guarantee to state---a user's map]\label[remark]{rmk:map}
	\emph{(1) Level structure.} If the surprisals are commensurate or, more
	generally, of bounded rank $d$ --- a model hypothesis, like commensurability
	itself (\Cref{sec:exact}) --- the level count is $O(m^d)$: run the exact
	engine (\Cref{alg:exact}). No certificate is needed there, and the scaling law's lattice
	hypothesis would in fact fail ($\Lameta=0$, \Cref{rmk:sublattice}).
	Otherwise, bin and run \Cref{alg:binned}.
	\emph{(2) Certificate.} Each binned run returns $s_\eta$ with the bound $B$
	of \Cref{prop:cert} on $|s_\eta-s|$, valid under its engine assumptions
	alone; refine $\eta$ until $B\le\delta$. Two floors bind: the atom spacing of
	the binned law ($e^{-\Theta(m)}$ for heterogeneous advice) and, in the
	floating-point realization, ${\sim}10^{-10}$ (\Cref{app:stable}). The scaling
	law \eqref{eq:Cbound} explains the observed $B=O(C(s)\eta)$ under its own
	hypotheses but is never load-bearing (\Cref{rmk:certified}).
\end{remark}

\section{Cryptographic applications}\label{sec:apps}
We report the information-theoretic exponent $s=f(1)/f(\half)$ of
\Cref{sec:setup} and nothing beyond it: Montanaro's constant and additive
overhead shift $\ln W_Q$ by $O(1)$, outside the exponent by construction, as
in~\cite{bashiri_super-quadratic_2026}. One rule keeps the classical baseline honest --- \emph{attack seeds or black boxes, not structured keys}. Where the secret is
algebraic (NTRU ternary~\cite{hoffstein_ntru_1998}, ML-KEM binomial~\cite{bos_crystals_2018},
LWE~\cite{regev_lattices_2009}, LPN~\cite{blum_noise-tolerant_2003,alekhnovich_more_2011}, as catalogued
in~\cite{bashiri_super-quadratic_2026}), full-key ordered enumeration is not the optimal classical
attack: decoding can replace it outright~\cite{blum_noise-tolerant_2003}, and lattice reduction
absorbs all but a guessed sub-block of the secret~\cite{matzov_report_2022}, though the
exact efficiency of the latter is contested~\cite{ducas_does_2023}. An exponent
measured against full-key ordered guessing would therefore flatter the quantum
side. We instantiate only where likelihood-ordered enumeration is provably the
optimal classical attack.

\paragraph{(A) Cold-boot advice.} DRAM cells do not lose their contents the
instant power is cut; they decay towards a chip-dependent ground state, so an
attacker who reboots and dumps memory recovers a noisy copy of whatever was
held there~\cite{halderman_lest_2009}. Bits flip independently and asymmetrically,
$\alpha=\PP(0\to1)\ll\beta=\PP(1\to0)$: an exact binary product posterior.
Conditioning on the dumped bit under a uniform prior gives two marginal types,
$\PP(s^*{=}1\mid\widetilde s{=}1)=\tfrac{1-\beta}{1-\beta+\alpha}$ and
$\PP(s^*{=}1\mid\widetilde s{=}0)=\tfrac{\beta}{\beta+1-\alpha}$. We report the
\emph{expected} composition, giving the first type to $\lfloor Wf_1\rceil$ of the
$W$ coordinates, $f_1=\PP(\widetilde s{=}1)=\tfrac{1-\beta+\alpha}{2}$; the
advice is then a function of $(\alpha,\beta,W)$ alone.\footnote{The composition
is the only free convention here. Over the binomial sampling distribution of the
observed bit counts, $s$ lies in $[3.77,4.11]$ for AES-128 at $\beta=0.01$ and
in $[2.70,2.85]$ at $\beta=0.05$ ($95\%$), with the reported value at the mean.}
Two unrelated secrets sit here: the short seeds from which many post-quantum
signatures expand their entire private key~\cite{gauthier-umana_seed_2026}, and
block-cipher keys~\cite{banegas_recovering_2025}. At equal length they carry
identical advice. Neither admits an algebraic shortcut, so ordered enumeration
is the classical attack; and for the ciphers the
corresponding Grover oracles have actually been built and costed, tolerating up
to $40\%$ noise~\cite{gauthier-umana_seed_2026,banegas_recovering_2025}. The cipher rows below follow that work's
channel and targets; its GIFT-128, LowMC-128 and Picnic-L1/L3/L5 instances
are not listed separately because of the shared advice properties. The marginals have rank
$2$, so the finite-rank engine of \Cref{sec:exact} applies and the entries are exact.

\paragraph{(B) Side-channel analysis of a hash PRF} ML-KEM derives its
encryption coin, and ML-DSA its masking token $\rho'$, by expanding a short
secret through SHAKE-256. Recovering that seed breaks the scheme without
touching its lattice problem, and because the seed is a structureless random
string, enumeration is again the only classical route. The attack
of~\cite{maillard_simulating_2024} is a \emph{soft-analytical} side-channel analysis, and its
two-stage structure is what matters for us. First, a per-bit classifier reads
the power trace and guesses each bit of the Keccak-$f$ state independently;
taken alone this is weak, and its quality is what that work tabulates as an
\emph{accuracy}. Second, what constitutes the actual attack is that those crude
guesses are placed on a factor graph whose nodes are the state bits and whose
edges are the permutation's own XOR, AND and NOT relations, and \emph{belief
propagation} passes them around that graph until they agree. Thousands of
algebraic constraints thereby correct one another, and the marginals that come
\emph{out} are sharper than the classifier that went \emph{in} by many orders
of magnitude. So the tabulated accuracies are the input to the attack, not its
output, and cannot serve as our advice; what the attack reports as its result
is the residual key \emph{rank}, the number of candidates a classical attacker
must still enumerate, which we use to calibrate a synthetic posterior model.
We therefore fit a per-bit $\mathrm{Ber}(q)$ advice reproducing the reported rank.
This is a synthetic i.i.d.\ surrogate calibrated to that residual rank, not a measured posterior: the true post-propagation marginals are heterogeneous and,
on a graph with cycles, not exactly independent.
\paragraph{(C) Template attacks.} A template attack~\cite{chari_template_2003} is the
strongest form of profiled side-channel analysis. The adversary first
characterizes a device identical to the target, building for each of the $256$
values a key byte may take a statistical model of the leakage it produces, and
then matches a single trace from the target against those models. Since AES
handles its key one byte at a time in the first round, the leakage is modeled as
sixteen independent per-byte posteriors, the standard assumption here and a
product advice of exactly the form our framework assumes, after which enumerating the candidates in likelihood
order is the established classical procedure~\cite{veyrat-charvillon_optimal_2013}, whose guessing
entropy is the standard yardstick for evaluating such attacks~\cite{tanasescu_tight_2021}. It is also the natural
stress test: a real-valued posterior, simulated from a Gaussian leakage model at
SNR anchors taken from the literature, whose surprisals are not
commensurate, so it must be binned.

\begin{table}[t]\small
	\caption{Computed exponents against the bound of~\cite{bashiri_super-quadratic_2026} (their
		Theorem~4.3).
        Block A is exact (finite-rank engine, \protect\Cref{sec:exact}), Block B is exact
		within the synthetic i.i.d.\ Bernoulli models fitted to the reported residual ranks,
        C is a binned setting.
		The $^\dagger$ symbol marks values below $2$.}\label{tab:apps}
	\begin{tabular}{@{}llcc@{}}
		\toprule
		setting                    & param            & $s$   & BGMN Thm 4.3    \\
		\midrule
		\multicolumn{4}{@{}l}{\emph{A. cold boot, asymmetric channel $\alpha=0.001$}}\\
		AES-128,                   & $\beta=0.01$     & 3.918 & 2.310           \\
		\quad 128-bit seed         & $\beta=0.05$     & 2.763 & 2.187           \\
		                           & $\beta=0.25$     & 2.199 & 1.975$^\dagger$ \\
		AES-256,                   & $\beta=0.01$     & 3.639 & 2.757           \\
		\quad 256-bit seed         & $\beta=0.05$     & 2.713 & 2.387           \\
		                           & $\beta=0.20$     & 2.245 & 2.101           \\
		PRESENT-80                 & $\beta=0.05$     & 2.819 & 1.979$^\dagger$ \\
		                           & $\beta=0.20$     & 2.263 & 1.895$^\dagger$ \\
		\midrule
		\multicolumn{4}{@{}l}{\emph{B. synthetic Bernoulli models fitted to Keccak SASCA residual ranks
        }}\\
		ML-KEM coin (256\,b)       & $2^{69}$         & 3.114 & 2.684           \\
		                           & $2^{128}$        & 2.513 & 2.336           \\
		ML-DSA $\rho'$ (512\,b)    & $2^{69}$         & 3.974 & 3.301           \\
		                           & $2^{128}$        & 3.179 & 2.901           \\
		\midrule
		\multicolumn{4}{@{}l}{\emph{C. AES template attack, $16$ bytes over $256$ values}}\\
		unprotected 8-bit          & $\mathrm{SNR}=1$ & 2.018 & 1.900$^\dagger$ \\
		                           & $\mathrm{SNR}=5$ & 2.032 & 1.904$^\dagger$ \\
		\bottomrule
	\end{tabular}
\end{table}

\paragraph{What the numbers show.} The exponent is substantially
super-quadratic wherever the advice is skewed --- above $2.7$ across the
practical cold-boot range, and up to $3.97$ in the synthetic model calibrated to
the reported ML-DSA residual rank, an exponent of that model rather than of the unknown belief-propagation posterior --- so our results suggest that there are
cryptographic settings where choosing Montanaro's algorithm over plain Grover is
meaningful.
The prior bound~\cite{bashiri_super-quadratic_2026} is loosest exactly there, and in five rows drops below $2$
which certifies nothing at all: its additive penalty $\log(1+\log|K|)$ is
\emph{fixed} while $H_{1/2}$ shrinks under strong leakage, which is also why
the shortest key fares worst. Two smaller effects are worth recording. The
corrected level averages $\Ac_\rho$ of \Cref{lem:rankcdf} raised every entry of
\Cref{tab:apps} above its block-maximal surrogate, so ignoring ties would have
understated the advantage here; the sign is not fixed a priori (\Cref{rmk:ties}).
And the asymmetric channel of block A yields a larger exponent than a
symmetric one at equal $\beta$ ($2.763$ against $2.532$ at $n=128$,
$\beta=0.05$), because it is the more skewed of the two.

Block C is the instructive negative. The exponent will not move much, staying at
$2.011$ even at a residual $0.54$ bits per byte, sharper than a measured
unprotected device~\cite{choudary_efficient_2014}; \Cref{prop:cert} and the
universal $s\ge2$ return
$s\in[2,2.044]$ at $\mathrm{SNR}=1$. Uniform advice at this $N=2^{128}$ already
gives $s=2.0027$, so the advice-driven excess is nearer $0.015$ than $0.018$.
A Gaussian-shaped posterior over a
large alphabet is \emph{smooth}, so $\Hbar_{1/2}/\Hbar_{2/3}\approx1.015$ and
$s\approx2$ across the SNR range we test. What produces a large exponent is
\emph{skew} --- few dominant symbols against a heavy tail --- which
$\mathrm{Ber}(\beta)$ at small $\beta$ has and a smooth $256$-ary posterior
does not. Sharpening the latter only concentrates its mass and drives $s$ back
towards $2$. In these settings it is the shape of the advice, more than the
strength of the leakage, that decides whether Montanaro's algorithm is worth deploying.

\section{Conclusion}\label{sec:conc}
For product-distribution advice --- the situation across a broad range of
cryptographic settings --- the Montanaro speedup exponent is not merely
boundable but computable: exactly for finite-rank advice, and otherwise to
any precision above the model's atom-spacing and floating-point floors, with
a certificate each run assembles from its own arrays. The moments'
dependence on the surprisal law reduces the problem to one-dimensional
convolutions; a single family of exponential tilts makes the computation
provably accurate and numerically stable. The framework turns the asymptotic
``there is a super-quadratic speedup'' into a concrete, per-scheme ``the
guessing-moment exponent is $s$'', with bounded
discretization error, and at the finite parameters cryptology cares about and at
settings where the entropy-based estimate of \cite{bashiri_super-quadratic_2026} is still loose.
\section*{Acknowledgements}
The authors thank Alexander May, Franz J. Schreiber, Ronald de Wolf, Nicolai Kraus, Alexis Bagia, and Max Zhao for fruitful discussions and valuable feedback that helped bring this work to its final form.
The authors used Anthropic’s Claude tools to assist with literature searches, numerical analyses, proofreading, and stylistic refinement. OpenAI’s GPT-5.6 Sol was additionally used to review the manuscript. All AI-generated output was critically evaluated by the authors, who take full responsibility for the content and conclusions of this work.
The authors also gratefully acknowledge the following sources of funding. The group of C.\@ Schubert and J.\@ Seifert is supported by the DFG Priority Program SPP 2514, Berlin Quantum, as well as the BMFTR projects PQ-CCA and PraktiQOM. N.\@ Paskarbeit and M.\@ Margraf are also supported by the BMFTR's PQ-CCA project. M.\@ J.\@ Kramer acknowledges support from the DFG Priority Program SPP 2514; the BMFTR projects PraktiQOM, QuSol, and HYBRID++; the Munich Quantum Valley; Berlin Quantum; and the Clusters of Excellence MATH+ and ML4Q.

\bibliographystyle{ACM-Reference-Format}
\bibliography{refs}
\pagebreak
\appendix
\section{The Arıkan scale of $f(\rho)$}\label[appendix]{app:cscale}
\citeauthor{arikanInequalityGuessingIts1996}'s inequality~\cite{arikanInequalityGuessingIts1996} brackets the optimal guessing moment
$\E_P[G^\rho]$ directly: for the likelihood-ordered $G$ and every $\rho>0$,
\[
	\big(1+\ln N\big)^{-\rho}
	\Big(\sum_xP(x)^{\frac1{1+\rho}}\Big)^{1+\rho}
	\ \le\ \E_P[G^\rho]\ \le\
	\Big(\sum_xP(x)^{\frac1{1+\rho}}\Big)^{1+\rho},
\]
and for a product source
$\ln\big(\sum_xP^{1/(1+\rho)}\big)^{1+\rho}=\rho\sum_iH_{1/(1+\rho)}(p_i)$.
(This is the natural-log original; the base-$2$ restatement in
\cite[Lem.~2.2]{bashiri_super-quadratic_2026} carries the slightly larger divisor
$(1+\log_2|K|)^\rho$.) Since $\rho\ln(1+\ln N)=O(\log m)$ for bounded
alphabets, the bracket is tight to $O(\log m)$:
\begin{align*}
	f(\rho) & =\ln\E_P[G^\rho]
	=\rho\sum_i H_{1/(1+\rho)}(p_i)+O(\log m)                               \\
	        & =\rho\,m\,\Hbar_{1/(1+\rho)}\big(1+O(\tfrac{\log m}{m})\big).
\end{align*}
Hence $f(\half)=\half m\Hbar_{2/3}(1+o(1))$, $f(1)=m\Hbar_{1/2}(1+o(1))$, and
$s=2\Hbar_{1/2}/\Hbar_{2/3}(1+o(1))$; this justifies the denominator scale in
\eqref{eq:Cbound} and the recurrence of the orders $\tfrac23$ and $\half$.

\section{Proof of the local CLT}\label[appendix]{app:lclt}

\paragraph{Reduction to integer summands, and the correct hypothesis.}
Dividing by $\eta$, the binned surprisals are $\widetilde s_i(a)=\eta\,n_{ia}$
with $n_{ia}\in\Zb$, so $\tS/\eta=\sum_in_{i,X_i}$ is a sum of independent,
integer-valued, non-identically distributed variables --- a completely
standard object, whose local CLT takes a hypothesis on the \emph{product} of
characteristic functions, frequency by frequency across the band
$|t|\le\pi/\eta$. A per-coordinate Cram\'er condition is the wrong shape
here, and cannot hold: the band is sized to the span of the \emph{sum},
while each coordinate is itself a lattice variable of coarser span
$k_i\eta$, hence has $|\psi_i|=1$ at its own resonance $2\pi/(k_i\eta)$ ---
strictly inside a band built for somebody else, and moving further inside as
$\eta$ is refined ($k_i=\mathrm{gap}_i/\eta$ grows). The exponent $\Lameta$
of \eqref{eq:lameta} is exactly the product hypothesis, computed. The
argument below is the classical Fourier route for lattice sums (see
\cite{petrov_sums_1975} for the framework); all constants are derived in place.

\begin{proof}[Proof of the product bound in \Cref{lem:lclt}]
	For coordinate $i$ with centered binned values and tilted weights
	$p^{(\tau)}_{i,a}$,
	\[
		|\psi_i(t)|^2=\sum_{a,b}p^{(\tau)}_{i,a}p^{(\tau)}_{i,b}
		\cos\!\big(t(\widetilde s_i(a)-\widetilde s_i(b))\big)
		=1-X_i(t),
	\]
	the imaginary parts cancelling in pairs, where
	$X_i(t)$ is defined as $\sum_{a<b}4p^{(\tau)}_{i,a}p^{(\tau)}_{i,b}
		\sin^2(t\Delta_{iab}/2)$ by $1-\cos\theta=2\sin^2(\theta/2)$. Since
	$-\ln(1-X)\ge X$ on $[0,1)$, we have
	$-2\sum_i\ln|\psi_i(t)|\ge\sum_iX_i(t)$, and hence
	$\big|\prod_i\psi_i(t)\big|\le e^{-\Lambda(t)/2}$ with
	$\Lambda(t)\coloneq \sum_iX_i(t)\ge\Lameta$ on the band. We bound via the
	surrogate $\Lambda$ rather than the exact $-\sum_i\ln|\psi_i|^2$
	deliberately: the exact quantity is not a trigonometric polynomial, so its
	minimum is not FFT-certifiable, and its Lipschitz constant blows up where
	some $|\psi_i|\to0$; $\Lambda$ loses only where individual coordinates are
	strongly non-resonant --- exactly where there is room to spare --- and errs
	in the safe direction.
\end{proof}

\paragraph{Certifying $\Lameta$ in $\widetilde O(m+1/\eta)$.}
With $4\sin^2(\pi uk)=2(1-\cos2\pi uk)$ and $u\coloneq t\eta/2\pi$,
$\Lambda(u)=W-\sum_{\ell\ge1}c_\ell\cos(2\pi uK)$ is a cosine polynomial of degree
$\ell_{\max}=\max_{i,a<b}k_{iab}=O(1/\eta)$. Herein,
$W=\sum_i\sum_{a<b}2p^{(\tau)}_{i,a}p^{(\tau)}_{i,b}$ and
$c_\ell=\sum_{(i,a<b):k_{iab}=\ell}2p^{(\tau)}_{i,a}p^{(\tau)}_{i,b}$.
Accumulating the $c_\ell$ costs $O(\sum_ib_i^2)$; one inverse FFT evaluates
$\Lambda$ on a uniform grid of $N$ points in $O(N\log N)$; and
$|\Lambda'(u)|\le2\pi\sum_\ell\ell c_\ell\le2\pi R/\eta$, so a grid of spacing $h$
certifies $\min_u\Lambda\ge\min_{\mathrm{grid}}\Lambda-2\pi Rh/\eta$.
Relative accuracy $\eps$ on $\Lambda=\Theta(m)$ with $R=\Theta(m)$ needs
$h=\Theta(\eps\eta)$, i.e.\ $N=O(1/(\eps\eta))$: total
$\widetilde O(m+1/\eta)$, below the engine's own cost. In practice the
candidate minima are the resonances $u=j/k_{iab}$; evaluate those exactly
and let the grid certify that nothing lower sits between them.

\begin{proof}[Proof of \Cref{lem:lclt}]
	Let $\psi=\prod_i\psi_i$. The span-$\eta$ inversion
	formula and the Gaussian inversion are
	\begin{gather*}
		p_k=\frac{\eta}{2\pi}\int_{-\pi/\eta}^{\pi/\eta}e^{-its_k}\psi(t)\,dt,\\
		\frac{1}{\sigma_\tau}\varphi(s_k/\sigma_\tau)=\frac{1}{2\pi}\int_{\Rb}
		e^{-its_k}e^{-\sigma_\tau^2t^2/2}\,dt .
	\end{gather*}
	Subtracting the two formulas and using $|e^{-its_k}|=1$,
	\[
		\Big|g_k-\tfrac1{\sigma_\tau}\varphi(s_k/\sigma_\tau)\Big|
		\le\frac{J_1+J_2+J_3}{2\pi},
	\]
	where $J_1=\int_{|t|\le\delta_1}|\psi-e^{-\sigma_\tau^2t^2/2}|\,dt$
	collects the low frequencies,
	$J_2=\int_{\delta_1<|t|\le\pi/\eta}|\psi|\,dt$ the high frequencies, and
	$J_3=\int_{|t|>\delta_1}e^{-\sigma_\tau^2t^2/2}\,dt$ the Gaussian tail,
    with the split point $\delta_1$ of \Cref{lem:lclt}, whose two arms are
	exactly the Taylor radius and the branch radius of the logarithm below.

	\emph{Low frequency $J_1$ (Edgeworth, with the quartic term).} Write
	$w_i(t)\coloneq \psi_i(t)-1$, so that $|w_i|\le\sigma_i^2t^2/2$ and
	$|\psi_i-1+\sigma_i^2t^2/2|\le\beta_i|t|^3/6$ (third-order Taylor of the
    characteristic function). The second arm of $\delta_1$ gives
	$\max_i|w_i(t)|\le\max_i\sigma_i^2t^2/2\le\half$ on $|t|\le\delta_1$ -- the
	branch condition, enforced by construction -- so the
	principal logarithm satisfies
    $|\ln\psi_i-w_i|\le|w_i|^2$. Summing over coordinates,
	\[
		\ln\psi(t)=-\tfrac{\sigma_\tau^2t^2}2+R,\qquad
		|R|\le\tfrac{\beta_\tau|t|^3}6+Q(t),\quad
		Q(t)\coloneq \sum_i|w_i(t)|^2 .
	\]
	The cubic-only bound $|R|\le\beta_\tau|t|^3/6$ that a summed ``Taylor with
	remainder'' would suggest is \emph{false} in general --- it conflates the
	expansion of $\psi_i$ with that of $\ln\psi_i$, and the measured ratio
	$|R|/(\beta_\tau|t|^3/6)$ reaches $1.09$ on i.i.d.\ Bernoulli$(0.8)$ advice
	at $m=256$ --- so the quartic term $Q$ is carried, not hedged.
	The choice of $\delta_1$ forces $\beta_\tau|t|^3/6\le\sigma_\tau^2t^2/4$,
	whence $|e^R-1|\le|R|e^{|R|}$ gives
	\[
		|\psi-e^{-\sigma_\tau^2t^2/2}|
		\le\Big(\tfrac{\beta_\tau|t|^3}6+Q(t)\Big)\,e^{Q(t)}\,
		e^{-\sigma_\tau^2t^2/4}.
	\]
	Hence $\tfrac{\sigma_\tau}{2\pi}J_1\le\tfrac4{3\pi}
		\tfrac{\beta_\tau}{\sigma_\tau^3}+q_\tau$ with the \emph{quartic
		correction}
	\[
		q_\tau\coloneq \frac{\sigma_\tau}{2\pi}\int_{|t|\le\delta_1}
		\Big[\tfrac{\beta_\tau|t|^3}6\big(e^{Q(t)}-1\big)+Q(t)\,e^{Q(t)}\Big]
		e^{-\sigma_\tau^2t^2/4}\,dt,
	\]
	a computed integral: $Q$ is an explicit trigonometric expression in the
	marginals. Since $|w_i|\le\sigma_i^2t^2/2$ pointwise,
	$q_\tau=O\big((\max_i\sigma_i^2)/\sigma_\tau^2\big)$, which is $O(1/m)$ under the
	regularity of \Cref{cor:scaling} --- asymptotically below the cubic term, but numerically
	$14$--$38\%$ of it at $m=256$ across representative ensembles, so an
	explicit $\eps_\tau$ must include it.

	\emph{High frequency $J_2$ (lattice span).} By the product bound,
	$|\psi(t)|\le e^{-\Lameta/2}$ on $\delta_1\le|t|\le\pi/\eta$, so
	$J_2\le\tfrac{2\pi}\eta e^{-\Lameta/2}$ and
	$\tfrac{\sigma_\tau}{2\pi}J_2\le\tfrac{\sigma_\tau}\eta e^{-\Lameta/2}$
	--- a computed number, not an order symbol. Whether it is small is
	precisely the content of the hypothesis $\Lameta\ge\lambda m$; that
	neither $\gcd=1$ nor bounded per-coordinate leakage implies this is shown
	by the counterexample in \Cref{rmk:sublattice}.

	\emph{Gaussian tail $J_3$.} This is the Gaussian's own integral beyond
	$\delta_1$, \emph{not} aliasing: the genuine lattice images beyond $\pm\pi/\eta$
	are already excluded by truncating the inversion integral to one Nyquist
	period. Exactly,
	$J_3=\tfrac{2\sqrt{2\pi}}{\sigma_\tau}\,\Gbar(\sigma_\tau\delta_1)$, so
	$\tfrac{\sigma_\tau}{2\pi}J_3=\sqrt{2/\pi}\,\Gbar(\sigma_\tau\delta_1)$
	--- evaluated, not asserted; with the computed $\delta_1$ it is below
	$10^{-15}$ on the tested ensembles at $m=256$, and it is reported rather
	than absorbed.

	Assembling the three bounds gives the display of \Cref{lem:lclt}.
\end{proof}

\paragraph{Why the local CLT cannot be dropped for large deviations.}
The quantity the accuracy analysis needs is $\ln[\tF(t+m\eta)/\tF(t)]$ for
an increment $m\eta=O(\sqrt m)$ in the scaling regime --- \emph{sublinear}
on the large-deviation scale. Cram\'er-level theory controls $\ln\tF$ only
to $o(m)$, larger than the entire increment; resolving it requires sharp
large deviations of Bahadur--Rao type, and those rest on exactly the
density-level control proved here (cf.\ \Cref{rmk:cdfbe}, the same point
from the other direction).

\begin{remark}[connection to the Fourier sup-density]
	$\sup_k g_k\le M_\tau\coloneq \tfrac1{2\pi}\int|\psi|$ by the triangle inequality on
	the inversion formula; up to the Gaussian tail $J_3$, the high-frequency
	term of $\eps_\tau$ is the $|t|>\delta_1$ part of $M_\tau$ rescaled by
	$\sigma_\tau$, while the low-frequency mass reproduces the Gaussian peak
	$1/(\sqrt{2\pi}\sigma_\tau)$.
\end{remark}

\section{Proofs of the rate lemma and the edge term}\label[appendix]{app:rate}
We first prove the value-level reversed-hazard bound (\Cref{lem:br}), then
the sharp multiplicative equality (\Cref{lem:rate}); the section closes with
the explicit propagation into $\eps_m$ and the regularity micro-lemma behind
\Cref{cor:scaling}.

\begin{proof}[Proof of \Cref{lem:br}]
	Fix $t\ge H$ in the saddle range and write $\tau<0$ for the binned saddle
	at $t$; it differs from $\tau_t$ by the computed shift of the comparison
	paragraph below, absorbed into $\eps_r$. Put $z=|\tau|\sigma_\tau$.
	Tilting the binned count law by $\tau$ and solving the tilt for $\tnu$,
	level by level, gives the exact identity
	\[
		\frac{\tnu(t)}{\tF(t)}
		=\frac{\PP_\tau[\tS=t]}
		{\sum_{j\ge0}e^{\tau j\eta}\,\PP_\tau[\tS=t-j\eta]}.
	\]
	\Cref{lem:lclt} controls every point mass appearing here:
	$\PP_\tau[\tS=y]=\tfrac\eta{\sigma_\tau}\big(\varphi(\bar y/\sigma_\tau)
		+r(y)\big)$ with $|r|\le\eps_\tau$ and $\bar y$ the centered argument. At
	the saddle the anchor sits at the tilted mean, so across the displacements
	that matter the Gaussian factor satisfies
	$\varphi(\bar y/\sigma_\tau)\ge\varphi(0)(1-O(z^{-2}))$, and the
	point-mass control is a \emph{relative}
	$\sqrt{2\pi}\,\eps_\tau(1+o(1))$ there --- the $\sqrt{2\pi}$ being
	$1/\varphi(0)$. In the denominator the weight $e^{\tau j\eta}$ confines
	the sum to displacements $j\eta=O(1/|\tau|)$, across which the Gaussian
	factor moves by $1-O(z^{-2})$; hence
	\[
		\sum_{j\ge0}e^{\tau j\eta}\,\PP_\tau[\tS=t-j\eta]
		=\frac{\eta\,\varphi(\bar t/\sigma_\tau)/\sigma_\tau}
		{1-e^{-|\tau|\eta}}
		\big(1+O(\sqrt{2\pi}\,\eps_\tau)+O(z^{-2})\big),
	\]
	and the ratio obeys
	\[
		\frac{\tnu(t)}{\tF(t)}
		\le\big(1-e^{-|\tau|\eta}\big)
		\big(1+2\sqrt{2\pi}\,\eps_\tau(1+O(\eps_\tau))+O(z^{-2})\big)
		\le|\tau|\,\eta\,\big(1+\eps_r(t)\big),
	\]
	using the exact $1-e^{-u}\le u$; the $O(|\tau|\eta)$ term of $\eps_r$
	absorbs the second-order geometric correction and the saddle shift.
	Everything here is a value statement: nothing is differentiated, the
	geometric factor is exact, and the correction carries $z^{-2}$ itself, so
	no implicit $z\gg1$ enters and the near-uniform regime is covered.

	\emph{Coverage of $t\ge H$ outside the saddle range.} For $z_t=O(1)$ ---
	$t$ approaching and beyond $\mu_U$, where \Cref{lem:rate}'s hypothesis
	lapses --- write the single-step ratio through the Gaussian hazard
	$\lambda=\varphi/\Gbar$: its model value is
	$(\eta/\sigma_t)\,\lambda(z_t)\,(1+O(\eps_\tau))$, and $\lambda(z_t)=O(1)$
    there, so the ratio is $O(\eta/\sigma_t)$ --- an order	below the $|\tau_t|\eta$ of the saddle range. The supremum over
	$\{t\ge H\}$ is therefore governed by the saddle range, where the exact
	$|\tau_t|\le1$ of \Cref{cor:rate} closes the statement.
\end{proof}

For the binned chain, $\tLam(\tau)=\ln\E_U[e^{\tau\tS}]$ satisfies
$|\tLam(\tau)-\Lam(\tau)|\le|\tau|m\eta$ (since $0\le\tS-S<m\eta$); for the
derivative, the change-of-measure contribution is
$O(|\tau|\sigma_\tau\,\mathrm{sd}(\tS-S))$ with $\mathrm{sd}(\tS-S)=O(\eta\sqrt m)$,
so dividing by $\Lam''=\sigma_\tau^2$ the saddles obey
$|\widetilde\tau_t-\tau_t|=O\big(m\eta/\sigma_\tau^2+\eta\sqrt m/\sigma_\tau\big)$,
a computed scalar --- $O(\eta)$ under \Cref{cor:scaling} --- absorbed into
$\eps_r$ below.

\begin{proof}[Proof of \Cref{lem:rate}]
	Write $\tau=\tau_t<0$, $\sigma=\sigma_\tau$, $z=|\tau|\sigma$, $u=h/\sigma$,
	$h=m\eta$. We prove the one-sided multiplicative statement for the binned $\tF$;
	the same proof applies verbatim to the true $\Fs$. By the exact tilting identity
	\eqref{eq:tiltid}, $\ln[\tF(t+h)/\tF(t)]=\ln[M(h)/M(0)]$. Split
	$M=\mathcal G+\mathcal R$ into Gaussian surrogate and local-CLT residual,
	\begin{gather*}
		\mathcal G(a)=\int_{-\infty}^a e^{|\tau|s}\tfrac1\sigma\varphi(s/\sigma)\,ds,\\
		\mathcal R(a)=\sum_{s_k\le a}e^{|\tau|s_k}r(s_k)\,\eta,\qquad
		|r|\le\eps_\tau/\sigma;
	\end{gather*}
	the lattice-sum-versus-integral mismatch on the smooth part is itself
	$O(\eps_\tau)$ (Euler--Maclaurin) and is absorbed into $r$. Completing the square,
	\begin{equation}\label{eq:Gsurrogate}
		\mathcal G(a)=e^{z^2/2}\,\Gbar\!\Big(z-\tfrac a\sigma\Big),
	\end{equation}
    and Mills' bounds give $\mathcal G(a)=\tfrac{e^{|\tau|a}}{z}\mu_a$ with
	$\mu_a=\Theta_{u_0}(1)$ uniformly for $0\le a\le h$ (only boundedness is
	needed here; the sharp expansion belongs to the \emph{ratio}
	$\mathcal G(h)/\mathcal G(0)$, derived in Part A). The residual has the \emph{same} envelope:
	\begin{equation}\label{eq:moment}
		\begin{gathered}
			\Big|\int_{-\infty}^a e^{|\tau|s}r(s)\,ds\Big|
			\le\frac{\eps_\tau}{\sigma}\cdot\frac{e^{|\tau|a}}{|\tau|}
			=\frac{\eps_\tau e^{|\tau|a}}{z},\\
			\text{so}\quad M(a)=\mathcal G(a)(1+\theta_a),\qquad
			|\theta_a|\le c_1\eps_\tau .
		\end{gathered}
	\end{equation}
	Here and below, $c_1$ denotes an absolute constant.

	\emph{Part A (Gaussian part).} With $u=h/\sigma$,
	\[
		A\coloneq \ln\frac{\mathcal G(h)}{\mathcal G(0)}=\ln\Gbar(z-u)-\ln\Gbar(z)
		=\int_{z-u}^{z}\lambda(x)\,dx,
	\]
	with $\lambda\coloneq \varphi/\Gbar$, using $\tfrac d{dx}\ln\Gbar(x)=-\lambda(x)$. The Gaussian hazard rate expands as
	$\lambda(x)=x+\tfrac1x+O(x^{-3})$, so
	$A=(zu-\tfrac{u^2}2)+\ln\tfrac{z}{z-u}+O(uz^{-3})$. Now $zu=|\tau|h$ and
	$\ln\tfrac z{z-u}=\tfrac uz+\tfrac{u^2}{2z^2}+O(u^3/z^3)$; dividing by
	$|\tau|h=zu$,
	\begin{equation}\label{eq:Arel}
        A=|\tau|h\Big(1-\tfrac{u}{2z}+\tfrac1{z^2}+O\big(\tfrac u{z^3}+\tfrac1{z^4}\big)\Big)
		=|\tau|h\,\big(1+O(u/z)+O(z^{-2})\big),
	\end{equation}
	valid for $u\le u_0$; under \Cref{cor:scaling}, $\tfrac uz=\Theta(\eta)$ and $\tfrac1{z^2}=\Theta(1/m)$. Differencing	$\ln\Gbar$ \emph{before} expanding is what removes the additive floor: a
	two-sided estimate would leave a standalone Mills correction $+2/z^2$,
	whereas here it survives only inside the relative factor, as the $+1/z^2$
	Bahadur--Rao correction of \Cref{lem:br}.

	\emph{Part B (residual, by discrete difference).} Put
	$\theta_a\coloneq \mathcal R(a)/\mathcal G(a)$, so
	$\ln[M(h)/M(0)]=A+B$ with $B\coloneq \ln\tfrac{1+\theta_h}{1+\theta_0}$; the
	per-point bound \eqref{eq:moment} gives $|\theta_a|\le c_1\eps_\tau$. The
	improvement comes from bounding the \emph{variation} across the window ---
	but $\mathcal R$ is a lattice sum, so $a\mapsto\theta_a$ is a step-drift
	function, not differentiable, and its variation has two sources, bounded
	separately. \emph{(i) Jumps.} At a lattice point $s_k$ in the window,
	$\theta$ jumps by
    $|\Delta\theta|\le e^{|\tau|s_k}|r(s_k)|\,\eta/\mathcal G(s_k)$; with
	$\mathcal G(s_k)=e^{|\tau|s_k}\mu_{s_k}/z$, $\mu_{s_k}=\Theta_{u_0}(1)$
	and $|r|\le\eps_\tau/\sigma$, this is
	$|\Delta\theta|\le c_1|\tau|\,\eps_\tau\,\eta$; summing over the
	$h/\eta=m$ lattice points in the window contributes
	$\le c_1|\tau|h\,\eps_\tau$ in total.

    \emph{(ii) Drift.} Between lattice points we have
	$\theta_a'=-\theta_a\,\mathcal G'(a)/\mathcal G(a)$ because $\mathcal R$ is constant --- a legitimate
	differentiation of the \emph{explicit} Gaussian integral $\mathcal G$, not
	of an approximation carrying a remainder. Also
    $\mathcal G'/\mathcal G=|\tau|\big(1+O(u/z)+O(z^{-2})\big)$ with
	$|\theta_a|\le c_1\eps_\tau$, so the drift across the window contributes
	$\le c_1|\tau|h\,\eps_\tau$. Adding the two, we get
    \begin{equation}\label{eq:Bvar}
        |\theta_h-\theta_0|\le c_1\,|\tau|\,h\,\eps_\tau
		=|\tau|h\cdot O(\eps_\tau),
    \end{equation}
    so $|B|=|\tau|h\cdot O(\eps_\tau)$; the bound is window-proportional because
    $\mathcal R$ and $\mathcal G$ share the envelope $e^{|\tau|a}/z$. Combining,
    $\ln[M(h)/M(0)]=A+B=|\tau_t|m\eta\big(1+O(u/z)+O(z^{-2})+O(\eps_\tau)\big)$.
\end{proof}

\begin{remark}[CDF-level Berry--Esseen is insufficient]\label[remark]{rmk:cdfbe}
	Integrating \eqref{eq:tiltid} by parts against the tilted CDF and using a
	uniform Berry--Esseen bound $\mathrm{BE}_\tau=O(m^{-1/2})$ produces a boundary
	term $e^{|\tau|a}(\Fs^\tau-\Phi)(a)$ of size $e^{|\tau|a}\mathrm{BE}_\tau$,
	whereas the answer is $\mathcal G(a)\approx e^{|\tau|a}/(z\sqrt{2\pi})$; their
	ratio is $z\,\mathrm{BE}_\tau=\Theta(1)$. The CDF error is the same order as the
	answer, so the density-level statement is genuinely required --- and it is
	supplied by the \emph{lattice} \Cref{lem:lclt} applied to $\tF$.
\end{remark}

\paragraph{Propagation into $\eps_m$, explicit.}
Collecting the chain behind \Cref{thm:bin}: \Cref{lem:br} caps each single-step
hazard by $q=\eta(1+\eps_r)$, so telescoping gives $\Delta(x)\le mq/(1-q)$ on
$\{S\ge H\}$ with $\eps_r\le2\sqrt{2\pi}\,\eps_\tau(1+O(\eps_\tau))+\bar z^{-2}+O(\eta)$, where
$\bar z^{-2}\coloneq\sup\{z_t^{-2}:t\ge H,\ |\tau_t|\ge\tau_0\}$ runs over the saddle range
only --- the bulk $|\tau_t|<\tau_0$ is covered by \Cref{lem:br}'s $O(\eta/\sigma_t)$
estimate, and at the edge $z_H^{-2}=1/(m\Varone)$ --- the localization adds an exponentially small
remainder, and the exact ratio identity divides by $f_\eta(\half)$. Hence
$\eps_m\ \le\ 2\sqrt{2\pi}\,\eps_\tau\big(1+O(\eps_\tau)\big)
+\bar z^{-2}+O(\eta)+O\big(\log(1+\ln N)/f_\eta(\half)\big)+O\big(e^{-\Theta(m)}\big)$,
with $\eps_\tau$ the computed constant of \Cref{lem:lclt} and
$\bar z^{-2}$ read off the computed saddle scalars. This display is what makes
$\eps_m$ genuinely evaluable: an explicit $\eps_\tau$ without the
propagation constant, or the removal of $\bar z^{-2}$ from the boxed bound
without either, would each leave $\eps_m$ \emph{less} explicit than before.
Numerically the first term dominates: at $m=256$ on heterogeneous Bernoulli
advice, $\eps_\tau\approx0.070$ against $1/(m\Varone)\approx0.013$.

\paragraph{The regularity micro-lemma behind \Cref{cor:scaling}.}
Write $r_i\coloneq \max_as_i(a)-\min_as_i(a)\le c$, with bounded alphabets, and let
$Y_i$ be the tilted per-coordinate surprisals, $\tau\in[-1,0]$.
\emph{Upper:} $\Var_\tau(Y_i)\le r_i^2/4$, so
$\sigma_\tau^2\le\tfrac14\sum_ir_i^2\le\tfrac14cR=O(m)$. \emph{Lower:} for
$|\tau|\le1$ the tilted weights satisfy
$p^{(\tau)}_{i,a}\ge e^{-c}\min_ap_{i,a}$, bounded away from $0$ and $1$ by
constants depending only on $c$ and the alphabet sizes, hence
$\Var_\tau(Y_i)\ge c'r_i^2$; Cauchy--Schwarz gives
$\sum_ir_i^2\ge R^2/m=\Theta(m)$, so $\sigma_\tau^2=\Theta(m)$.
\emph{Third moment:} $|Y_i-\E Y_i|\le r_i\le c$, so
$\E|Y_i-\E Y_i|^3\le c\Var_\tau(Y_i)$, whence $\beta_\tau\le c\,\sigma_\tau^2$
and $\beta_\tau/\sigma_\tau^3\le c/\sigma_\tau=O(m^{-1/2})$, uniformly over
the tilt range. Note what is \emph{not} needed: any per-coordinate variance
floor --- coordinates with $r_i=0$ contribute nothing to either sum and are
harmless. Substituting into \Cref{lem:lclt} and the propagation display
above yields \Cref{cor:scaling}.\qed

\subsection{Proof of the scaling law}
\begin{proof}[Proof of \Cref{thm:bin}]
	Split $\E_P[\tG^\rho]$ at the level $H$: above it the rate is controlled
	(\Cref{lem:br}), below it nothing needs to be controlled.

	\emph{On $\{S\ge H\}$.} By \Cref{lem:sandwich},
	$G^\rho e^{-\rho\Delta}\le\tAc_\rho(\tS)\le G^\rho e^{\rho\Delta}$
pointwise. Telescope in steps of $\eta$ over the window's lattice points
	$t_k\ge H$ and put $h_k\coloneq\tnu(t_k)/\tF(t_k)$. Since
	$\tF(t_{k-1})=\tF(t_k)-\tnu(t_k)$, each increment is
	$\ln[\tF(t_k)/\tF(t_{k-1})]=-\ln(1-h_k)\le h_k/(1-h_k)$, which exceeds
	$h_k$: the inclusive hazard may not be summed directly. By \Cref{lem:br}
	each $h_k$ is at most $|\tau_{t_k}|\eta(1+\eps_r)$, and the \emph{exact}
	convexity bound $|\tau_t|\le\theta^\star=1$ of \Cref{cor:rate} on $[H,\mu_U]$,
	with the bulk beyond covered by \Cref{lem:br}, caps them all by
	$q\coloneq\eta(1+\eps_r)<1$, giving $\Delta(x)\le mq/(1-q)=(1+\eps_m)m\eta$
	on $\{S\ge H\}$ with $1+\eps_m=(1+\eps_r)/(1-q)$. Only the pointwise rate enters; the sharp equality of \Cref{lem:rate} is a stronger statement that this
	route does not need. Hence
	\begin{multline*}
		e^{-\rho(1+\eps_m)m\eta}\,\E_P[G^\rho;S\ge H]\\
		\le\E_P[\tAc_\rho(\tS);S\ge H]
		\le e^{\rho(1+\eps_m)m\eta}\,e^{f(\rho)}.
	\end{multline*}

	\emph{Localization: $\{S<H\}$ carries no weight.} The restriction to
	$\{S\ge H\}$ is essential --- below $H$ the rate $|\tau_t|$ blows up and
	$\Delta$ with it --- but the region contributes only exponentially little to
	either moment: guessing moments are dominated by \emph{harder}-than-typical
	keys (the R\'enyi level $\Hbar_{1/(1+\rho)}>\Hbar_1$), while every key below
	the entropy level has rank at most $e^{H}$.
	By \eqref{eq:rankchernoff} and $\tF\le\Fs$ \eqref{eq:cdfsandwich},
	$\tAc_\rho(\tS(x))\le\big(N\cdot\tF(\tS(x))\big)^\rho\le e^{\rho\tS(x)}
		\le e^{\rho(S(x)+m\eta)}$ and
	$G(x)\le\Gst(x)\le e^{S(x)}$, so on $\{S<H\}$,
	\[
		\E_P[\tAc_\rho(\tS);S<H]\le e^{\rho(H+m\eta)},\qquad
		\E_P[G^\rho;S<H]\le e^{\rho H}.
	\]
	Meanwhile, by \eqref{eq:arikanscale},
	\[
		f(\rho)-\rho H=\rho m\big(\Hbar_{1/(1+\rho)}-\Hbar_1\big)+O(\log(1+\ln N)),
	\]
	which, by strict monotonicity of $\alpha\mapsto H_\alpha$ and
	$1/(1+\rho)<1$, is extensive whenever the advice is non-uniform on a
	constant fraction of coordinates (otherwise $s\to2$ and the exponent
	ceases to be a meaningful target). Combining the two
	regions,
	$|f_\eta(\rho)-f(\rho)|\le\rho m\eta(1+\eps_m)+O(e^{-\Theta(m)})$, with the
	exponentially small remainder absorbed into $\eps_m$.

	\emph{Propagation --- exact.} Write $f_\eta(\rho)=f(\rho)+e_\rho$ with
	$|e_\rho|\le\rho m\eta(1+\eps_m)$. Since $f(1)=s\,f(\half)$ by the
	definition of $s$,
	\[
		s_\eta-s=\frac{f(1)+e_1}{f(\half)+e_{1/2}}-\frac{f(1)}{f(\half)}
		=\frac{e_1-s\,e_{1/2}}{f_\eta(\half)}
		\qquad\text{exactly, no remainder,}
	\]
	so $|s_\eta-s|\le\frac{m\eta(1+s/2)}{f_\eta(\half)}(1+\eps_m)$;
	substituting the asymptotic scale
	$f_\eta(\half)=\half m\Hbar_{2/3}(1+o(1))$ of \eqref{eq:arikanscale}
	gives the display. The identity is worth noting in its own right: the
	certificate route divides by the \emph{computed} $f_\eta(\half)$ instead
	of the asymptotic scale, and thereby avoids \eqref{eq:arikanscale}
	altogether (\Cref{prop:cert}).
\end{proof}

\section{Proofs for the computable certificate}\label[appendix]{app:cert}

\begin{proof}[Proof of the bracket \eqref{eq:bracket}]
	Strict rounding gives $0<\tS-S\le m\eta$ pointwise.
	\emph{Localisation within a bin.} If $\tS(x)=t$ then $S(x)<t$ by
	strictness and $S(x)\ge\tS(x)-m\eta=t-m\eta$, so $S(x)\in[t-m\eta,t)$.
	Since $\tF$ is a right-continuous step function on $\eta\Zb$ and
	$m\eta\in\eta\Zb$, the map $u\mapsto\ln[\tF(u+m\eta)/\tF(u)]$ is constant
	on each cell $[\ell,\ell+\eta)$, and the cells meeting $[t-m\eta,t)$ are
	indexed by $\ell\in\{t-m\eta,\dots,t-\eta\}$; hence
	\begin{equation}\label{eq:binloc}
		\Delta(x)=\ln\frac{\tF(S(x)+m\eta)}{\tF(S(x))}\ \le\ D(t)
		\qquad\text{whenever }\tS(x)=t.
	\end{equation}
	\emph{Three bounds on the rank of a key in bin $t$.} Let $\tS(x)=t$.
	\emph{(1) Rank window:} the endpoint comparisons of \Cref{lem:sandwich}
	give $N\cdot\tF(S(x))<G(x)\le N\cdot\tF(S(x)+m\eta)$; with $S(x)\in[t-m\eta,t)$ and
	$\tF$ non-decreasing, $N\cdot\tF(t-m\eta)<G(x)\le N\cdot\tF(t+m\eta)$, and $G$
	being an integer also $G(x)\ge1$. \emph{(2) Chernoff cap:}
	$G(x)\le N\cdot\Fs(S(x))\le e^{S(x)}<e^{t}$ by \eqref{eq:rankchernoff} ---
	exact, with no distributional input. \emph{(3) Window ratio:} if
	$D(t)<\infty$, \Cref{lem:sandwich} places $G(x)^\rho$ and
	$\tAc_\rho(\tS(x))=\tAc_\rho(t)$ within $e^{\pm\rho\Delta(x)}$ of each
	other, and $\Delta(x)\le D(t)$ by \eqref{eq:binloc}. Combining,
	$G(x)^\rho\in[\mathrm{lo}_\rho(t),\mathrm{hi}_\rho(t)]$ for every $x$ in
	bin $t$; multiplying by $P(x)$, summing over the bin --- whose total
	$P$-mass is $\tpi(t)$ --- and then over $t$ gives \eqref{eq:bracket}. No
	threshold, no tail term, no localization argument: every level is handled
	by whichever of the three bounds is tightest there.
\end{proof}

\begin{proof}[Proof of the certificate \eqref{eq:certbound}]
	From \eqref{eq:bracket}, $f(\rho)$ lies in a computed interval, and
	$E_\rho\coloneq \max\big\{f_\eta(\rho)-\ln\sum_t\tpi\,\mathrm{lo}_\rho,\
		\ln\sum_t\tpi\,\mathrm{hi}_\rho-f_\eta(\rho)\big\}\ \ge\ |e_\rho|$. By the
	exact ratio identity in the proof of \Cref{thm:bin},
	$s_\eta-s=(e_1-s\,e_{1/2})/f_\eta(\half)$. Write $x\coloneq |s_\eta-s|$; since
	$s\le s_\eta+x$ and $s>0$,
	\[
		x\ \le\ \frac{E_1+s\,E_{1/2}}{f_\eta(\half)}
		\ \le\ \frac{E_1+(s_\eta+x)\,E_{1/2}}{f_\eta(\half)},
	\]
	so $x\,\big(f_\eta(\half)-E_{1/2}\big)\le E_1+s_\eta E_{1/2}$, and
	\eqref{eq:certbound} follows whenever $f_\eta(\half)>E_{1/2}$. The
	denominator $f_\eta(\half)-E_{1/2}\le f(\half)$ is exactly the lower bound
	on the true denominator that an a-priori route would need Arıkan to
	supply --- obtained here by \emph{computing} $f_\eta(\half)$, which is why
	no Arıkan input, and hence no asymptotic ingredient, appears.
\end{proof}

\paragraph{Which hypotheses the certificate uses.} The inputs to the two
proofs are: product structure (so the engine exists and $\tS$ is a sum of
per-coordinate roundings); the positivity convention; strict upward rounding
(a definition); \Cref{lem:sandwich}, proved by counting; the exact
\eqref{eq:rankchernoff}; and arithmetic. The regime constant $\eps_0$ is
never used --- no expansion is taken --- and no local CLT is invoked.
\emph{Validity is not usefulness}: the certificate stays valid where the
scaling law fails, but it is not guaranteed to be informative --- at
Bernoulli$(0.999)$ advice, $m=256$, it returns $15.3$: rigorous and useless.
The correct formulation is \emph{valid under the engine assumptions alone;
	informative exactly where the level structure is rich enough to measure}.
In the families where it is vacuous the level count is polynomial, so the
finite-rank exact engine covers them (\Cref{sec:exact}); and where the
scaling law's hypotheses fail --- the near-commensurate ensembles of
\Cref{rmk:sublattice} --- the certificate degrades \emph{visibly}, by
returning a larger number, rather than silently ceasing to hold.

\section{Proofs of the engine lemmas}\label[appendix]{app:stable}

\subsection{Optimality and the level averages}\label[appendix]{app:levelavg}

\begin{proof}[Proof of \Cref{lem:rankcdf}(i)]
	$\E_P[G^\rho]=\sum_{r=1}^N r^\rho\,P(G^{-1}(r))$ (reindexed by rank)
	depends on $G$ only through the sequence $\big(P(G^{-1}(r))\big)_r$, which is a
	permutation of the fixed multiset $\{P(x)\}$. Since $(r^\rho)_r$ is
	strictly increasing, the rearrangement inequality
	(Hardy--Littlewood--P\'olya) shows the pairing is minimized exactly when
	$r\mapsto P(G^{-1}(r))$ is non-increasing, i.e.\ when $G$ is
	likelihood-ordered; every likelihood-ordered $G$ produces the \emph{same}
	non-increasing sequence (within-block permutations do not change it), so
	the minimum is attained and shared. Note the minimizer does not depend on
	$\rho$: one and the same enumeration is optimal for the classical
	($\rho=1$) and the Montanaro ($\rho=\half$) moment.
\end{proof}

\begin{proof}[Proof of the bracket \eqref{eq:Abracket}]
	Write $b=N\cdot\Fs(t)$, $a=N\cdot\Fs(t^-)$, $n=b-a\ge1$. \emph{Upper:} every
	$r\in J_t$ has $r\le b$, so $\Ac_\rho(t)\le b^\rho$, with equality iff
	$J_t=\{b\}$, i.e.\ $n=1$. \emph{Lower:} $x\mapsto x^\rho$ is increasing,
	so the block sum is a right Riemann sum,
	$\sum_{r=a+1}^{b}r^\rho\ge\int_a^b x^\rho\,dx=(b^{\rho+1}-a^{\rho+1})/(\rho+1)$;
	with the occupancy ratio $\beta\coloneq n/b\in(0,1]$,
	\[
		\Ac_\rho(t)\ \ge\ b^\rho\,\frac{1-(1-\beta)^{\rho+1}}{(\rho+1)\,\beta}
		\ \ge\ \frac{b^\rho}{1+\rho},
	\]
	the last step because $\beta\mapsto1-(1-\beta)^{c}$ with $c\ge1$ is
	concave, vanishes at $\beta=0$ and equals $1$ at $\beta=1$, hence dominates
	the chord $\beta$. The middle expression shows
	$\Ac_\rho(t)=b^\rho\,(1+O(\beta))$: the block-maximal surrogate is the
	thin-level limit, and $\beta=\nu(t)/\Fs(t)$ is precisely the discrete
	reversed-hazard ratio studied by \Cref{lem:br}.
\end{proof}

\paragraph{Certified evaluation of $\Ac_\rho$.}\label{par:Aeval}
For $\rho=1$ the block sum is closed-form,
$\sum_{r=a+1}^{b}r=[b(b+1)-a(a+1)]/2$. For $\rho=\half$, since $\sqrt\cdot$
is increasing and concave, the midpoint rule (tangent above the graph, so
$\int_{r-1/2}^{r+1/2}f\le f(r)$) and the trapezoid rule (chord below it, so
$\int_{r-1}^{r}f\ge[f(r-1)+f(r)]/2$), summed and telescoped over the block,
give the elementary two-sided closed-form bracket
\[
	\tfrac23\big[(b+\tfrac12)^{3/2}-(a+\tfrac12)^{3/2}\big]
	\ \le\ \sum_{r=a+1}^{b}\sqrt r\ \le\
	\tfrac23\big[b^{3/2}-a^{3/2}\big]+\tfrac{\sqrt b-\sqrt a}{2},
\]
of relative width $O(B^{-3/2})$ in the block size $B=n$. The evaluation rule at
target relative precision $\eps$: blocks of $B\le B_\star\coloneq\eps^{-2/3}$ are
summed directly, the ranks being integer-exact in IEEE double for $\eps\ge\umach$
and the accumulation sitting inside \Cref{lem:stable}'s budget, and larger blocks use the bracket, whose
relative width is then at most $\eps$, within the $\umach\poly$ budget of
\Cref{lem:stable}; per-level cost is $O(B_\star)$, a constant in $m$. (In
practice the crossover is immaterial: levels are either thin --- heterogeneous
advice, generically $n(t)=O(1)$ --- or astronomically above $B_\star$, as for
i.i.d.\ advice.) Carrying the bracket as
an interval instead makes the evaluation certified by construction. Numerically, the
ratio $N^{-\rho}\Ac_\rho(t)$ is formed in the occupancy ratio $\beta$ via
\texttt{expm1}/\texttt{log1p} --- e.g.\
$b^{3/2}-a^{3/2}=b^{3/2}\cdot\big(-\!\operatorname{expm1}\!\big(\tfrac32
	\operatorname{log1p}(-\beta)\big)\big)$ --- never as a difference of close
large powers, which would lose $\approx\log_{10}(1/\beta)$ digits on thin
levels, the generic case for heterogeneous advice; in this form the
evaluation is accurate to a few ulps uniformly in $\beta$ and is covered by
the error budget of \Cref{lem:stable}.

\begin{remark}[cost of the block-maximal shortcut]\label[remark]{rmk:ties}
	Evaluating the moments at the block-maximal rank --- $(N\cdot\Fs(t))^\rho$ in
	place of $\Ac_\rho(t)$ --- is tempting, and \eqref{eq:Abracket} prices it:
	the surrogate exponents satisfy $0\le f^\square(\rho)-f(\rho)\le\ln(1+\rho)$,
	and the exact ratio identity in the proof of \Cref{thm:bin} turns this into
	\[
		|s^\square-s|\ \le\ \frac{\ln2+s\ln\tfrac32}{f(\half)}
		\ =\ \frac{2\big(\ln2+s\ln\tfrac32\big)}{m\,\Hbar_{2/3}}\,\big(1+o(1)\big),
	\]
	the second form by \eqref{eq:arikanscale} (a remark-level asymptotic
	reading, not a certificate). The sign of the actual deviation is not fixed
	a priori: per level the shortcut inflates both moments, and the aggregate
	can move either way. The bias is $O(1/m)$ \emph{independently of $\eta$}
	--- an additive accuracy floor of exactly the kind \Cref{thm:precision}
	excludes --- and equal-surprisal blocks are the generic structure wherever
	the exact engine applies (i.i.d.\ advice over $k$ symbols has
	$\poly(m)$ levels for $k^m$ keys), which is why the estimator
	\eqref{eq:estimator} carries $\tAc_\rho$ throughout.
\end{remark}

\subsection{Merge cost and floating-point stability}

\begin{proof}[Proof of \Cref{lem:fftcost}]
	\emph{(i).} As in the proof of \Cref{thm:exact}: supports of disjoint blocks
	add, so the outputs at any one tree level occupy at most $R/\eta+2m$ grid
	points in total, and sizing each FFT to its output support makes a level cost
	$O\big((R/\eta+m)\log(R/\eta+m)\big)$; over $\lceil\log_2m\rceil$ levels this
	is $\widetilde O(R/\eta+m)$. A depth-first traversal keeps one pending sibling
	buffer per level, so space is dominated by the root vector,
	$\Theta(R/\eta+m)$, and the cumulative sum is linear.

	\emph{(ii).} Write $\lambda_i$ for the per-coordinate log-MGF of the chain
	being merged and $\Lambda_B=\sum_{i\in B}\lambda_i$ for a block $B$ of $g=|B|$
	coordinates; by \Cref{lem:tiltconv} the exact law at node $B$ is the tilted
	partial convolution $\mu_B^{(\phi)}$, with analytically known mean
	$M_B=\Lambda_B'(\phi)$.

	\emph{Per-coordinate curvature is $O(1)$, uniformly in the range.} Set
	$\lambda_0\coloneq \gamma_0/2$ and let $|\xi|\le\lambda_0$. In both chains the tilted
	atom weights decay exponentially in the binned surprisal: writing
	$u_a\coloneq \widetilde s_i(a)-\widetilde s_i(a_{\min})\ge0$ with $a_{\min}$ the
	minimal-surprisal symbol, the count chain has tilted atom weights
	$\omega_a\propto e^{(\phi+\xi)\widetilde s_i(a)}$ and the mass chain
	$\omega_a=p_{i,a}\,e^{(\phi+\xi)\widetilde s_i(a)}$ with
	$e^{-\widetilde s_i(a)}\le p_{i,a}\le e^{\eta}e^{-\widetilde s_i(a)}$, so in
	either case
	\[
		\omega_a\ \le\ e^{\eta}\,e^{-\gamma u_a}\,\omega_{a_{\min}},\qquad
		\gamma\coloneq \begin{cases}-(\phi+\xi)  & \text{(count chain)}, \\
		             1-(\phi+\xi) & \text{(mass chain)}.\end{cases}
	\]
	Centring at $\widetilde s_i(a_{\min})$,
	\begin{align*}
		\lambda_i''(\phi+\xi)=\Var_w(\widetilde s_i)
		 & \le\sum_a u_a^2\,e^{\eta}e^{-\gamma u_a}      \\
		 & \le e^{\eta}\,b_i\sup_{u\ge0}u^2e^{-\gamma u}
		=\frac{4b_i\,e^{\eta-2}}{\gamma^{2}},
	\end{align*}
	and $\gamma\ge\gamma_0-\lambda_0=\gamma_0/2$ on the whole perturbation window,
	so $\lambda_i''\le\bar v_i\coloneq 16b_i\,e^{\eta-2}/\gamma_0^2=O(b_i/\gamma_0^2)$ ---
	per-coordinate; no uniform alphabet bound enters. No bound
	on the surprisal range enters either: a symbol of surprisal $\Sigma$ carries tilted
	weight $e^{-\gamma\Sigma}$. This negative effective tilt is what makes the
	bound range-independent, and it is exactly what part~(i), which works
	untilted, lacks. The working tilts are the saddle-centring ones: count chain at
	$\phi=-1/(1+\rho)\in[-\tfrac23,-\half]$ and mass chain at
	$\phi=\rho/(1+\rho)\in[\tfrac13,\half]$ for $\rho\in\{\half,1\}$ (the mass
	chain at $\phi$ is the count chain at $\phi-1$ up to per-bin factors
	$e^{O(\eta)}$, by \eqref{eq:pinu}), so $\gamma\ge\half$ throughout, and we
	may take $\gamma_0=\half$.

	\emph{Block concentration.}
	For $|\xi|\le\lambda_0$ we have the exact identity
	\begin{align*}
		\E_{\mu_B^{(\phi)}}\big[e^{\xi(V-M_B)}\big]
		 & =e^{\Lambda_B(\phi+\xi)-\Lambda_B(\phi)-\xi\Lambda_B'(\phi)} \\
		 & =e^{\xi^2\Lambda_B''(\phi+\zeta)/2}\ \le\ e^{\xi^2V_B/2},
		\qquad V_B\coloneq \sum_{i\in B}\bar v_i,
	\end{align*}
	for some $\zeta$ between $0$ and $\xi$ (Taylor--Lagrange on the smooth convex
	$\Lambda_B$), with $V_B=O\big(\sum_{i\in B}b_i\big)=O(B_B)$. Markov and
	optimizing $\xi\in[0,\lambda_0]$ give the two-regime
	bound
	\[
		\PP_{\mu_B^{(\phi)}}\big(|V-M_B|\ge u\big)
		\le2\exp\Big(-\min\Big(\tfrac{u^2}{2V_B},\ \tfrac{\lambda_0u}{2}\Big)\Big),
	\]
	so with $L_\eps\coloneq \ln(2/\eps_{\mathrm{tr}})$ the window $W_B\coloneq [M_B\pm u_B]$,
	$u_B\coloneq \sqrt{2V_BL_\eps}+2L_\eps/\lambda_0$, carries all but
	$\eps_{\mathrm{tr}}$ of $\mu_B^{(\phi)}$; it spans
	$O\big((\sqrt{V_BL_\eps}+L_\eps)/\eta\big)=\widetilde{O}(\sqrt{B_B}/\eta)$ grid
	points, which is the window claimed in part~(ii).

	\emph{Truncated merge and error.} Evaluate the tree depth-first: at each node
	convolve the two truncated child vectors by an FFT sized to the sum of their
	supports, then zero every entry outside $W_B$, taking
	$\eps_{\mathrm{tr}}\coloneq (m/\eta)^{-A}/(2m)$ with $A$ a fixed constant exceeding
	the polynomial degrees in \Cref{lem:stable}. By induction the working vector
	obeys $0\le\widehat\mu_B\le\mu_B^{(\phi)}$ entrywise (convolution of
	dominated nonnegative vectors is dominated, and truncation preserves this), so
	each truncation discards at most the \emph{true} out-of-window mass
	$\le\eps_{\mathrm{tr}}$; and since
	$\|a*b-\widehat a*\widehat b\|_1\le\|a-\widehat a\|_1+\|b-\widehat b\|_1$ for
	sub-probability vectors, the $\ell_1$ deficits add over the $2m-1$ nodes:
	$\|\widehat\mu_{\mathrm{root}}-\mu^{(\phi)}\|_1\le(2m-1)\eps_{\mathrm{tr}}
		\le(m/\eta)^{-A}$. This is an additive one-sided (mass-deficit) perturbation,
	of size $\le(m/\eta)^{-A}$ in every norm, and it is absorbed by
	\Cref{lem:stable} upon enlarging $A$: the root window contains the $K\sigma$
	assembly window of \Cref{lem:stable} once $A$ is large enough; pointwise, the
	contributing saddle entries have size $\poly^{-1}\,\Theta(\eta/\sqrt m)$, so
	the deficit is a further $\poly^{-1}$ relative input error; and in the
	cumulative sums of the count chain the analytic reweighting
	$e^{|\phi|w}$ of \eqref{eq:logrecover} is increasing in $w$, so the
	discarded tail below a contributing level $t$ contributes at most
	$\eps_{\mathrm{tr}}\,e^{|\phi|t}$ against a retained sum
	$\ge e^{|\phi|t}\,\poly^{-1}\,\Theta(\eta/\sqrt m)$ --- again relative
	$\poly^{-1}$. (The mass chain enters the assembly only pointwise.)

	\emph{The $d$-dimensional merge.} The finite-rank engine of
	\Cref{sec:exact} mirrors this construction with $\Zb^d$ in place of the
	grid: per-coordinate supports add along each axis, so the nodes of any one
	round jointly hold $O(m^d)$ cells, each $d$-dimensional FFT is sized to
	the actual support box of its output, and the merge costs
	$\widetilde O(m^d)$ in total; the truncation and domination arguments
	above apply per axis unchanged.

	\emph{Cost.} With $L_\eps=O(\ln(m/\eta))$, node $B$ holds
	$n_B=O\big((\sqrt{V_BL_\eps}+L_\eps)/\eta\big)$ points; over the $2^j$
	nodes of level $j$, Cauchy--Schwarz gives
	$\sum_B\sqrt{V_B}\le\sqrt{2^j\sum_B V_B}=O\big(\sqrt{2^jB_{\mathrm{in}}}\big)$,
	so the level costs
	$\widetilde O\big((\sqrt{2^jB_{\mathrm{in}}}+2^jL_\eps)/\eta\big)$,
	geometric in $j$
	and dominated by the leaf level:
	$\widetilde O\big((\sqrt{mB_{\mathrm{in}}}+m)/\eta\big)$ in total ---
	$\widetilde O(m/\eta)$ for bounded alphabets --- with the
	alphabet sizes entering only through $B_{\mathrm{in}}$. The depth-first traversal
	holds one buffer per level, for $\widetilde O(\sqrt{B_{\mathrm{in}}}/\eta)$
	space in total, and the final cumulative sum is linear in the root support
	$\widetilde O(\sqrt{B_{\mathrm{in}}}/\eta)$.
\end{proof}

\begin{proof}[Proof of \Cref{lem:stable}]
	We prove the floating-point bound and the absence of underflow, arguing
	for Algorithm~2 (binned, two chains); Algorithm~1 is the special case
	with the single-chain collapse \eqref{eq:pinu}, and it is strictly better
	conditioned.
	Throughout, $L=\widetilde O(\sqrt{B_{\mathrm{in}}}/\eta)$ is the truncated
	transform length
	(\Cref{lem:fftcost}\,(ii)), and $\poly$ abbreviates a fixed polynomial in
	$m,1/\eta$.

	\emph{Per-FFT error.} For a radix-2 FFT of length $L$ whose weights carry
	normwise error $\mu=O(\umach)$ (hypothesis (24.2) of \cite{higham_accuracy_2002};
	standard libraries satisfy this), \citeauthor{higham_accuracy_2002}'s bound~\cite[Thm.~24.2]{higham_accuracy_2002}
	gives normwise relative error $O(\umach\log L)$, and the same for the
	inverse transform; a convolution
	$a*b=\mathrm{IFFT}(\mathrm{FFT}(a)\odot\mathrm{FFT}(b))$ inherits
	$O(\umach\log L)$.

	\emph{Conditioning of convolution.} For probability vectors $a,b\ge0$ with
	$\|a\|_1=\|b\|_1=1$, Young's inequality gives $\|a*b\|_2\le\|b\|_2$ and
	$\|a*b\|_1=1$: convolution is non-amplifying. When input errors are
	propagated, the FFT bound is normwise in $\ell_2$; the conversion
	$\|\delta a\|_1\le\sqrt L\|\delta a\|_2$ introduces an $O(\sqrt L)$ factor
	per step, absorbed into $\poly$.

	\emph{Merge-tree composition.} The $m$ leaves combine by $m-1$ convolutions in a
	balanced tree of depth $\lceil\log_2 m\rceil$. Relative errors do
	\emph{not} merely compose along one root-to-leaf path: at every parent the
	errors of both children add --- for $c=a*b$ with $\tilde a=a(1+\alpha)$,
	$\tilde b=b(1+\beta)$, entrywise $|\alpha|,|\beta|\le E$, one gets
	$E_{\mathrm{parent}}\le E_a+E_b+\eps_{\mathrm{node}}$, and every quantity
	in the tree is a non-negative count or mass, so no cancellation can be
	appealed to --- and the root accumulates a contribution from each of the
	$m-1$ internal nodes. A node's relative contribution at the root is scaled
	by $\|\nu_{\mathrm{node}}\|_2/\|\nu_{\mathrm{root}}\|_2$, which is largest
	at the leaves ($\approx\sqrt L$); summing over the levels gives a
	geometric series dominated by the leaf level, so
	\[
		\|\mathrm{fl}(\nu^{(\phi)})-\nu^{(\phi)}\|_2
		\le\eps_{\mathrm{tree}}\,\|\nu^{(\phi)}\|_2,
	\]
	with $\eps_{\mathrm{tree}}=O(\umach\,m\,\sqrt L\log L)=O(\umach\poly)$.
	The node-count factor $m$ is essentially
	tight: at fixed root support the measured growth exponent in $m$ is
	$0.96$, while the $\sqrt L$ amplification is the conservative part.
	Measured end to end, the root error is $3.3\times10^{-15}$ at $m=512$ in
	double precision --- negligible, but the budget is $m/\log m$
	(${\approx}32\times$ at $m=256$) larger than a depth-only count would
	suggest, and it is this corrected value that enters $\poly$. The same
	leaf-dominated accumulation, over the same $\Theta(m)$ internal nodes,
	covers the $d$-dimensional merge of the finite-rank engine
	(\Cref{sec:exact}).

	\emph{Normwise $\to$ entrywise on the saddle entries.} The tilted law
	$\nu^{(\phi)}$ is concentrated: its peak is
	$\|\nu^{(\phi)}\|_\infty=\Theta(\eta/\sqrt m)$, so that
	$\|\nu^{(\phi)}\|_2\le\Theta(\eta^{1/2}m^{-1/4})$. Entries feeding the assembly
	(within $K\sigma$ of the center, tolerance $\poly^{-1}$) have size
	$g\ge\poly^{-1}\|\nu^{(\phi)}\|_\infty$, so their relative error is
	$\eps_{\mathrm{tree}}\|\nu^{(\phi)}\|_2/g=O(\umach\poly)$. This is the one
	nonstandard step: it works precisely because the saddle entries are within a
	$\poly$ factor of the $\ell_2$ norm.

	\emph{Log-domain assembly.} By \eqref{eq:logrecover} we assemble
	$\ln\tF(t)=\Lam(\phi)+\LSE_{w\le t}[-\phi w+\ln\nu^{(\phi)}(w)]$. The
	inputs carry: $-\phi w$, absolute error $O(m\umach)$; $\ln\nu^{(\phi)}(w)$, absolute
	error $O(\umach\poly)$; $\Lam(\phi)$, absolute error $O(m\umach)$. Since
	$\LSE$ is $1$-Lipschitz in $\ell_\infty$, $\ln\tF(t)$ has absolute error
	$O(\umach\poly)$, and likewise $\ln\tpi(t)$. One further $1$-Lipschitz $\LSE$
	gives $f(\rho)$ with absolute error $O(\umach\poly)$.

	\emph{Final ratio.} Write
    $\widehat f_\eta(\rho)=f_\eta(\rho)+e_\rho$ with
    $|e_\rho|=O(\umach\poly)$. Since
    $s_\eta=f_\eta(1)/f_\eta(\half)$,
    $\widehat s-s_\eta = \frac{e_1-s_\eta e_{1/2}}{f_\eta(\half)+e_{1/2}}$.
    Provided $|e_{1/2}|\le f_\eta(\half)/2$, this gives
    $|\widehat s-s_\eta| = O\left( \frac{\umach\poly\,(1+s_\eta)}{f_\eta(\half)} \right)$.    For families with $f_\eta(\half)=\Theta(m)$ and $s_\eta=O(1)$, this
    simplifies to $O(\umach\poly/m)$.

	\emph{No underflow.} Every materialized float is either (i) an entry
	$\nu^{(\phi)}(w)\in[0,1]$ of a tilted probability vector, whose smallest
	contributing value is $\ge\poly^{-1}\|\nu^{(\phi)}\|_\infty$, i.e.\ of size
	$\Theta((\eta/\sqrt m)\,\poly^{-1})$,
	far above the IEEE double underflow threshold for any practical $m,1/\eta$; or
	(ii) a log-domain quantity of magnitude $O(m)$. The rare-event magnitudes
	$e^{-\Theta(m)}$ enter only through their logarithms, via the analytic factor
	$-\phi w+\Lam(\phi)$, and are never exponentiated.
\end{proof}

\paragraph{The explicit chain, and the floor that binds.}
Every step above has explicit constants available: Higham's per-FFT bound,
the corrected leaf-dominated tree composition, the
$\ell_2\to$entrywise conversion via the concentration estimate
$\|\nu^{(\phi)}\|_\infty=\Theta(\eta/\sqrt m)$, the $1$-Lipschitz log-domain
assembly (including the \texttt{expm1}/\texttt{log1p} evaluation of the
level averages, \Cref{app:levelavg}), and the final division by
$f_\eta(\half)=\Theta(m)$. Working from the measured root error
${\approx}10^{-14}$: the induced error in $s$ is
${\approx}(1+s)\cdot10^{-14}/f(\half)\approx2\times10^{-16}$ at $m=256$; it
grows as $\eta$ shrinks (at most like $\sqrt L\propto\eta^{-1/2}$) and
crosses the binning error $C(s)\eta$ at $\eta_{\min}\sim10^{-11}$, i.e.\ a
\emph{floating-point floor} $\delta_{\min}\sim10^{-10}$ in IEEE double. For
heterogeneous advice this --- not the atom spacing
($\sim10^{-120}$ at $m=256$) and not the local CLT's $e^{-\Theta(m)}$ ---
is the floor that operates; for low-rank advice the atom spacing
($\Theta(m^{1-k})$) binds first and floating point is second
(\Cref{rmk:certified}). Both sit far below any target of practical
interest, so the honest reading of \enquote{computable in IEEE double precision} is:
the stability analysis supports targets down to $\sim10^{-10}$ in the tested regime; rigorous end-to-end certification requires the interval route at higher working precision, with bit complexity scaling accordingly.
One
regime deserves a check when implementing: high-confidence advice drives
$f(\half)\to0$ and amplifies the induced error $(1+s)\eps/f(\half)$ ---
exactly where the tilting works hardest; the certificate reports the
amplification automatically, since it divides by the computed
$f_\eta(\half)$.

\end{document}